\newtheorem*{lemma*}{Lemma}
\newtheorem{theorem}{Theorem}[section]
\newtheorem{lemma}[theorem]{Lemma}
\newtheorem{proposition}[theorem]{Proposition}
\theoremstyle{definition}
\newtheorem{definition}[theorem]{Definition}
\theoremstyle{remark}
\newtheorem{remark}[theorem]{Remark}
\theoremstyle{definition}
\newtheorem{notation}[theorem]{Notation}
\DeclareMathOperator{\R}{\mathbb{R}}
\DeclareMathOperator{\C}{\mathbb{C}}
\DeclareMathOperator{\N}{\mathbb{N}}
\DeclareMathOperator{\U}{\mathbb{U}}
\DeclareMathOperator{\Ham}{\mathbb{H}}
\DeclareMathOperator{\Hcal}{\mathcal{H}}
\DeclareMathOperator{\Vcal}{\mathcal{V}}
\DeclareMathOperator{\Acal}{\mathcal{A}}
\DeclareMathOperator{\Fcal}{\mathcal{F}}
\DeclareMathOperator{\Ncal}{\mathcal{N}}
\DeclareMathOperator{\Gcal}{\mathcal{G}}
\DeclareMathOperator{\Scal}{\mathcal{S}}
\DeclareMathOperator{\dg}{d\Gamma}
\DeclareMathOperator{\tr}{Tr}
\DeclareMathOperator{\hl}{\mathit{h}_{\lambda}}
\DeclareMathOperator{\Hl}{\mathbb{H}_{\lambda}}
\DeclareMathOperator{\Ul}{\mathbb{U}_{\lambda}}
\DeclareMathOperator{\xil}{\xi_{\lambda}}
\DeclareMathOperator{\Span}{\mathrm{span}}
\renewcommand{\d}{ \, \mathrm{d}}
\numberwithin{equation}{section}
\begin{document}

\title{Renormalization of bosonic quadratic Hamiltonians involving rank-one perturbations}
\author{Thomas Gamet\thanks{Ecole Normale Supérieure de Lyon, UMPA (UMR 5669), Lyon, France}~\thanks{thomas.gamet@ens-lyon.fr}}
\date{}
\maketitle
\begin{abstract}
    We study the renormalization of a bosonic quadratic Hamiltonian with an ultraviolet divergence. The Hamiltonian is composed of the sum of a free part and the square of the smeared field operator. We explicitly diagonalize the Hamiltonian via Bogoliubov transformations, thus simplifying its definition as a self-adjoint operator. Depending on the field operator's smearing, we discuss different renormalizations, either of the energy alone, or the energy and coupling constant together. 
\end{abstract}

\tableofcontents

\section{Introduction}

\subsection{General overview}

We study a particular class of formal quadratic bosonic Hamiltonians depending on a parameter -- the regularity of the interaction -- which are the sum of a free Hamiltonian and the square of a smeared field operator. In Quantum Field Theories (QFT), quadratic Hamiltonians are used to describe a free bosonic field interacting with an external field; N. Bogoliubov notably used them in his theory of interacting Bose gas \cite{bogoliubov1947theory}. The study of these models often leads to divergence problems. In practice, this means that if we naively compute the energy or scattering amplitude, these physical quantities may appear to be infinite. Mathematically, the difficulty is to determine a self-adjoint Hamiltonian, along with its domain. In our specific case, we deal with an ultraviolet divergence, i.e. the divergence comes from the contributions of local singularities. In order to compute physically relevant quantities, the theory must be renormalized, that is divergences must be compensated. Here, we renormalize both the energy and the coupling constant (or charge). More specifically, we want to study the formal normal-ordered Hamiltonian given by
\begin{equation}\label{equation_normal_formal_hamiltonian}
    \dg(\omega) + \lambda {:}\big(a^*(f) + a(f)\big)^2{:}
\end{equation}
where $a^*(f)$ and $a(f)$ are respectively the creation and annihilation operators, and $\dg(\omega)$ the second quantification of $\omega$. To study this Hamiltonian, we use Bogoliubov transformations, which are unitary transformations that preserve the Canonical Commutation Relations (CCR). In some cases, quadratic Hamiltonians can be diagonalized by Bogoliubov transformations, and the diagonalized Hamiltonian is that of free (i.e. non interacting) quasi-particles. In our case, the model is quite simple, which allows us to explicitely compute the diagonalized Hamiltonian. Depending on the regularity of $f$, we show that there are four cases:
\begin{itemize}
    \item the formal Hamiltonian (\ref{equation_normal_formal_hamiltonian}) directly defines a self-adjoint operator;
    \item the formal Hamiltonian (\ref{equation_normal_formal_hamiltonian}) does not directly define a self-adjoint operator, but the diagonalized one does without any renormalization;
    \item the formal Hamiltonian (\ref{equation_normal_formal_hamiltonian}) does not directly define a self-adjoint operator, but the diagonalized one does after a renormalization of the energy;
    \item the formal Hamiltonian (\ref{equation_normal_formal_hamiltonian}) does not directly define a self-adjoint operator, but the diagonalized one does after a renormalization of both the energy and the coupling constant $\lambda$.
\end{itemize}
In the latter case, we will show that there are no Bogoliubov transformations between the formal Hamiltonian and the renormalized one, meaning  that the models are not equivalent. Our model is quite simple, therefore we can understand difficulties of the renormalization that may be general. In particular, in this model a renormalization of the charge occurs.~\\

Let us now compare our model and results to the literature. First, note that our model is  a special case of the general quadratic Hamiltonian diagonalized in \cite{nam_diagonalization_2016}. The ultraviolet renormalization of quadratic Hamiltonians has been notably studied for $\omega(k) = \sqrt{k^2 + m^2}$, that is, for the kinetic energy of a massive relativistic particle in \cite{guenin_mass_1968,ginibre_renormalization_1970}. In both papers, the authors show that the addition of a specific quadratic perturbation to the free energy $\dg(\omega)$ corresponds to a renormalization of the mass. In particular, Bogoliubov transformations are used in \cite{ginibre_renormalization_1970} to diagonalize the Hamiltonian. More recently and in a more general setting, the definition and self-adjointness of bosonic quadratic Hamiltonians have been studied in \cite{bruneau_bogoliubov_2007, derezinski_bosonic_2017}. In these papers, the Hamiltonians are defined as generators of Bogoliubov transformations, and correspond to different quantizations of the same classical quadratic Hamiltonian. A renormalization of the energy -- but not the charge -- is described. The Van Hove Hamiltonian \cite{van_hove_difficultes_1952} is the linear counterpart of our model. It is composed of the free energy, i.e. the second quantization of the one-body operator and of a term which is linear in the creation and annihilation operators. It can be written as
\begin{equation}
    \mathbb{H}_{\mathrm{VH}} = \dg(\omega) + a(f) + a^*(f).
\end{equation}
The renormalization of this Hamiltonian has been studied in \cite{derezinski_van_2003}. In this case, a dressing operator is used to diagonalize the Hamiltonian:
\begin{equation}\label{equation_van_hove_renormalized}
    \dg(\omega) + a(f) + a^*(f) + \|\omega^{-1/2}f\|^2 = U\dg(\omega)U^*.
\end{equation}
As long as $\omega^{-1/2}f$ is well defined, we can make sense of the Hamiltonian. Otherwise, $\mathbb{H}_{\mathrm{VH}}$ is not defined, but as long as $\omega^{-1}f$ is well defined, $U$ exists and the renormalized Hamiltonian in (\ref{equation_van_hove_renormalized}) is well defined thanks to the diagonalized expression. In the case above, there is a renormalization of the energy of a Hamiltonian with a linear perturbation, while in ours, there is a renormalization of both the energy and the charge of a Hamiltonian with a quadratic perturbation. A detailed study of the domain of the Hamiltonian in a special case may be found in \cite{lampart_particle_2018}. The Van Hove Hamiltonian, as well as our model, are simpler models that allow to study separately different parts of the Pauli Fierz Hamiltonian, which describes the interaction between an electron and a field of photons with minimal coupling \cite{pauli_zur_1938}. It can be written as
\begin{equation}
    \mathbb{H}_{\mathrm{PF}} = \dg(\omega) + V_{f, \mathrm{Coulomb}} + (p - \lambda A_f(x))^2,
\end{equation}
where $A_f$ is a field operator with form factor $f$. The ultraviolet renormalization of the Pauli-Fierz Hamiltonian is not fully understood yet (see Section 19.3 of \cite{spohn_dynamics_2004} for a perturbative study), and simpler models may help understand it. When expanding $(p - \lambda A_f(x))^2$, a linear and a quadratic term in the field operator appear. Keeping only the linear term corresponds to the Van Hove Hamiltonian, and our Hamiltonian (\ref{equation_normal_formal_hamiltonian}) corresponds to the quadratic part. The Pauli-Fierz Hamiltonian is one of several polaron models, whose study is an active field of research, see for instance \cite{schmidt_massless_2021, dam_absence_2022, alvarez_ultraviolet_2023, falconi_bogoliubov_2023,  lill_self-adjointness_2023, hinrichs_ultraviolet_2025}.\\

We mainly use three different methods inspired by the literature. First, we diagonalize the Hamiltonian in the more regular case, which is a particular case of the diagonalization of a general quadratic Hamiltonian realized in \cite{nam_diagonalization_2016} (previous works on the subject include \cite{friedrichs_mathematical_1953, lewin_bogoliubov_2015, kato_friedrichs-berezin_1967}). Due to the special form of our Hamiltonian, following ideas of \cite{grech_excitation_2013}, we are able to give a more explicit diagonalization of the Hamiltonian. Then, to renormalize the Hamiltonian, we use the renormalization of the one-particle operator, which can be found in \cite{kiselev_rank_1995, derezinski_renormalization_2002}. Finally, we use integral formulas several times to study roots of operators. These formulas are the generalization of an equality found in \cite{ravn_christiansen_random_2021}, following \cite{benedikter_optimal_2020}.

\subsection{Mathematical setting and definition of the Hamiltonian}
\begin{notation}
Let $\Hcal$ be a  complex separable Hilbert space. We denote by $\langle\cdot|\cdot\rangle$ the Hermitian product of $\Hcal$, which is linear in its second variable. We denote by $\Hcal^*$ the topological dual space of $\Hcal$, that is the set of continuous linear forms on $\Hcal$, and we set $J : |\psi\rangle \mapsto \langle \psi|$ the canonical isometry between $\Hcal$ and $\Hcal^*$. Note that $J$ is antilinear and satisfies $J^* = J^{-1}$. If $\omega$ is a self-adjoint operator on $\Hcal$, we write $\mathds{1}_{\omega \leq \alpha}$ the orthogonal projector on the space on which $\omega \leq \alpha$. Moreover, we denote the resolvent set of $\omega$ by $\rho(\omega)$, and the resolvent by $R_z(\omega) = (\omega - z)^{-1}$ for $z\in \rho(\omega)$. We want to consider vectors that do not lie in $\Hcal$.  We assume $\omega \geq1$ and $D(\omega) \varsubsetneq \Hcal$, where $D(\omega)$ is the domain of $\omega$. Let $\|\psi\|_s = \|\omega^{-s}\psi\|$ and $D(\omega^{-s})$ be the completion of $\big(\Hcal, \|\cdot\|_s\big)$, for all $\psi \in D(\omega^{-s})$, we have $\omega^{-s} \psi \in \Hcal$. In the following, we will write equivalently $\omega^{-s}\psi \in \Hcal$ or $\psi \in D(\omega^{-s})$. 

For instance, if $\Hcal = L^2(\R)$, $\omega$ is multiplication operator such that $\omega = \sqrt{1+|k|^2}$ and $f : k\mapsto |k|^{\alpha}$ with $\alpha > -1/2$, we have $\omega^{-s} f \in \Hcal$ if and only if $2\alpha - 2s < - 1$. More specifically, if $f(k) = 1$, which corresponds to the Dirac distribution in the Fourier representation, we have $\omega^{-s}f \in \Hcal$ (that is $(\sqrt{1-\Delta})^{-s} \delta \in L^2$) if and only if $s > 1/2$. For a more detailed presentation of scales of Hilbert spaces generated by a self-adjoint operator, see for instance Section 1.2.2 of \cite{albeverio_singular_2000}.
\end{notation}
We introduce the bosonic Fock space $\Fcal$, allowing us to consider states with a variable number of particles:
\begin{equation}
    \Fcal = \bigoplus_{n \in \N} \Hcal^{\otimes_s n}
\end{equation}
where $\Hcal^{\otimes_s n}$ is the $n$-particles space with the convention $\Hcal^{\otimes_s 0} = \C$. Every vector $\Psi$ of $\Fcal$ can be written as $\Psi = (\psi_n)_{n\in \N}$ with $\psi_n \in \Hcal^{\otimes_s n}$. We define the vacuum $\Omega$ by $\Omega = (\delta_{n,0})_{n \in \N}$. If $\omega$ is a positive operator on $\Hcal$ with domain $D(\omega)$, we set 
\begin{equation}
    \d\Gamma(\omega) = \bigoplus_{n \in \N} \left(\sum_{i=1}^n \omega_i\right)= 0 \oplus \omega \oplus (\omega \otimes 1 + 1 \otimes \omega) \oplus ...
\end{equation}
on $\bigcup_{N\in \mathbb{N}}\bigoplus_{n = 1}^N D(\omega)^{\otimes_s n}$, and we call $\d\Gamma(\omega)$ the second-quantization of $\omega$. The operator $\dg(\omega)$ is essentially self-adjoint and we still denote by $\d \Gamma(\omega)$ its Friedrichs' extension. If $\omega$ is the Hamiltonian of a particle, then $\d\Gamma(\omega)$ is the Hamiltonian of a system of free particles, with a variable number of particles. In particular, the number operator is $\Ncal = \d\Gamma(1)$. For a unitary operator $U$ on $\Hcal$, we set
\begin{equation}
    \Gamma(U) = \bigoplus_{n\in \N}\bigotimes_{i=1}^n U_i = 1 \oplus U\oplus(U\otimes U) \oplus...
\end{equation}
which is unitary on $\Fcal$. Let $f\in \Hcal$ and $u_1,...,u_n  \in \Hcal$, we define respectively the creation and annihilation operators by\begin{equation}
    \begin{cases}a^*(f) \big(u_1 \otimes_s...\otimes_s u_n\big) = \sqrt{n+1}\big(f\otimes_s \otimes u_1...\otimes_s u_n\big)\\
    a(f) \big(u_1\otimes_s...\otimes_s u_n\big)  = \frac{1}{\sqrt{n}} \sum_{k = 1}^n \langle f | u_k\rangle \big(u_1\otimes_s ...\cancel{u_k}...\otimes u_n\big).\end{cases}
\end{equation}
By linearity and density, these definitions can be extended to $D(\Ncal^{1/2})$. As the notation suggests, for $\Psi, \Phi \in D(\Ncal^{1/2})$, $\langle \Phi| a(f)\Psi\rangle = \langle a^*(f)\Phi|\Psi\rangle$. The creation and annihilation operators satisfy the Canonical Commutation Relations (CCR) on $D(\Ncal)$:
\begin{equation}
    \begin{cases}
        [a(f), a^*(g)] = \langle f|g\rangle \\
        [a(f), a(g)] = 0\\
        [a^*(f), a^*(g)] = 0.
    \end{cases}
\end{equation}
They also satisfy the following inequality (see for instance Proposition 2.1 of \cite{bruneau_bogoliubov_2007}):
\begin{equation}\label{equation_inegalite_creation_annihilation}
    \forall \Psi \in \Fcal,~\forall s\in \mathbb{R}_+,~ \forall f\in D(\omega^{-s/2}),~\|a(f)\Psi\|^2 \leq \|\omega^{-s/2}f\|^2 \langle \Psi|\dg(\omega^s) \Psi\rangle 
\end{equation}
A polynomial in the creation and annihilation operators is said to be \enquote{normal ordered} if all creation operators are to the left of annihilation operators. The normal ordering of such an expression is a normal ordered expression involving the same creation and annihilation operators, and is written as
\begin{equation}
    {:}\big(a^*(f)+a(f)\big)^2{:}~=~{:}\big(a(f)^2 + a^*(f)^2 + a^*(f)a(f) + a(f) a^*(f)\big){:}~=~a(f)^2 +a^*(f)^2 + 2 a^*(f)a(f).
\end{equation}

We call \enquote{Bogoliubov transformation} a unitary operator $\U$ on $\Fcal$ such that there exist $U:\Hcal \to \Hcal$ and $V : \Hcal^* \to \Hcal$, two bounded operators satisfying for all $g \in \Hcal$ 
    \begin{equation}\label{equation_definition_transformation_bogoliubov}
        \begin{cases}
            \U^* a^*(g) \U = a^*(Ug) + a(VJg)\\
        \U a^*(g) \U^* = a^*(U^*g) - a(J^*V^*g).
        \end{cases}
    \end{equation}
As proved e.g. in Lemma 4.4 of \cite{beyond_bogoliubov}, the states $\U\Omega$ and $\U^* \Omega$ have a finite number of particles expectation, i.e.
    \begin{equation}
        \big\langle \U \Omega | \Ncal \U \Omega \big\rangle < + \infty,~~~~~~\big\langle \U^* \Omega | \Ncal \U^* \Omega \big\rangle < + \infty.
    \end{equation}
If $\U$ is a Bogoliubov transformation associated with $U$ and $V$, for all $z\in \C$ of modulus 1, $z\U$ is also a Bogoliubov transformation associated with $U$ and $V$. This means that there is not a unique Bogoliubov transformation associated with $U$ and $V$, and that there always is a choice of phase. Let $\Gcal$ be the subset of bounded operators on $\Hcal \oplus \Hcal^*$ defined by
\begin{equation}
    \Gcal = \left\{\Vcal = \begin{pmatrix}
        U & V\\
        JVJ & JUJ^*
    \end{pmatrix},~ \Vcal^* \Scal \Vcal = \Vcal \Scal \Vcal^* = \Scal = \begin{pmatrix}
        1 & 0\\
        0 & -1
    \end{pmatrix}, ~\tr\big(VV^*\big)<+\infty  \right\}.
\end{equation}
It follows directly from the definition that $\Vcal \in \Gcal \Leftrightarrow \Vcal^* \in \Gcal$. The equation $\Vcal^* \Scal \Vcal = \Vcal \Scal \Vcal^* = \Scal$ is a symplectic condition, and the inequality $\tr\big(VV^*\big)<+\infty$ is called Shale's condition. The following Proposition gives a condition for the existence of Bogoliubov transformations (see for instance Section 1.2. of \cite{nam_diagonalization_2016}). Note that there is no canonical choice of such a Bogoliubov transformation.

\begin{proposition}[Condition for the existence of Bogoliubov transformations]\label{proposition_existence_transformation_bogolioubov}
    The linear operators $U :\Hcal \to \Hcal$ and $V:\Hcal^* \to \Hcal$ are implemented by a Bogoliubov transformation $\U$ on $\Fcal$ if and only if 
    \begin{equation}
        \Vcal := \begin{pmatrix}
            U & V\\
            JVJ & JUJ^*
        \end{pmatrix} \in \Gcal.
    \end{equation}
\end{proposition}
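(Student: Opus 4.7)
The plan is to split the equivalence into the two implications. The necessity direction is essentially algebraic, while the sufficiency direction (reconstructing $\U$ from $U,V$) is the analytically substantial one.

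For necessity, assume that $\U$ is a Bogoliubov transformation associated with $U,V$. I would extract the symplectic condition by pushing the CCR through $\U$. Writing $\U^* a^*(g) \U = a^*(Ug) + a(VJg)$ and the analogous formula for the annihilation operator obtained by taking the adjoint, I would then compute the three commutators $[\U^* a(f)\U,\U^* a^*(g)\U]$, $[\U^* a(f)\U,\U^* a(g)\U]$ and $[\U^* a^*(f)\U,\U^* a^*(g)\U]$ using the CCR. Since $\U$ is unitary, these must still equal $\langle f|g\rangle$, $0$ and $0$ respectively; expanding in terms of $U,V$ and collecting yields exactly $U^*U - (JVJ)^*(JVJ) = 1$, $U^*V - (JVJ)^* (JUJ^*) = 0$, etc., which together are precisely $\Vcal^* \Scal \Vcal = \Scal$. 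Doing the same computation with $\U$ and $\U^*$ swapped gives $\Vcal\Scal\Vcal^* = \Scal$. For Shale's condition, I would use the cited fact that $\langle \U\Omega|\Ncal\U\Omega\rangle < + \infty$ together with the identity $\U^*\Ncal\U = \sum_i \|(a^*(Ue_i) + a(VJe_i))\Omega\|^2 = \sum_i \|VJe_i\|^2 = \tr(VV^*)$ evaluated on the vacuum in some orthonormal basis $(e_i)$ of $\Hcal$, which gives finiteness of $\tr(VV^*)$.

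For sufficiency, suppose $\Vcal \in \Gcal$. Here I would follow the now-standard construction: Shale's condition $\tr(VV^*)<+\infty$ means $V$ is Hilbert--Schmidt, so one can diagonalize $J^*V$ by an Autonne--Takagi type decomposition $J^*V = W_1 \Sigma W_2^T$ with $\Sigma$ diagonal positive (the singular values being square-summable). The symplectic relations force that the singular values of $V$ and the corresponding ones of $U-1$ are linked by $\cosh^2 - \sinh^2 = 1$, so one can write $U = W_1 (\cosh\tau) W_3$ and $V = W_1 (\sinh\tau) J W_3$ in a compatible way. I would then define $\U$ as a product $\U := \Gamma(W_1) \,\exp(Q_\tau)\, \Gamma(W_3)$, where $Q_\tau$ is the antisymmetric quadratic operator $\frac{1}{2}\sum_i \tau_i\big(a^*(e_i)^2 - a(e_i)^2\big)$ written in the eigenbasis of $\tau$. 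Convergence of the exponential in the strong sense on the dense domain of finite-particle vectors in $D(\Ncal^k)$ relies exactly on $\sum \tau_i^2 < +\infty$, i.e.\ Shale. A direct computation using $[a(e_i), Q_\tau] = \tau_i a^*(e_i)$ and the Baker--Campbell--Hausdorff formula (which closes because the commutator is linear in $a,a^*$) gives $\exp(-Q_\tau) a^*(g) \exp(Q_\tau) = a^*((\cosh\tau) g) + a((\sinh\tau) Jg)$, and sandwiching with $\Gamma(W_1),\Gamma(W_3)$ produces the required intertwining relations with the prescribed $U$ and $V$.

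The main obstacle is the sufficiency direction, and specifically two technical points: first, giving rigorous sense to $\exp(Q_\tau)$ and showing it defines a unitary on $\Fcal$, which requires working on $D(\Ncal^{1/2})$ and verifying that the series of commutators obtained via Duhamel stays bounded thanks to $\sum \tau_i^2 < + \infty$; second, checking that the diagonalization of $V$ compatible with $U$ actually exists, which uses the symplectic identities to show that $U^*U$ and $V^*V$ commute and share an eigenbasis of $\tau$. Once these are established, the two intertwining relations in the definition of a Bogoliubov transformation follow from the single one by taking adjoints and using unitarity of $\U$.
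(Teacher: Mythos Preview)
The paper does not prove this proposition at all: it is stated with a reference to Section~1.2 of \cite{nam_diagonalization_2016} and taken as known. So there is no ``paper's own proof'' to compare against; you have supplied a proof where the authors simply quote the literature.

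That said, your outline is the standard argument one finds in the cited references, and it is essentially sound. Two small points are worth tightening. First, in the necessity direction your appeal to the finiteness of $\langle \U\Omega|\Ncal\U\Omega\rangle$ is legitimate since the paper states this as a property of any Bogoliubov transformation (citing \cite{beyond_bogoliubov}), but your displayed identity ``$\U^*\Ncal\U = \sum_i\|\cdots\|^2$'' conflates an operator with a scalar; what you mean is the vacuum expectation $\langle\Omega|\U^*\Ncal\U\Omega\rangle = \sum_i\|VJe_i\|^2 = \tr(VV^*)$. Second, in the sufficiency direction the factorization $U = W_1(\cosh\tau)W_3$, $V = W_1(\sinh\tau)JW_3$ does not follow from a bare singular-value decomposition of $V$ alone; one must use the symplectic relations to show that $U^*U$ and $J^*V^*VJ$ commute (they differ by the identity) and hence admit a joint spectral decomposition, and then check that the phases can be absorbed consistently into $W_1,W_3$. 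You flag this as a technical point, which is fair, but it is the place where a careless write-up would go wrong.
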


\subsection{Preliminary discussion}

Let us now briefly discuss some cases where the Hamiltonian formally given by (\ref{equation_normal_formal_hamiltonian}) is clearly well defined. We recall the formal Hamiltonian :
\begin{equation}\label{equation_definition_hamiltonien_formel}
    \mathbb{H}_{\lambda}^{\mathrm{f}} := \d\Gamma(\omega) + \lambda~{:}\big(a^*(f) + a(f)\big)^2{:} = \d\Gamma(\hl) + \lambda\big(a^*(f)^2 + a(f)^2\big),
\end{equation}
where $\hl = \omega + 2\lambda|f\rangle\langle f|$. The operator $\omega$ is the self-adjoint Hamiltonian of the one-particle space, that we assume to be unbounded and greater than 1, and $\lambda$ is the (positive) coupling constant.\\ 

First, we want to show that $\mathbb{H}_{\lambda}^{\mathrm{f}}$ is a self-adjoint operator when $f\in \Hcal$. We want to use the Kato-Rellich theorem, which requires us to have a relative bound on the second term of the Hamiltonian. For all $\Psi \in D(\Ncal)$,
\begin{equation}
    \begin{cases}
        \|a(f)^2\Psi\|^2 \leq  \|f\|^4 \langle \Psi|\Ncal(\Ncal-1)_+\Psi\rangle\\
        \|a^*(f)^2\Psi\|^2 \leq \|f\|^4 \langle \Psi|(\Ncal^2 + 2)^2 \Psi\rangle\\
        \|a^*(f)a(f)\Psi\|^2\leq  \|f\|^4 \langle \Psi|\Ncal^2 \Psi\rangle.
    \end{cases}
\end{equation}
Thus, for all $\Psi\in D(\Ncal)$
\begin{equation}\label{equation_inegalite_normal_creation_annihilation_nombre}
    \big\|{:}\big(a^*(f) + a(f)\big)^2{:}~\Psi\| \leq \sqrt6\|f\|^2\|\Ncal\Psi\| + \sqrt 6 \|f\|^2 \|\Psi\| \leq \sqrt6\|f\|^2\|\dg(\omega)\Psi\| + \sqrt 6 \|f\|^2 \|\Psi\|.
\end{equation}
Therefore, when $\lambda\|f\|^2 < 1/\sqrt6$,  $\mathbb{H}_{\lambda}^{\mathrm{f}}$ is a self-adjoint operator on $D(\dg(\omega))$. In general, we can use the commutator theorem (Theorem X.36 of \cite{reed_methods_1975}). Because of (\ref{equation_inegalite_normal_creation_annihilation_nombre}), the operator $\Tilde{\mathbb{H}}_{\lambda}^{\mathrm{f}} := \mathbb{H}_{\lambda}^{\mathrm{f}} + \sqrt 6 \lambda \|f\|^2 (\Ncal+1)$ is self-adjoint on $D(\dg(\omega))$, and for all $\Psi \in D\big(\dg(\omega)^{1/2}\big)$,
\begin{equation}
    \big|\langle \Psi |\big[\Tilde{\mathbb{H}}_{\lambda}^{\mathrm{f}}, \mathbb{H}_{\lambda}^{\mathrm{f}}\big] \Psi\rangle\big| = \sqrt 6 \lambda^2\|f\|^2 \big|\langle \Psi| \big[\Ncal, {:}(a^*(f) + a(f))^2{:} \big]\Psi\rangle \big| \leq c \langle \Psi| (\Ncal+1) \Psi \rangle \leq c \langle \Psi |(\dg(\omega)+1)\Psi\rangle,
\end{equation}
as $\big[\Ncal, {:}(a^*(f) + a(f))^2 \big]$ is quadratic in the creation and annihilation operators. Therefore, $\mathbb{H}_{\lambda}^{\mathrm{f}}$ is essentially self-adjoint on $D\big(\dg(\omega)\big)$. When $f\notin \Hcal$, it is not immediately clear on which domain $\mathbb{H}_{\lambda}^{\mathrm{f}}$ may be defined. ~\\

Besides, for $f\in \Hcal$, $\mathbb{H}_{\lambda}^{\mathrm{f}}$ clearly defines a quadratic form on $D\big(\dg(\omega)^{1/2}\big)$. Note however that $\mathbb{H}_{\lambda}^{\mathrm{f}}$ is not a priori bounded from below for a general $\lambda$. When $\omega^{-1/4}f \in \Hcal$, we have for all $\Psi \in D(\dg(\omega)^{1/2})$
\begin{equation}
    \big|\langle \Psi|a^*(f)^2\Psi\rangle \big| = \big|\langle \Psi| a(f)^2\Psi\rangle\big| \leq \|\Ncal^{1/2}\Psi\|\|\omega^{-1/4}f\|^2\|\Ncal^{-1/2}\dg(\omega^{1/2})\Psi\|.
\end{equation}
As $\dg(\omega^{1/2})^2 \leq \Ncal \dg(\omega)$, the quadratic forms $a(f)^2$ and $a^*(f)^2$ are well defined on the form domain of $\dg(\omega)$.  Therefore, when $\omega^{-1/4}f\in \Hcal$, $\mathbb{H}_{\lambda}^{\mathrm{f}}$ defines a quadratic form on $D(\dg(\omega)^{1/2})$. Because of the KLMN theorem (Theorem X.17 of \cite{reed_methods_1975}), when $4\lambda\|\omega^{-1/4}f\|^2 < 1$, the quadratic form $\mathbb{H}_{\lambda}^{\mathrm{f}}$ on $D(\dg(\omega)^{1/2})$ is bounded from below, and there exists a self-adjoint operator whose quadratic form is $\mathbb{H}_{\lambda}^{\mathrm{f}}$. \\

\textbf{Our goal is now to show that when $f\notin \Hcal$, there may still exist a self-adjoint operator formally defined by $\mathbb{H}_{\lambda}^{\mathrm{f}}$.}

\section{Main results}

Let $\omega$ be a self-adjoint operator on $\Hcal$, satisfying $\omega \geq 1$. As $\omega$ is self-adjoint, we can assume, using the spectral theorem, that $\Hcal = L^2(X, \mu)$ for a certain measure space $(X, \mu)$ and that $\omega$ is a real multiplication operator, i.e.
\begin{equation} 
	\forall f \in \Hcal,~\forall k \in X,~ (\omega f)(k) = \omega(k) f(k)
\end{equation}
	with $\omega(k) \in \R$. Let $\lambda \in \R_+$, we set 
\begin{equation}
	\hl := \omega + 2\lambda|f\rangle\langle f|.
\end{equation}
 For $f\in \Hcal$, $\omega^2 + 4 \lambda \omega^{1/2}|f\rangle\langle f|\omega^{1/2}$ is a positive quadratic form on $D(\omega)$, whose Friedrichs extension is a positive self-adjoint operator, and we call $\xi_{\lambda}$ its square root. When $f$ is regular enough, namely $\ln(\omega)^2f \in \Hcal$, the operator defined by (\ref{equation_definition_hamiltonien_formel}) can be diagonalized, and its diagonalization is of the form $\d \Gamma(\xi_{\lambda}) + C$. Note that this does not require any renormalization.

\begin{proposition}\label{theorem_0}
    Let $f\in \Hcal$ be a real-valued function such that $\ln(\omega)^2f\in \Hcal$. Let $\Hl$ be the self-adjoint extension of the operator defined by $\dg(\omega) + \lambda {:}\big(a^*(f) + a(f)\big)^2{:}$ on $D\big(\dg(\omega)\big)$. Then, there exists a Bogoliubov transformation $\Ul$ and a self-adjoint operator $\xil$ on $\Hcal$ such that $\xi_{\lambda} - h_{\lambda}$ is trace class and
    \begin{equation} \label{equation_hamiltonien_proposition}
        \U_{\lambda}^* \Hl \Ul = \mathrm{d}\Gamma\big(\xi_{\lambda}\big)+ \frac{1}{2}\tr\big(\xi_{\lambda} - h_{\lambda}\big),
    \end{equation}
    where $\xil$ depends explicitly on $\omega$, $f$ and $\lambda$:
    \begin{equation}\label{equation_definition_xil}
        \xil = \Big(\omega^2 + 4\lambda\omega^{1/2}|f\rangle\langle f|\omega^{1/2}\Big)^{1/2}.
    \end{equation}
    In particular, the domain of self-adjointness of $\Hl$ is $\Ul D\big(\dg(\xil)\big)$.
\end{proposition}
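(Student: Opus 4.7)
The plan is to apply the general diagonalization theorem of \cite{nam_diagonalization_2016} for quadratic bosonic Hamiltonians, and to exploit the rank-one structure of the pairing term (following ideas of \cite{grech_excitation_2013}) to obtain an explicit Bogoliubov transformation $\Ul$ and an explicit $\xil$. First I would rewrite
\begin{equation*}
    \Hl = \dg(\hl) + \lambda\big(a^*(f)^2 + a(f)^2\big),
\end{equation*}
so that $\Hl$ appears as a quadratic Hamiltonian with diagonal one-body operator $\hl$ and pairing equal to the rank-one operator $2\lambda|f\rangle\langle f|$ (after identifying $\Hcal^*\simeq\Hcal$ through $J$). The associated symplectic generator on $\Hcal\oplus\Hcal^*$ then differs from that of $\dg(\omega)$ by a term of rank at most two, which is what ultimately makes the problem explicitly solvable.

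To produce $\Ul$ via Proposition~\ref{proposition_existence_transformation_bogolioubov}, I would look for bounded $U$ and $V$ that diagonalize this symplectic generator. An intertwining computation (equivalently, squaring $\Scal\Mcal$) forces the would-be diagonal block to square to $\omega^2 + 4\lambda\omega^{1/2}|f\rangle\langle f|\omega^{1/2}$, which is precisely (\ref{equation_definition_xil}). The rank-one structure then allows one to write $U$ and $V$ in closed form in terms of $\omega$, $f$, $\lambda$ and the resolvent of $\omega$, using standard rank-one perturbation machinery (cf.\ \cite{kiselev_rank_1995, derezinski_renormalization_2002}). One must then verify Shale's condition $\tr(VV^*)<\infty$; the resulting bound involves inner products of $f$ with functions of $\omega$ of at most logarithmic growth, and the hypothesis $\ln(\omega)^2 f\in\Hcal$ is exactly what makes them finite.

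The main obstacle, in my view, will be showing that $\xil - \hl$ is trace class, since that is what makes the scalar term in (\ref{equation_hamiltonien_proposition}) finite. For this I would rely on the integral representation of the square root alluded to in the introduction, namely
\begin{equation*}
    \sqrt{A} - \sqrt{B} = \frac{1}{\pi}\int_0^\infty \frac{1}{\sqrt{t}}\big(R_{-t}(B) - R_{-t}(A)\big)\d t,
\end{equation*}
applied to $A = \xil^2$ and a carefully chosen $B$ close to $\hl^2$ (say $\omega^2$ plus a finite-rank correction). Since $A - B$ is of rank at most two, the resolvent identity reduces the trace to a scalar integral in $t$ with an explicit integrand involving $\|(\omega+t)^{-1/2}f\|^2$, whose convergence at $0$ and $+\infty$ is governed precisely by $\|\ln(\omega)f\|$ and $\|\ln(\omega)^2 f\|$. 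Once this is established, identity (\ref{equation_hamiltonien_proposition}) follows from conjugating the creation and annihilation operators inside $\Hl$ by $\Ul$ and bringing the result back to normal order using the CCR, which produces $\dg(\xil)$ together with a Wick-contraction scalar identified with $\frac{1}{2}\tr(\xil-\hl)$; the domain statement $\Ul D(\dg(\xil))$ is then immediate from unitarity of $\Ul$ and self-adjointness of $\dg(\xil)$ on $D(\dg(\xil))$.
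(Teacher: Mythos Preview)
Your overall strategy matches the paper's: cast $\Hl$ as a quadratic Hamiltonian, invoke Theorem~2 of \cite{nam_diagonalization_2016}, and use the rank-one structure to make $\xil$, $U_\lambda$, $V_\lambda$ explicit via integral formulas for roots of rank-one perturbations. However, you have misplaced where the hypothesis $\ln(\omega)^2 f\in\Hcal$ actually enters, and this hides the real difficulty.

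Shale's condition $\tr(V_\lambda V_\lambda^*)<\infty$, and even the stronger trace-class conditions required by \cite{nam_diagonalization_2016} (namely that $h_\lambda^{1/2}V_\lambda V_\lambda^* h_\lambda^{1/2}$ and $k_\lambda^* V_\lambda J U_\lambda^* J^*$ be trace class), are controlled by $\|f\|^2$ alone; no logarithmic regularity is needed there. Likewise, showing that $\xil-\hl$ is trace class is not the main obstacle: $\xil-\omega$ has an explicit integral formula as a positive operator with trace bounded by $\|f\|^2$, and $\hl-\omega$ is rank one.

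What you are glossing over in the last sentence is the genuine crux. The diagonalization theorem of \cite{nam_diagonalization_2016} does \emph{not} hand you $\tfrac{1}{2}\tr(\xil-\hl)$ as the scalar; it gives
\[
C_\lambda=\tr\big(h_\lambda^{1/2}V_\lambda V_\lambda^* h_\lambda^{1/2}\big)+\Re\tr\big(k_\lambda^* V_\lambda J U_\lambda^* J^*\big),
\]
and the identity $C_\lambda=\tfrac{1}{2}\tr(\xil-\hl)$ is a separate computation. Carrying it out reduces to showing $\tr\big(\xil\omega^{-1}\xil-\omega\big)=\tr\big(\omega^{-1/2}\xil^2\omega^{-1/2}-\omega\big)$, which looks like cyclicity of the trace but is not automatic because neither factor is trace class by itself. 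The relevant integral has an integrand behaving like $t^2\|\omega R_{-t^2}(\omega^2)f\|\,\|R_{-t^2}(\omega^2)f\|$, which naively is $O(1/t)$ and hence logarithmically divergent; the assumption $\ln(\omega)^2 f\in\Hcal$ is precisely what restores integrability. Your proposal should relocate the hypothesis from Shale's condition to this trace identity.
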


The assumption  that $\ln(\omega)^2f \in \Hcal$ will be discussed in detail when it naturally appears in the proof.

\begin{remark}[When $f$ is not real valued]\label{remarque_cas_f_non_reel}
    When $f$ is not real valued, there exists a unitary transformation that maps the problem to a case where the function is real valued and positive. Indeed, if we set $\varphi = f/|f|$ and still denote by $\varphi$ the multiplication by $\varphi$, we have
    \begin{equation}
        \Gamma(\varphi)^*\mathbb{H}_{\lambda}(f)\Gamma(\varphi) = \mathbb{H}_{\lambda}(|f|).
    \end{equation}
    This will also be true for Theorem \ref{theorem1} and \ref{theorem2}.
\end{remark}

\begin{remark}
    It is clear from (\ref{equation_hamiltonien_proposition}) that the quadratic form associated to $\Hl$ is bounded from below for all $\lambda\geq 0$.
\end{remark}

\noindent The proof of Proposition \ref{theorem_0} will be given in Section \ref{section_preuve_proposition}.~\\

The aim of this paper is to use the diagonalized version of the Hamiltonian to extend the definition of $\Hl$ with $f\notin \Hcal$. However, when $f\notin \Hcal$, it is not immediately clear how to define $\xi_{\lambda}$, so we apply the renormalization presented in \cite{derezinski_renormalization_2002} (see also \cite{kiselev_rank_1995}). If $\omega^{1/2}f\in \Hcal$, we denote by $R_z(\lambda, f)$ the resolvent of the operator $\xil^2 = \omega^2 + 4 \lambda|\omega^{1/2}f\rangle \langle \omega^{1/2}f|$ with $z\in \C \setminus \R$. We have
\begin{equation}\label{equation_resolvante}
    R_z(\lambda, f) = \big(\omega^2 - z\big)^{-1} - \frac{4\lambda }{1 + 4\lambda \langle f | \omega(\omega^2 -z)^{-1}f\rangle}\omega^{1/2}\big(\omega^2 - z)^{-1}|f\rangle\langle f|\omega^{1/2}\big(\omega^2 - z)^{-1},
\end{equation}
and we can notice that this expression is still valid when $\omega^{-1/2}f\in \Hcal$. Moreover, the operator defined by (\ref{equation_resolvante}) is the resolvent of a positive self-adjoint operator, which means that we can still define $\xi_{\lambda}^2$ from its resolvent, and $\xi_{\lambda}$ as its positive square root. More detail will be given in Section \ref{ssection_existence_regulier}. \\

\noindent It remains to check wether the constant in (\ref{equation_hamiltonien_proposition}) is still finite when $f\notin \Hcal$. When $\omega^{1/2}f\in \Hcal$, we have 
\begin{equation}
    \tr\big(\xi_{\lambda} - h_{\lambda}\big) = \frac{-32\lambda^2}{\pi} \int_0^{+\infty} \frac{t^2\|R_{-t^2}(\omega^2)\omega^{1/2}f\|^2 \big\langle f\big| R_{-t^2}(\omega^2) \omega f\big\rangle }{1 + 4\lambda\big\langle f | R_{-t^2}(\omega^2) \omega f \big\rangle}\d t,
\end{equation}
and we can see that the right-hand side is still finite when $\omega^{-1/4}f\in \Hcal$ (see Section \ref{subsection_trace}). Indeed, 
\begin{equation}
	\|R_{-t^2}(\omega^2)\omega^{1/2}f\|^2 \leq t^{-5/4}\|\omega^{-1/4}f\|^2,
\end{equation}
and
\begin{equation}
	\int_0^{+\infty} t^{-1/2}\big\langle f\big| R_{-t^2}(\omega^2) \omega f\big\rangle \d t \leq \frac{\pi}{\sqrt{2}} \|\omega^{-1/4}f\|^2.
\end{equation} On the contrary, when $\omega^{-1/4}f \notin \Hcal$, the right-hand side is not finite a priori, and we need to renormalize the energy. Now, we can state our first theorem.

\begin{theorem}[No renormalization and renormalization of the energy]\label{theorem1}~\\
    Let $f\in D(\omega^{-1/2})$ such that $\omega^{-1/2}f$ is real valued, and let $f_n = \mathds{1}_{\omega \leq n} f \in D(\omega^{3/2})$. Assume $\lambda \geq 0$ and let $\xil$ be the square-root of the self-adjoint operator whose resolvent is defined by \emph{(}\ref{equation_resolvante}\emph{)}. Then $\omega^{-1/2}f_n \to \omega^{-1/2}f$ in $\Hcal$.
        \begin{enumerate}
            \item If $\omega^{-1/4}f \in \Hcal$, there exists a Bogoliubov transformation $\U_{\lambda}$ such that 
            \begin{equation}
                 \Ham_{\lambda}  := \U_{\lambda}\Big(\dg\big(\xi_{\lambda}\big) + \frac{1}{2}\tr_{\infty}\big( \xi_{\lambda} - h_{\lambda}\big)\Big)\U_{\lambda}^* = \lim_{n \to \infty} \Big( \dg(\omega) + \lambda {:}\big(a^*(f_n) + a(f_n)\big)^2{:}\Big) ,
            \end{equation}
            where 
            \begin{equation} \label{equation_trace_formelle}
                \tr_{\infty}\big(\xi_{\lambda} - h_{\lambda}\big) := \frac{-32\lambda^2}{\pi} \int_0^{+\infty} \frac{t^2\|R_{-t^2}(\omega^2)\omega^{1/2}f\|^2 \big\langle f\big| R_{-t^2}(\omega^2)\omega f\big\rangle }{1 + 4\lambda\big\langle f | R_{-t^2}(\omega^2) \omega f \big\rangle}\d t > -\infty.
            \end{equation}
    
            \item Assume $\omega^{-1/4}f \notin \Hcal$, we define $\xi_{\lambda, n}$ by \emph{(}\ref{equation_definition_xil}\emph{)} for $f_n$, and let 
            \begin{equation}
            	E_n = \frac{1}{2}\tr\big(\xi_{\lambda, n} - h_{\lambda, n}\big) \to -\infty.
            \end{equation}
         	Then, there exists a Bogoliubov transformation $\U_{\lambda}$ such that 
            \begin{equation}
                \Ham_{\lambda} := \U_{\lambda} \dg\big(\xi_{\lambda}\big) \U_{\lambda}^* = \lim_{n \to \infty} \Big( \dg(\omega) + \lambda {:}\big(a^*(f_n) + a(f_n)\big)^2{:} - E_n\Big).
            \end{equation}
        \end{enumerate}
        In both cases, $\Hl$ is self-adjoint on $\mathbb{U}_{\lambda}D(\dg(\xi_{\lambda}))$, and the limits are in the strong resolvent sense.
\end{theorem}
\begin{remark}[Convergence of the spectrum]
	In Theorem \ref{theorem1}, the convergence of $\dg(\xi_{\lambda, n})$ to $\dg(\xi_{\lambda})$ is in the norm resolvent sense. As $\mathbb{U}_{\lambda, n}$ and $\mathbb{U}_{\lambda}$ are unitary operators, this implies the convergence of the spectrum (see Theorem VIII.24 of \cite{reed_i_1980}).
\end{remark}
\begin{remark}[More general $f_n$]
	Theorem \ref{theorem1} remains true for a general sequence $(f_n) \in D(\omega^{1/2})^{\mathbb{N}}$ satisfying
	\begin{equation} \label{equation_remarque_fn}
		\begin{cases}
			\omega^{-1/2}f_n \to \omega^{-1/2}f\\
			|f_n| \leq |f|\\
			f_n(k) \to f(k)~\text{for almost every } k.			
		\end{cases}
	\end{equation}
\end{remark}

Now, we want to define $\xi_{\lambda}$ when $\omega^{-1/2}f\notin \Hcal$. Note that if we simply use (\ref{equation_resolvante}), there will be a divergence in the denominator. Therefore, we have to renormalize the coupling constant. Once again, as in \cite{derezinski_renormalization_2002}, for a given $\lambda \in \mathbb{R}_-$, we set $\tilde{\lambda}^{-1} = \lambda^{-1} - 4\langle f |\omega^{-1}f\rangle$. We have
\begin{equation}\label{equation_resolvante_renormalisee}
	\tilde{R}_z(\lambda, f) := R_z(\tilde{\lambda}, f) = \big(\omega^2 - z\big)^{-1} -\frac{4\lambda}{1 + 4\lambda z \langle f |\omega^{-1}(\omega^2 - z)^{-1} f\rangle}\omega^{1/2}(\omega^2 - z)^{-1}|f\rangle\langle f| \omega^{1/2}(\omega^2 - z)^{-1},
\end{equation}
and we can notice that this expression is still valid when $\omega^{-3/2}f \in \Hcal$. Moreover, as long as $\lambda \leq 0$ the operator defined by (\ref{equation_resolvante_renormalisee}) is in particular well defined for $z\in \mathbb{R}_-$ and is thus the resolvent of a positive operator, which means that we can still define $\xil$ from the resolvent of its square. More details will be given in Section \ref{ssection_existence_irregulier}. We can then state our second theorem.
    
\begin{theorem}[Renormalization of the energy and the charge] \label{theorem2}~\\
	Let $f\in D(\omega^{-3/2})$ such that $\omega^{-3/2}f$ is real valued, and let $f_n = \mathds{1}_{\omega \leq n} f \in D(\omega^{3/2})$. Assume $\lambda \leq 0$, let $\xil$ be the square-root of the self-adjoint operator whose resolvent is defined by \emph{(}\ref{equation_resolvante_renormalisee}\emph{)} and let $\lambda_n \in\mathbb{R}_-$ be the negative number whose inverse is
	\begin{equation}
		\lambda_n^{-1} = \lambda^{-1} - 4\langle f_n|\omega^{-1}f_n\rangle \to -\infty.
	\end{equation} 
	Moreover, let $\xi_{\lambda_n, n}$ be the operator defined by \emph{(}\ref{equation_definition_xil}\emph{)} for $\lambda_n$ and $f_n$ and
	\begin{equation}
		E_n = \frac{1}{2}\tr\big(\xi_{\lambda_n, n} - h_{\lambda_n}\big).
	\end{equation}
	Then, there exists a sequence of Bogolioubov transformations $\U_{\lambda_n, n}$ such that
    \begin{equation}
        \dg(\xil) = \lim_{n\to \infty} \U_{\lambda_n,n}^* \Big(\dg(\omega) + \lambda_n{:}\big(a^*(f_n) + a(f_n)\big)^2{:} - E_n\Big)\U_{\lambda_n, n},
    \end{equation}
    where the limit is in the norm resolvent sense.
\end{theorem}

\begin{remark}[Non existence of Bogoliubov transformations]
    Under the assumptions of Theorem \ref{theorem2}, $\Ul$ cannot be defined in general, which means that we cannot come back to $\Hl$. To be more specific, in Proposition \ref{proposition_nonexistence_transformations}, we show two different results:
    \begin{itemize}
        \item we prove that for all $f\in D(\omega^{-3/2})\setminus D(\omega^{-1})$, $\tr(V_{\lambda}V_{\lambda}^*) = + \infty$ and therefore there does not exist a Bogoliubov transformation associated with $U_{\lambda}$ and $V_{\lambda}$;
        \item we give an example of $f\in D(\omega^{-1})\setminus D(\omega^{-1/2})$ such that $\tr(V_{\lambda}V_{\lambda}^*) = +\infty$. 
    \end{itemize}
\end{remark}

\begin{remark}[More general $f_n$]
	Theorem \ref{theorem2} remains true for a general sequence $(f_n) \in D(\omega^{1/2})^{\mathbb{N}}$ satisfying
	\begin{equation}
			\omega^{-3/2}f_n \to \omega^{-3/2}f.
	\end{equation}
\end{remark}

Let us now summarise Theorem \ref{theorem1} and \ref{theorem2}:
\begin{table}[h]
	\centering
	\begin{tabular}{|c|c|c|}\hline
		Regularity of $f$ & Renormalization & Bogoliubov transformations\\ \hline
		$f \in D(\omega^{-1/4})$ & none & existence\\
		$f\in D(\omega^{-1/2})\setminus D(\omega^{-1/4})$ & energy & existence\\
		$f\in D(\omega^{-3/2})\setminus D(\omega^{-1/2})$ &  energy and charge & non-existence\\\hline
	\end{tabular}
	\caption{Summary of the theorems}
	\label{tab:placeholder}
\end{table}

\begin{remark}[Concrete examples]
	Now, let us discuss specific models motivated by the Pauli-Fierz Hamiltonian. If we deduce $\mathbb{H}_{\lambda}$ from the Pauli-Fierz Hamiltonian with $p = 0$ and neglecting polarization, we find $\omega(k) = |k|$ and $f(k) = |k|^{-1/2}$. To apply our Theorem, we add a cut-off on low frequencies and take $\omega(k) = \max(1, |k|)$ and $f(k) = |k|^{-1/2}\mathds{1}_{|k|\geq 1}$. Then, when $d = 2$ or $d = 3$, we have $\omega^{-3/2}f\in \Hcal$ but $\omega^{-1/2}f\notin \Hcal$, which means that the coupling constant has to be renormalized. However, it is well-known in other models (see \cite{griesemer_domain_2018} for instance) that $p^2$ has a regularizing effect. To take this into consideration, let $\omega(k) = \max(1, |k| + |k|^2)$. Now, with this new choice of $\omega$, $\omega^{-1/2}f\in \Hcal$ when $ d = 2 $, and the renormalization of the charge only appears for $d=3$. More accurately, when $d = 3$, we have $|\omega^{-1/2}(k)f(k)|^2 \sim |k|^{-3}$, which means that the renormalization of the coupling constant is logarithmic. This is coherent with the equation (19.85) of \cite{spohn_dynamics_2004}, wich implies that $\alpha \sim 1/\log(\Lambda)$, $\alpha$ being the fine structure constant and $\Lambda$ the cut-off.
\end{remark}

The proof of Proposition \ref{theorem_0} is given in Section \ref{section_preuve_proposition}. It relies heavily on the proof of Theorem 2 from \cite{nam_diagonalization_2016}. In Section \ref{section_existence_xil}, we give an overview of the proof of the existence of $\xi_{\lambda}$. The proof of Theorem \ref{theorem1} is given in Section \ref{section_preuve_theoreme1}. First, we show the existence of the integral defined by (\ref{equation_trace_formelle}) when $\omega^{-1/4}f\in \Hcal$. Then we prove the existence of Bogoliubov transformations, and finally, we show the convergence of $\dg(\xi_{\lambda, n})$ and $\mathbb{U}_{\lambda, n}$. In Section \ref{section_preuve_theoreme2}, we prove Theorem \ref{theorem2}, namely the convergence of $\dg(\xi_{\lambda, n})$, and we discuss the existence of Bogoliubov transformations.

\section{Proof of Proposition \ref{theorem_0}}\label{section_preuve_proposition}

We separate the proof of Proposition \ref{theorem_0} in several lemmas. First, in Lemma \ref{lemme_nam_etc}, we show (using \cite{nam_diagonalization_2016}) the renormalization of the Hamiltonian under certain assumptions that we then check in Lemma \ref{lemme_verification_hypothese_trace_class}. Finally, we show in Lemma \ref{lemme_calcul_constante} that the constant we find in Lemma \ref{lemme_nam_etc} is indeed $\tr(\xi_{\lambda} - h_{\lambda})/2$.

\begin{definition}\label{definition_preuve_proposition}
	We denote the complex conjugation of a function of $\Hcal = L^2(X, \mu)$ by  $\mathfrak{c} : \Hcal \to \Hcal$. We remark that $\mathfrak{c}$ is anti-linear and satisfies $\mathfrak{c}^* = \mathfrak{c}$. Then, let
	\begin{equation}\label{equation_definition_preuve_proposition}
		k'_{\lambda} := 2\lambda |f\rangle \langle f | ~~~~~~~~~~\text{and} ~~~~~~~~~~ h_{\lambda} := \omega + k'_{\lambda}.
	\end{equation}
Moreover, recall that $J : |\psi\rangle \mapsto \langle \psi|$ is the canonical isometry between $\Hcal$ and $\Hcal^*$ and let $k_{\lambda}: \Hcal^* \to \Hcal$ be the operator satisfying 
\begin{equation}
	k_{\lambda} J = k_{\lambda}' \mathfrak{c}.
\end{equation}
\end{definition}
Before we can state and prove Lemma \ref{lemme_nam_etc}, we introduce first the following lemma.
\begin{lemma}[Diagonalization of the block operator associated with the Hamiltonian]\label{lemme_diagonalisation}
	Let $\omega$ be a self-adjoint operator on $\Hcal$ commuting with $\mathfrak{c}$. Let $k : \Hcal^* \to \Hcal$ and let $k'$ be a self-adjoint operator on $\Hcal$, commuting with $\mathfrak{c}$ and satisfying
	\begin{equation}
		kJ = k'\mathfrak{c}.
	\end{equation}
	We set
	\begin{equation}\label{equation_definition_lemme_diagonalisation}
		\left\{  \begin{array}{llllll}
			\xi = \Big(\omega^{1/2}(\omega + 2k')\omega^{1/2}\Big)^{1/2}\\
			U = (\omega^{1/2} \xi^{-1/2} + \omega^{-1/2} \xi^{1/2})/2\\
			V = \mathfrak{c}(\omega^{1/2} \xi^{-1/2} - \omega^{-1/2} \xi^{1/2})J^*/2\\
			h = \omega + k',
		\end{array}
		\right.
	\end{equation}
	and define
	\begin{equation}\label{equation_definition_lemme_diagonalisation2}
		\Acal = \begin{pmatrix}
			h & k\\
			k^*& JhJ^*
		\end{pmatrix},~~~~~~ \Vcal^* = \begin{pmatrix}
			U & V\\
			JVJ & JUJ^*
		\end{pmatrix}.
	\end{equation}
	Then,
	\begin{equation}
		\Vcal \Acal \Vcal^* = \begin{pmatrix}
			\xi & 0\\
			0 & J\xi J^*
		\end{pmatrix}.
	\end{equation}
\end{lemma}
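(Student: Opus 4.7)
My plan is to establish the matrix identity $\Vcal \Acal \Vcal^* = \mathrm{diag}(\xi, J\xi J^*)$ by a direct block-by-block computation, after a few algebraic preliminaries. First, I would verify that $\xi$ is well-defined: assuming $\omega^{1/2}(\omega + 2k')\omega^{1/2}$ is a non-negative self-adjoint operator (which holds via Friedrichs' extension when $k' \geq 0$, as in the application to Proposition \ref{theorem_0} where $k' = 2\lambda|f\rangle\langle f|$ with $\lambda \geq 0$), one defines $\xi$ as its positive self-adjoint square root via functional calculus. Since both $\omega$ and $k'$ commute with $\mathfrak{c}$, so does $\omega^{1/2}(\omega + 2k')\omega^{1/2}$, and hence $\xi$ inherits $\mathfrak{c}\xi = \xi\mathfrak{c}$. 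Squaring the definition of $\xi$ yields the key identity
\begin{equation*}
\omega^{1/2} k' \omega^{1/2} = \frac{1}{2}(\xi^2 - \omega^2),
\end{equation*}
which I will use repeatedly.

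Next, I introduce $\tilde{B} = (\omega^{1/2}\xi^{-1/2} - \omega^{-1/2}\xi^{1/2})/2$, so that $V = \mathfrak{c}\tilde{B}J^*$; one has $U + \tilde{B} = \omega^{1/2}\xi^{-1/2}$ and, taking adjoints, $U^* + \tilde{B}^* = \xi^{-1/2}\omega^{1/2}$. Since $h$, $U$, and $\tilde{B}$ all commute with $\mathfrak{c}$, the hypothesis $kJ = k'\mathfrak{c}$ together with $\mathfrak{c}^2 = 1$ and $J^*J = 1$ yields $kJVJ = k'\tilde{B}$, $(JVJ)^* k^* = \tilde{B}^* k'$ and $(JVJ)^* JhJ^* JVJ = \tilde{B}^* h \tilde{B}$, with analogous identities for the off-diagonal entries. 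Expanding $\Vcal \Acal \Vcal^*$ and collecting, the $(1,1)$-block simplifies to
\begin{equation*}
U^* h U + U^* k' \tilde{B} + \tilde{B}^* k' U + \tilde{B}^* h \tilde{B} = U^* \omega U + \tilde{B}^* \omega \tilde{B} + (U^* + \tilde{B}^*)\, k'\, (U + \tilde{B}),
\end{equation*}
where I used $h = \omega + k'$. A direct spectral-calculus expansion yields $U^*\omega U + \tilde{B}^*\omega \tilde{B} = \frac{1}{2}(\xi^{-1/2}\omega^2\xi^{-1/2} + \xi)$, while the last term equals $\xi^{-1/2}\omega^{1/2} k' \omega^{1/2}\xi^{-1/2} = \frac{1}{2}(\xi - \xi^{-1/2}\omega^2\xi^{-1/2})$ by the key identity. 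Adding gives $\xi$, as required.

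The $(1,2)$-block reduces analogously to $\mathfrak{c}\,[U^*\omega\tilde{B} + \tilde{B}^*\omega U + (U^* + \tilde{B}^*)k'(U+\tilde{B})]\,J^*$; this time $U^*\omega\tilde{B} + \tilde{B}^*\omega U = \frac{1}{2}(\xi^{-1/2}\omega^2\xi^{-1/2} - \xi)$, which exactly cancels the $k'$ term, so the block vanishes. By the Hermitian symmetry of $\Acal$ the $(2,1)$-block is the adjoint of the $(1,2)$-block, hence also vanishes, and conjugating the $(1,1)$ computation by $J$ yields $J\xi J^*$ for the $(2,2)$-block. The main obstacle will not be the algebra itself but the careful handling of the antilinear operators $\mathfrak{c}$ and $J$ when taking adjoints -- in particular identifying $(JVJ)^* = J^*V^* J^*$ correctly -- and ensuring that all intermediate expressions involving $\omega^{1/2}$, $\xi^{-1/2}$, and $\omega^{-1/2}$ can be given precise meaning on a common dense domain invariant under all the operators involved; this can be addressed by first working with the truncated operator $\omega\,\mathds{1}_{\omega \leq N}$ and passing to the limit $N \to \infty$.
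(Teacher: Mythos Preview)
Your approach is correct and is precisely the direct block-by-block computation the paper alludes to but does not carry out; the paper simply notes that the lemma is a particular case of Theorem~1 of \cite{nam_diagonalization_2016} with explicit $\xi$ and $\Vcal$, and that the computations are similar to those in the fourth part of \cite{grech_excitation_2013}. Your algebraic verification of the $(1,1)$- and $(1,2)$-blocks via the identity $\omega^{1/2}k'\omega^{1/2} = \tfrac{1}{2}(\xi^2-\omega^2)$ and the decomposition $U+\tilde B = \omega^{1/2}\xi^{-1/2}$ is exactly what is needed, and your caution about the antilinear bookkeeping with $\mathfrak{c}$ and $J$ (in particular $(JVJ)^* = J^*V^*J^*$) is well placed.
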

This lemma is a particular case of Theorem 1 of \cite{nam_diagonalization_2016}, with an explicit expression of $\xi$ and $\Vcal$. The proof is straightforward and the computations are similar to those in the fourth part of \cite{grech_excitation_2013}.
\begin{remark}\label{remarque_lemme_diagonalisation}
	In the following, we will use the notations introduced in \eqref{equation_definition_lemme_diagonalisation} and \eqref{equation_definition_lemme_diagonalisation2} with subscript $\lambda$ for the particular choice $k' = k'_{\lambda}$.
\end{remark}

\begin{lemma}[Diagonalization of the Hamiltonian]\label{lemme_nam_etc}
	Let $f\in \Hcal$ be a real-valued function and $\Hl$ the self-adjoint extension of the operator defined by $\dg(\omega) + \lambda {:}\big(a^*(f) + a(f)\big)^2{:}$ on $D\big(\dg(\omega)\big)$. Assume that $h_{\lambda}^{1/2} V_{\lambda}V_{\lambda}^*h_{\lambda}^{1/2}$ and $k_{\lambda}^*V_{\lambda} J U_{\lambda}^* J^*$ (see Remark \ref{remarque_lemme_diagonalisation}) are trace class. Then, we have
	\begin{equation}\label{equation_cas_regulier_avant_simplification}
		\U_{\lambda}^* \Ham_{\lambda} \U_{\lambda} = \dg(\xi_{\lambda}) + \tr(h_{\lambda}^{1/2}V_{\lambda}V_{\lambda}^*h_{\lambda}^{1/2}) + \mathfrak{R}\tr(k_{\lambda}^*V_{\lambda} J U_{\lambda}^* J^*),
	\end{equation}	
	with 
	\begin{equation}\label{equation_preuve_proposition_xilambda}
		\xi_{\lambda} = \Big(\omega^2 + 4\lambda \omega^{1/2}|f\rangle\langle f|\omega^{1/2}\Big)^{1/2}. 
	\end{equation}
\end{lemma}
 
\begin{proof}
For an orthonormal basis $(u_j)$ of $\Hcal$, we have, using the notations introduced in Definition~\ref{definition_preuve_proposition},
\begin{equation}
    \Ham_{\lambda} = \dg(h_{\lambda}) + \frac{1}{2}\sum_{j, l \in \N} \Big( \langle u_j | k_{\lambda}J u_l\rangle a^*(u_j) a^*(u_l) + \overline{\langle u_j | k_{\lambda}Ju_l\rangle} a(u_j) a(u_l)\Big).
\end{equation}
This is the general expression of a quadratic Hamiltonian. From the proof of Theorem 2 from \cite{nam_diagonalization_2016}, we know that if $h_{\lambda}^{1/2} V_{\lambda}V_{\lambda}^*h_{\lambda}^{1/2}$ and $k_{\lambda}^*V_{\lambda} J U_{\lambda}^* J^*$ are trace class, then we have \eqref{equation_cas_regulier_avant_simplification}, the self-adjoint operator $\xi_{\lambda}$ being given by the diagonalization of the operator \begin{equation}
	\mathcal{A}_{\lambda} := \begin{pmatrix}
		h_{\lambda} & k_{\lambda}\\
		k_{\lambda}^*& Jh_{\lambda}J^*
	\end{pmatrix}.
\end{equation}
To get \eqref{equation_preuve_proposition_xilambda}, we want to use Lemma \ref{lemme_diagonalisation} with $k' = k'_{\lambda}$. As $f$ is real valued, $k_{\lambda}'$ and $\mathfrak{c}$ indeed commute. Moreover, as $\omega$ is a real multiplication operator, it is clear that $\omega$ and $\mathfrak{c}$ commute. Therefore, the assumptions of Lemma \ref{lemme_diagonalisation} are indeed satisfied, concluding the proof of Lemma \ref{lemme_nam_etc}.
\end{proof}

Now, to prove Proposition \ref{theorem_0}, we have to check that $h_{\lambda}^{1/2} V_{\lambda}V_{\lambda}^*h_{\lambda}^{1/2}$ and $k_{\lambda}^*V_{\lambda} J U_{\lambda}^* J^*$ are trace class, and explicitly compute the constant $\tr(h_{\lambda}^{1/2}V_{\lambda}V_{\lambda}^*h_{\lambda}^{1/2}) + \mathfrak{R}\tr(k_{\lambda}^*V_{\lambda} J U_{\lambda}^* J^*)$.

\begin{lemma}[Verifying the assumptions of Lemma \ref{lemme_nam_etc}]\label{lemme_verification_hypothese_trace_class}
	Let $f\in \Hcal$. Let $k_{\lambda}'$ $h_{\lambda}$, $U_{\lambda}$ and $V_{\lambda}$ be defined in \eqref{equation_definition_preuve_proposition} and \eqref{equation_definition_lemme_diagonalisation} respectively. Then, $h_{\lambda}^{1/2} V_{\lambda}V_{\lambda}^*h_{\lambda}^{1/2}$ and $k_{\lambda}^*V_{\lambda} J U_{\lambda}^* J^*$ are trace class.
\end{lemma}

As $\xi_{\lambda}$ is defined as the square root of a rank one perturbation, we use the following lemma to have an explicit formula for $\xi_{\lambda}$ and $\xi_{\lambda}^{-1}$.

\begin{lemma}[Integral formulas for roots of a rank one perturbation]\label{lemme_formules_integrales_racine_perturbation}
    Let $A$ be a self-adjoint positive operator, $\psi \in \Hcal$ and $\alpha \in \R$ such that $A+\alpha|\psi\rangle\langle \psi|$ is positive (this is satisfied if $\alpha \in \R_+$). We denote $T_{\alpha} = A + \alpha |\psi\rangle\langle \psi|$ and we recall that $R_z(A)$ denotes the resolvent of $A$, then
    \begin{equation}\label{equation_racine_carree_perturbation}
        T_{\alpha}^{1/2} = A^{1/2} + \frac{2\alpha}{\pi}\int_0^{+\infty}\frac{t^2 |R_{-t^2}(A)\psi\rangle\langle R_{-t^2}(A) \psi|}{1 + \alpha \langle \psi | R_{-t^2}(A) \psi\rangle}\d t,
    \end{equation}

    \begin{equation}\label{equation_inverse_racine_carree_perturbation}
        T_{\alpha}^{-1/2} = A^{-1/2} - \frac{2\alpha}{\pi}\int_0^{+\infty} \frac{|R_{-t^2}(A)\psi\rangle\langle R_{-t^2}(A) \psi|}{1 + \alpha \langle \psi | R_{-t^2}(A) \psi\rangle}\d t,
    \end{equation}

    \begin{equation}\label{equation_racine_quatrieme_perturbation}
        T_{\alpha}^{1/4} = A^{1/4} + \frac{2\sqrt{2}\alpha}{\pi}\int_0^{+\infty}\frac{t^4 |R_{-t^4}(A)\psi\rangle \langle R_{-t^4}(A)\psi|}{1 + \alpha \langle \psi | R_{-t^4}(A)\psi\rangle}\d t,
    \end{equation}
    and
    \begin{equation}\label{equation_inverse_racine_quatrieme_perturbation}
        T_{\alpha}^{-1/4} = A^{-1/4} - \frac{2\sqrt{2}\alpha}{3\pi}\int_0^{+\infty}\frac{ |R_{-t^{4/3}}(A)\psi\rangle \langle R_{-t^{4/3}}(A)\psi|}{1 + \alpha \langle \psi | R_{-t^{4/3}}(A)\psi\rangle}\d t.
    \end{equation}
    In particular, if $A\geq 1$, this implies that $T_{\alpha}^{1/2} - A^{1/2}$ is trace class and that 
    \begin{equation}\label{equation_trace_racine_carree_perturbation}
        \tr\big(T_{\alpha}^{1/2} - A^{1/2}\big) = \frac{2\alpha}{\pi} \int_0^{+\infty} \frac{t^2\big\| R_{-t^2}(A)\psi\big\|^2}{1 + \alpha\big\langle \psi | R_{-t^2}(A)\psi\big\rangle} \d t.
    \end{equation}
\end{lemma}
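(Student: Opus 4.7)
The plan is to combine two classical ingredients: a scalar integral representation of fractional powers, lifted to positive self-adjoint operators by the spectral theorem, together with the Sherman--Morrison identity for the resolvent of a rank-one perturbation. Starting from the Mellin-type identity $a^s = \frac{\sin(\pi s)}{\pi}\int_0^{\infty}\frac{a\,t^{s-1}}{a+t}\d t$, valid for $a>0$ and $0 < s < 1$, together with its counterpart $a^{-s} = \frac{\sin(\pi s)}{\pi}\int_0^\infty \frac{t^{-s}}{a+t}\d t$, substituting $t = u^2$, $t = u^4$ and $t = u^{4/3}$ respectively yields the four scalar model formulas expressing $a^{\pm 1/2}$ and $a^{\pm 1/4}$ as integrals of the resolvents $(a+u^2)^{-1}$, $(a+u^4)^{-1}$ and $(a+u^{4/3})^{-1}$ with the exact normalizations appearing in the lemma.

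Next, I would lift these scalar identities to positive self-adjoint operators by functional calculus. For the negative powers the integrals converge already in the strong sense, so subtracting the identity for $T_\alpha$ from the one for $A$ immediately gives formulas (\ref{equation_inverse_racine_carree_perturbation}) and (\ref{equation_inverse_racine_quatrieme_perturbation}). For the positive powers the integrands do not decay at infinity, so I would work with the difference of the two identities from the start, using $\frac{a}{a+t^n} = 1 - \frac{t^n}{a+t^n}$ to write
\begin{equation*}
T_{\alpha}^{1/2} - A^{1/2} = -\frac{2}{\pi}\int_0^{\infty} t^2\bigl[(T_{\alpha} + t^2)^{-1} - (A + t^2)^{-1}\bigr]\d t,
\end{equation*}
and analogously with a $t^4$ weight for the fourth-root case.

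The final step is to plug in the Sherman--Morrison identity
\begin{equation*}
(T_{\alpha} + t^2)^{-1} - (A+t^2)^{-1} = -\frac{\alpha\,R_{-t^2}(A)|\psi\rangle\langle\psi|R_{-t^2}(A)}{1 + \alpha\langle\psi|R_{-t^2}(A)\psi\rangle},
\end{equation*}
and its analogues at the points $-t^4$ and $-t^{4/3}$. The denominator is well-defined on $[0,\infty)$: trivially if $\alpha \geq 0$, and if $\alpha < 0$ the assumed positivity of $T_{\alpha}$ is equivalent to $1 + \alpha\langle\psi|A^{-1}\psi\rangle \geq 0$, which propagates through $R_{-t^2}(A) \leq A^{-1}$. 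Inserting these rank-one identities into the four difference formulas directly yields (\ref{equation_racine_carree_perturbation})--(\ref{equation_inverse_racine_quatrieme_perturbation}). For the trace statement, when $A \geq 1$ the rank-one integrand in (\ref{equation_racine_carree_perturbation}) is trace class with trace bounded by $C\,t^2(1+t^2)^{-2}\|\psi\|^2$, which is integrable on $\R_+$; hence the integral converges in the trace norm, $T_{\alpha}^{1/2}-A^{1/2}$ is trace class, and (\ref{equation_trace_racine_carree_perturbation}) follows by Fubini.

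The main technical subtlety is the justification of the operator-valued integrals for the positive powers, since the naive integrand $B(B+t^2)^{-1}$ does not decay at infinity. Working always with the difference between $T_{\alpha}$ and $A$, where the cancellation turns the integrand into $t^n$ times a resolvent difference that decays like $t^{-n-2}$, circumvents this issue; alternatively one may introduce an ultraviolet cutoff in $t$ and pass to the limit by dominated convergence in the norm topology.
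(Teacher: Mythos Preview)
Your proposal is correct and follows essentially the same route as the paper's proof: a scalar integral representation of the fractional power, lifted to operators by the spectral theorem, combined with the rank-one resolvent identity (the paper's Lemma~\ref{lemme_resolvante_rang_un}). The only cosmetic difference is that you derive all four scalar identities from a single Mellin formula by substitution, whereas the paper computes each of $\int_0^\infty \frac{x}{x+t^2}\d t$, $\int_0^\infty \frac{\d t}{x+t^2}$, $\int_0^\infty \frac{x}{x+t^4}\d t$, $\int_0^\infty \frac{\d t}{x+t^{4/3}}$ directly; your explicit discussion of working with differences to handle the non-decay of the positive-power integrands is a point the paper leaves implicit.
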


This lemma is a generalization of Proposition 7.5 of \cite{ravn_christiansen_random_2021}, and a proof is given in the Appendix. 

\begin{proof}[Proof of Lemma \ref{lemme_verification_hypothese_trace_class}] Let us explicitly bound the trace of $h_{\lambda}^{1/2} V_{\lambda}V_{\lambda}^*h_{\lambda}^{1/2}$. We have
	\begin{equation}\label{equation_VVetoile}
		4 V_{\lambda} V_{\lambda}^* = \omega^{1/2}\xi_{\lambda}^{-1}\omega^{1/2} - 1 + \omega^{-1/2}\xi_{\lambda} \omega^{-1/2} - 1.
	\end{equation} 
Using (\ref{equation_racine_carree_perturbation}), we have 
\begin{equation}\label{equation_xilambda_integrale}
    \omega^{-1/2}\xi_{\lambda} \omega^{-1/2} - 1 = \frac{8\lambda}{\pi}\int_0^{+\infty}\frac{t^2 |R_{-t^2}(\omega^2)f\rangle \langle R_{-t^2}(\omega^2)f|}{1 + 4\lambda \langle f | \omega R_{-t^2 }(\omega^2) f\rangle} \d t,
\end{equation}
and hence
\begin{equation}
    h_{\lambda}^{1/2} \big(\omega^{-1/2}\xi_{\lambda} \omega^{-1/2} - 1\big) h_{\lambda}^{1/2} = \frac{8\lambda}{\pi}\int_0^{+\infty}\frac{t^2 |h_{\lambda}^{1/2} R_{-t^2}(\omega^2) f\rangle \langle h_{\lambda}^{1/2} R_{-t^2}(\omega^2) f|}{1 + 4\lambda \langle f | \omega R_{-t^2 }(\omega^2) f\rangle} \d t.
\end{equation}
This shows that $h_{\lambda}^{1/2} \big(\omega^{-1/2}\xi_{\lambda} \omega^{-1/2} - 1\big) h_{\lambda}^{1/2}$ is a self-adjoint positive operator. For a given orthonormal basis $(u_j)$ of $\Hcal$, as 
\begin{equation}
\int_0^{+\infty}\frac{a t^2}{(a^2 + t^2)^2}\d t = \frac{\pi}{4}
\end{equation}
for $a\in \R_+$, we have 
\begin{align}
    \sum_{j = 1}^{+\infty} \big\langle u_j | h_{\lambda}^{1/2} \big(\omega^{-1/2}\xi_{\lambda} \omega^{-1/2} - 1\big) h_{\lambda}^{1/2} | u_j\big\rangle &= \frac{8\lambda}{\pi} \int_0^{+\infty} \frac{t^2 \big\| h_{\lambda}^{1/2}R_{-t^2}(\omega^2)f\big\|^2}{1 + 4\lambda\langle f | \omega R_{-t^2 }(\omega^2) f\rangle} \d t \notag\\
    &\leq \frac{8\lambda}{\pi}\Big\langle f \Big| \int_0^{+\infty} \frac{t^2 \omega}{\big(\omega^2 + t^2)^2}\d t \Big| f\Big\rangle + \frac{16\lambda^2}{\pi}\int_0^{+\infty}t^2 \big|\langle f | R_{-t^2}(\omega^2) f\rangle\big|^2 \d t \notag \\
    &\leq 2\lambda \|f\|^2 + 8\lambda^2 \|f\|^4 < + \infty.
\end{align}
Likewise, using (\ref{equation_inverse_racine_carree_perturbation}),
\begin{equation}
    h_{\lambda}^{1/2}\big(\omega^{1/2} \xi_{\lambda}^{-1} \omega^{1/2} - 1\big) h_{\lambda}^{1/2} = - \frac{8\lambda}{\pi}\int_0^{+\infty} \frac{ |h_{\lambda}^{1/2} R_{-t^2}(\omega^2) \omega f\rangle \langle h_{\lambda}^{1/2} R_{-t^2}(\omega^2)\omega  f|}{1 + 4\lambda \langle f | \omega R_{-t^2 }(\omega^2) f\rangle} \d t,
\end{equation}
and as 
\begin{equation}
\int_0^{+\infty} \frac{a^3}{(a^2 +t^2)^2}\d t = \frac{\pi}{4}
\end{equation}
for $a\in \R_+$, we find
\begin{align}
    \sum_{j = 1}^{+\infty} \big\langle u_j | h_{\lambda}^{1/2} \big(1 - \omega^{1/2}\xi_{\lambda}^{-1} \omega^{1/2}\big) h_{\lambda}^{1/2} | u_j\big\rangle &= \frac{8\lambda}{\pi} \int_0^{+\infty} \frac{\big\| h_{\lambda}^{1/2}R_{-t^2}(\omega^2)\omega f\big\|^2}{1 + 4\lambda\langle f | \omega R_{-t^2 }(\omega^2) f\rangle} \d t \notag\\
    &\leq \frac{8\lambda}{\pi}\Big\langle f \Big| \int_0^{+\infty} \frac{\omega^3}{\big(\omega^2 + t^2)^2}\d t \Big| f\Big\rangle + \frac{16\lambda^2}{\pi}\int_0^{+\infty}\big|\langle f | R_{-t^2}(\omega^2)\omega f\rangle\big|^2 \d t \notag \\
    &\leq 2\lambda \|f\|^2 + 8\lambda^2 \|f\|^4 < + \infty,
\end{align}
which shows that $h_{\lambda}^{1/2} V_{\lambda} V_{\lambda}^* h_{\lambda}^{1/2}$ is indeed trace class. The proof for $k_{\lambda}^* V_{\lambda} J U_{\lambda}^*J^*$ is quite similar. We have 
\begin{equation}
    V_{\lambda}JU_{\lambda}^* J^* = \mathfrak{c}\big(\omega^{1/2}\xi_{\lambda}^{-1}\omega^{1/2} - \omega^{-1/2}\xi_{\lambda}\omega^{-1/2}\big)J^*, 
\end{equation}
and we also find 
\begin{equation} 
	\big|\tr(k_{\lambda}^* V_{\lambda} J U_{\lambda}^*J^*)\big| \leq C\big(\lambda \|f\|^2 + \lambda^2 \|f\|^4\big),
\end{equation} which concludes the proof of Lemma \ref{lemme_verification_hypothese_trace_class}.
\end{proof}

Now, we want to explicitly compute the constant in (\ref{equation_cas_regulier_avant_simplification}).
\begin{lemma}[Computation of the constant term of the Hamiltonian]\label{lemme_calcul_constante}
	Let $f\in \Hcal$ such that $\ln(\omega)^2 f\in \Hcal$, we have
	\begin{equation}
		C_{\lambda} := \tr(h_{\lambda}^{1/2}V_{\lambda}V_{\lambda}^*h_{\lambda}^{1/2}) + \mathfrak{R}\tr(k_{\lambda}^*V_{\lambda} J U_{\lambda}^* J^*) = \frac{1}{2}\tr\big(\xi_{\lambda} - h_{\lambda}\big).
	\end{equation}
\end{lemma}
\begin{proof} We have
\begin{equation}
    4 C_{\lambda}= \frac{8\lambda}{\pi}\int_0^{+\infty} \frac{\langle R_{-t^2}(\omega^2)f|(t^2 h_{\lambda} - \omega h_{\lambda} \omega - t^2 k_{\lambda}' - \omega k_{\lambda}'\omega )R_{-t^2}(\omega^2)f\rangle}{1 + 4\lambda \langle f |\omega R_{-t^2}(\omega^2)f\rangle} \d t. 
\end{equation}
On the one hand, 
\begin{align}
    \frac{8\lambda}{\pi}\int_0^{+\infty} \frac{\langle R_{-t^2}(\omega^2)f|(t^2 h_{\lambda} - t^2 k_{\lambda}')|R_{-t^2}(\omega^2)f\rangle}{1 + 4\lambda \langle f |\omega R_{-t^2}(\omega^2)f\rangle} \d t 
    &= \frac{8\lambda}{\pi}\int_0^{+\infty} \frac{t^2\langle R_{-t^2}(\omega^2)f|\omega R_{-t^2}(\omega^2)f\rangle}{1 + 4\lambda \langle f |\omega R_{-t^2}(\omega^2)f\rangle} \d t \notag\\
    &= \tr(\xi_{\lambda} - \omega).
\end{align}
On the other hand,
\begin{align}
    \frac{8\lambda}{\pi}\int_0^{+\infty} \frac{\langle R_{-t^2}(\omega^2)f|(\omega h_{\lambda} \omega + \omega k_{\lambda}'\omega )R_{-t^2}(\omega^2)f\rangle}{1 + 4\lambda \langle f |\omega R_{-t^2}(\omega^2)f\rangle} \d t  &= \frac{8\lambda}{\pi}\int_0^{+\infty} \frac{\langle R_{-t^2}(\omega^2)f|\omega^{1/2} \xi_{\lambda}^2 \omega^{1/2} R_{-t^2}(\omega^2)f\rangle}{1 + 4\lambda \langle f |\omega R_{-t^2}(\omega^2)f\rangle} \d t \notag\\
    &= \tr\Big((\xi_{\lambda} \omega^{-1/2})(\omega^{1/2}\xi_{\lambda}^{-1}\omega^{1/2} - 1)(\xi_{\lambda}\omega^{-1/2})^*\Big) \notag\\
    &= \tr(\xi_{\lambda} - \xi_{\lambda} \omega^{-1}\xi_{\lambda}).
\end{align}
Therefore, we have 
\begin{equation}
    4 C_{\lambda} = \tr\big(2\xi_{\lambda} - \omega - \xi_{\lambda}\omega^{-1}\xi_{\lambda}\big).
\end{equation}
As 
\begin{equation} 
	\omega^{-1/2}\xi_{\lambda}^2 \omega^{-1/2} = \omega + 4\lambda|f\rangle\langle f|,
\end{equation} 
we still have to show that $\tr(\xi_{\lambda}\omega^{-1}\xi_{\lambda} - \omega^{-1/2}\xi_{\lambda}^2\omega^{-1/2}) = 0$ to conclude. Using (\ref{equation_inverse_racine_carree_perturbation}), we have 
\begin{multline}\label{equation_xi_omega_xi}
    \xi_{\lambda}\omega^{-1}\xi_{\lambda} =  \left(\frac{8\lambda}{\pi}\right)^2 \int_0^{+\infty}\int_0^{+\infty} \frac{t^2 s^2 \langle R_{-t^2}(\omega^2) f| R_{-s^2}(\omega^2) f\rangle |\omega^{1/2}R_{-t^2}(\omega^2) f\rangle \langle \omega^{1/2}R_{-s^2}(\omega^2)f|}{\big(1 + 4\lambda \langle f | \omega R_{-t^2}(\omega^2)f\rangle \big)\big(1 + 4 \lambda \langle f |\omega R_{-s^2}(\omega^2) f\rangle\big)}\d s \d t \\ + 2\frac{8\lambda}{\pi}\int_0^{+\infty} t^2\frac{|\omega^{1/2}R_{-t^2}(\omega^2)f\rangle \langle \omega^{1/2}R_{-t^2}(\omega^2)f|}{1 + 4\lambda \langle f |\omega R_{-t^2}(\omega^2) f\rangle }\d t + \omega,
\end{multline}
and 
\begin{multline}\label{equation_omega_xi_omega}
    \omega^{-1/2}\xi_{\lambda}^2 \omega^{-1/2} =  \left(\frac{8\lambda}{\pi}\right)^2 \int_0^{+\infty}\int_0^{+\infty} \frac{t^2 s^2 \langle\omega^{1/2} R_{-t^2}(\omega^2) f| \omega^{1/2}R_{-s^2}(\omega^2) f\rangle |R_{-t^2}(\omega^2) f\rangle \langle R_{-s^2}(\omega^2)f|}{\big(1 + 4\lambda \langle f | \omega R_{-t^2}(\omega^2)f\rangle \big)\big(1 + 4 \lambda \langle f |\omega R_{-s^2}(\omega^2) f\rangle\big)}\d s \d t \\ + \frac{8\lambda}{\pi}\int_0^{+\infty} t^2\frac{|\omega R_{-t^2}(\omega^2)f\rangle \langle R_{-t^2}(\omega^2)f|}{1 + 4\lambda \langle f |\omega R_{-t^2}(\omega^2) f\rangle }\d t + \frac{8\lambda}{\pi}\int_0^{+\infty} t^2\frac{|R_{-t^2}(\omega^2)f\rangle \langle \omega R_{-t^2}(\omega^2)f|}{1 + 4\lambda \langle f |\omega R_{-t^2}(\omega^2) f\rangle }\d t + \omega.
\end{multline}
We have
\begin{equation}\label{equation_inegalite_partie_3}
    \int_0^{+\infty} t^2\frac{\|\omega R_{-t^2}(\omega^2)f\| \|R_{-t^2}(\omega^2)f\|}{1 + 4\lambda \langle f | \omega R_{-t^2}(\omega^2)f\rangle} \d t \leq \int_0^{+\infty} t^2\|\omega R_{-t^2}(\omega^2)f\| \|R_{-t^2}(\omega^2)f\|\d t.
\end{equation}
We want to show that 
\begin{equation} 
	\int_0^{+\infty} t^2\|\omega R_{-t^2}(\omega^2)f\| \|R_{-t^2}(\omega^2)f\|\d t<+\infty. 
\end{equation}
As $\omega \geq 1$ and $x \in [e^4, +\infty)\mapsto \ln^4(x)/x$ is decreasing, we have for $t\in R_+$,
\begin{equation}
    \frac{\ln\big(e^4 \omega^2\big)^4}{\ln\big(e^4 \omega^2 + e^4 t^2\big)^4} \geq \frac{\omega^2}{\omega^2 + t^2}.
\end{equation}
Therefore,
\begin{equation}
    \frac{\omega^2}{\big(\omega^2 + t^2\big)^2}\leq \frac{\big(4 + 2\ln(\omega)\big)^4}{\big(\omega^2 + t^2\big)\big(4 + \ln(\omega^2 +t^2)\big)^4} \leq \frac{\big(2 + \ln(\omega)\big)^4}{\big(1 + t^2\big)\big(2 + \ln(1 +t)\big)^4},
\end{equation}
which means that 
\begin{equation}
    \big\|\omega R_{-t^2}(\omega^2)f\big\| \leq \frac{\big\| \big(2 + \ln(\omega)\big)^2f \big\|}{\sqrt{1 + t^2}\big(2 + \ln(1 + t^2)\big)^2}.
\end{equation}
Finally, we find
\begin{equation}
    \int_0^{+\infty} t^2\frac{\|\omega R_{-t^2}(\omega^2)f\| \|R_{-t^2}(\omega^2)f\|}{1 + 4\lambda \langle f | \omega R_{-t^2}(\omega^2)f\rangle} \d t \leq \big\| \big(2 + \ln(\omega)\big)^2f \big\| \|f\| \int_0^{+\infty} \frac{\d t}{\sqrt{1 + t^2}\big(2 + \ln(1 + t^2)\big)^2}.
\end{equation}
As $t\in \R_+ \mapsto \frac{1}{(1 + t)\ln(1+t)^2}$ is integrable, we have 
\begin{equation}
    \int_0^{+\infty} t^2\frac{\|\omega R_{-t^2}(\omega^2)f\| \|R_{-t^2}(\omega^2)f\|}{1 + 4\lambda \langle f | \omega R_{-t^2}(\omega^2)f\rangle} \d t  < +\infty.
\end{equation}
Besides,
\begin{equation}
    \int_0^{+\infty} t^2\frac{\|\omega^{1/2} R_{-t^2}(\omega^2)f\|^2}{1 + 4\lambda \langle f | \omega R_{-t^2}(\omega^2)f\rangle} \d t\leq \frac{\pi}{4}\|f\|^2 < + \infty.
\end{equation}
Therefore, $\xi_{\lambda}\omega^{-1}\xi_{\lambda}- \omega$ and $\omega^{-1/2} \xi_{\lambda}^2 \omega^{-1/2} - \omega$ are trace class and by taking the trace in (\ref{equation_omega_xi_omega}) and (\ref{equation_xi_omega_xi}), we find 
\begin{equation}
    \tr\big(\xi_{\lambda}\omega^{-1}\xi_{\lambda} - \omega\big) = \tr\big(\omega^{-1/2} \xi_{\lambda}^2 \omega^{-1/2} - \omega\big).
\end{equation}
\end{proof}
\begin{remark}[Assumptions on $\omega$]
    If we compute things naively in (\ref{equation_inegalite_partie_3}), we find 
    \begin{equation}
        t^2\|\omega R_{-t^2}(\omega^2)f\| \|R_{-t^2}(\omega^2)f\|\leq  t^2 \frac{1}{t}\|f\| \frac{1}{t^2}\|f\| = \frac{1}{t}\|f\|^2
    \end{equation}
    which gives a logarithmic divergence when integrated - this is why we impose a logarithmic regularity assumption on $f$. Note that on the other hand, $t^2 \|\omega^{1/2} R_{-t^2}(\omega^2)f\|^2$ is integrable for a general $f\in \Hcal$.\\
    
\end{remark}

The proof of Proposition \ref{theorem_0} is then the combination of Lemma \ref{lemme_nam_etc}, Lemma \ref{lemme_verification_hypothese_trace_class} and Lemma \ref{lemme_calcul_constante}.

\section{Existence of $\xil$}\label{section_existence_xil}
The renormalization of the one-body operator can be found in the fourth section of \cite{derezinski_renormalization_2002}. Nonetheless, we give a short presentation for completeness.
\subsection{The case $\omega^{-1/2}f\in \Hcal$} \label{ssection_existence_regulier}

In this section, we prove (\ref{equation_resolvante}) in the case $\omega^{1/2}f \in \Hcal$, and then show that the expression is well defined for less regular $f$, namely $\omega^{-1/2}f \in \Hcal$. Then, we briefly explain why this expression is actually the resolvent of an operator that we will call $\xi_{\lambda}^2$. First of all, to get (\ref{equation_resolvante}), we use the following lemma (see e.g. Theorem 1.1.1 of \cite{albeverio_singular_2000}).

\begin{lemma}[Resolvent of a rank one perturbation]\label{lemme_resolvante_rang_un}
    Let $A$ be a self-adjoint operator whith dense domain, $z \in \C \setminus \R$, $\psi \in \Hcal$ and $\alpha \in \R$. We set $T_{\alpha} = A + \alpha|\psi\rangle \langle \psi |$, then
    \begin{equation}\label{equation_resolvante_rang_un}
        R_z(T_{\alpha}) = R_z(A) - \frac{\alpha}{1 + \alpha \langle \psi | R_z(A)\psi\rangle} R_z(A) |\psi\rangle \langle \psi| R_z(A).
    \end{equation}
\end{lemma}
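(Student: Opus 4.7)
The plan is to verify (\ref{equation_resolvante_rang_un}) by a direct algebraic computation, showing that the right-hand side is a two-sided inverse of $T_\alpha - z$. Two preliminary checks will be needed. First, $T_\alpha$ is self-adjoint on $D(A)$ by the Kato-Rellich theorem, since $\alpha|\psi\rangle\langle\psi|$ is a bounded symmetric operator; this ensures $R_z(T_\alpha)$ is well defined for $z \in \C \setminus \R$. Second, the scalar denominator $1 + \alpha\langle\psi|R_z(A)\psi\rangle$ is nonzero: applying the spectral theorem to the self-adjoint operator $A$ gives
\begin{equation}
\mathrm{Im}\langle\psi|R_z(A)\psi\rangle = \mathrm{Im}(z)\,\|R_z(A)\psi\|^2,
\end{equation}
which is nonzero whenever $\mathrm{Im}(z)\neq 0$ and $\psi \neq 0$, so for real $\alpha \neq 0$ adding $1$ cannot cancel the imaginary part. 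The cases $\alpha = 0$ and $\psi = 0$ are trivial.

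Denoting the right-hand side of (\ref{equation_resolvante_rang_un}) by $B$, I would then compute $(T_\alpha - z)B$ by splitting $T_\alpha - z = (A-z) + \alpha|\psi\rangle\langle\psi|$ and using $(A-z)R_z(A) = \mathrm{id}$. This produces the identity together with three rank-one terms, all of the form $c\,|\psi\rangle\langle\psi|R_z(A)$, with scalar coefficients $-\alpha/D$, $\alpha$, and $-\alpha^2\langle\psi|R_z(A)\psi\rangle/D$ where $D = 1+\alpha\langle\psi|R_z(A)\psi\rangle$. Combining them over the common denominator gives $\alpha D/D - \alpha/D - \alpha^2\langle\psi|R_z(A)\psi\rangle/D = 0$, so $(T_\alpha - z)B = \mathrm{id}$ on $\Hcal$. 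The analogous computation of $B(T_\alpha - z)$ on $D(A)$ gives the reverse identity, so $B$ is the resolvent $R_z(T_\alpha)$.

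The only genuine obstacle is the preliminary non-vanishing check for the denominator, which is dispatched in one line by the spectral theorem. A slightly cleaner presentation would begin from the second resolvent identity $R_z(T_\alpha) - R_z(A) = -\alpha R_z(T_\alpha)|\psi\rangle\langle\psi|R_z(A)$, pair both sides on the left with $\langle\psi|$ to solve the resulting scalar equation $(1 + \alpha\langle\psi|R_z(A)\psi\rangle)\langle\psi|R_z(T_\alpha) = \langle\psi|R_z(A)$ for $\langle\psi|R_z(T_\alpha)$, and then substitute back into the second resolvent identity; this is the same calculation reorganized, but may read more naturally.
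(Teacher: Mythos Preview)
Your proof is correct. The main route you take---defining $B$ to be the right-hand side and directly checking $(T_\alpha - z)B = \mathrm{id}$ and $B(T_\alpha - z) = \mathrm{id}_{D(A)}$---differs from the paper's. The paper instead derives the formula: it starts from $(T_\alpha - z)R_z(A) = \mathds{1} + \alpha|\psi\rangle\langle\psi|R_z(A)$, left-multiplies by $R_z(T_\alpha)$ to get the second resolvent identity, applies both sides to $\psi$ to solve the scalar equation for $R_z(T_\alpha)\psi$, and substitutes back. This is exactly the ``cleaner presentation'' you sketch in your last paragraph (up to applying on the right to $\psi$ rather than pairing on the left with $\langle\psi|$). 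Your direct verification is just as short and has the merit that you explicitly check the two preliminary points the paper's proof leaves implicit: self-adjointness of $T_\alpha$ via Kato--Rellich, and non-vanishing of the denominator via $\mathrm{Im}\langle\psi|R_z(A)\psi\rangle = \mathrm{Im}(z)\|R_z(A)\psi\|^2$. The derivation approach, on the other hand, does not require knowing the answer in advance.
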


We apply Lemma \ref{lemme_resolvante_rang_un} to $A = \omega^2$, $\psi = \omega^{1/2}f$ and $\alpha = 4 \lambda$, which gives the expected equation 

\begin{equation}
    R_z(\lambda, f) = \big(\omega^2 - z\big)^{-1} - \frac{4\lambda }{1 + 4\lambda \langle f | \omega(\omega^2 -z)^{-1}f\rangle}\omega^{1/2}\big(\omega^2 - z)^{-1}|f\rangle\langle f|\omega^{1/2}\big(\omega^2 - z)^{-1}. \tag{\ref{equation_resolvante}}
\end{equation}
As 
\begin{equation}
	(\omega^2 - z)^{-1} = \omega^{-2} + z \omega^{-2}(\omega^2 - z)^{-1},
\end{equation} 
we see that the right-hand side is well defined when $\omega^{-1/2}f\in \Hcal$, and we still call it $R_z(\lambda, f)$. We now use the following lemma, whose proof can be found in that of Theorem 2.17 of \cite{kato_perturbation_1995}, to show that $R_z(\lambda, f)$ is indeed the resolvent of a self-adjoint operator (once again, see \cite{derezinski_renormalization_2002} for more details).

\begin{lemma}\label{lemme_operateur_resolvante}
    Let $(R(z))_{z\in \C \setminus \R}$ be a family of bounded operators on $\Hcal$ such that 
    \begin{itemize}
        \item $\forall z \in \C \setminus \R,~ R(z)^* = R(\bar{z})$;
        \item $\forall z, w \in \C \setminus \R,~ R(z+w) - R(w) = zR(z+w)R(w)$;
        \item $\cap_z \ker R(z) = \{0\}.$
    \end{itemize} 
    Then there exists a unique self-adjoint operator whose resolvent on $\C \setminus \R$ is $(R(z))_{z\in \C \setminus \R}$.
\end{lemma}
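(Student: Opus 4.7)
The plan is to construct the self-adjoint operator $A$ directly from its putative resolvent, verify it is independent of the parameter $z$, and then invoke a basic criterion of self-adjointness. The natural formula is $A := z + R(z)^{-1}$, defined on the common range of the $R(z)$.

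First I would rewrite hypothesis (ii) in the usual form $R(\zeta) - R(w) = (\zeta - w)R(\zeta)R(w)$ (by the substitution $\zeta = z+w$). From the factorization $R(\zeta) = R(w) + (\zeta-w)R(\zeta)R(w) = [I + (\zeta-w)R(\zeta)]R(w)$, together with its symmetric counterpart, the ranges and kernels of $R(z)$ and $R(w)$ coincide; hypothesis (iii) then forces $\ker R(z) = \{0\}$ for every $z$, so $R(z)$ is a bijection from $\Hcal$ onto the common range $D := \mathrm{Range}(R(z))$, which is dense (by hypothesis (i), since $\overline{\mathrm{Range}(R(z))} = (\ker R(\bar z))^{\perp} = \Hcal$). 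For $\psi = R(z)\phi \in D$ define $A\psi := z\psi + \phi$. To check this is independent of $z$, take another $w$ and write $R(w)^{-1}\psi$ using the resolvent identity: since $R(w)\phi = R(z)\phi + (w-z)R(w)R(z)\phi = \psi + (w-z)R(w)\psi$, we get $R(w)^{-1}\psi = \phi + (z-w)\psi$, so $w\psi + R(w)^{-1}\psi = z\psi + \phi = A\psi$.

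Next I would prove $A$ is symmetric on $D$. Take $\psi = R(z)\alpha,\ \phi = R(z)\beta$; then
\begin{equation*}
\langle A\psi | \phi\rangle - \langle \psi | A\phi\rangle = (\bar z - z)\langle \psi | \phi\rangle + \langle \alpha | \phi\rangle - \langle \psi | \beta\rangle.
\end{equation*}
Using hypothesis (i) we get $\langle \alpha | \phi\rangle = \langle R(\bar z)\alpha | \beta\rangle$ and $\langle \psi | \beta\rangle = \langle R(z)\alpha | \beta\rangle$; by the resolvent identity $R(\bar z) - R(z) = (\bar z - z)R(\bar z)R(z)$, so (keeping in mind that the inner product is antilinear in the first slot, so scalars pulled out conjugate)
\begin{equation*}
\langle \alpha|\phi\rangle - \langle \psi|\beta\rangle = (z - \bar z)\langle R(\bar z)R(z)\alpha | \beta\rangle = (z - \bar z)\langle R(z)\alpha | R(z)\beta\rangle = (z-\bar z)\langle \psi | \phi\rangle,
\end{equation*}
which exactly cancels the first term. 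Hence $A$ is symmetric.

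To conclude self-adjointness, observe that by construction $(A - z)R(z) = I$ on $\Hcal$ and $R(z)(A - z) = I$ on $D$, so $A - z$ is a bijection $D \to \Hcal$ with bounded inverse $R(z)$ for every $z \in \C \setminus \R$. In particular $\mathrm{Range}(A - i) = \mathrm{Range}(A + i) = \Hcal$, so the standard criterion (e.g.\ Theorem VIII.3 of \cite{reed_i_1980}) upgrades the symmetric operator $A$ to a self-adjoint one; $A$ is automatically closed because $R(z)$ is bounded. Uniqueness is immediate: any self-adjoint $A'$ with resolvent $R$ must satisfy $A' = z + R(z)^{-1}$ on $D$, so $A'=A$ with the same domain. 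The only delicate point in the argument is keeping the conjugations straight in the symmetry computation (so that the contribution from the resolvent identity precisely cancels the $(\bar z - z)\langle \psi|\phi\rangle$ term); everything else is bookkeeping around the two hypotheses (i)–(ii).
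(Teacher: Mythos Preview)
Your proof is correct and is precisely the standard construction: the paper does not give its own argument but simply refers to the proof of Theorem~2.17 in \cite{kato_perturbation_1995}, which proceeds exactly as you do (define $A$ on the common range $D=\mathrm{Range}\,R(z)$ by $A\psi=z\psi+R(z)^{-1}\psi$, check independence of $z$ via the resolvent identity, then use symmetry together with $\mathrm{Range}(A\pm i)=\Hcal$). Your handling of the conjugations in the symmetry computation is correct, and the closedness of $A$ indeed follows from the boundedness of $R(z)$.
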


\noindent Let us check that $R_z(\lambda, f)$ satisfies the assumptions of Lemma \ref{lemme_operateur_resolvante}:
\begin{itemize}
    \item Let $z \in \C \setminus \R$, from the explicit expression of $R_z(\lambda, f)$ and the fact that $\big((\omega^2 - z)^{-1}\big)^* = (\omega^2 - \bar{z})^{-1}$, it is clear that $R_z(\lambda, f)^* = R_{\bar{z}}(\lambda, f)$.
    \item Let $z, w \in \C \setminus \R$, we know that the resolvent of $\omega^2$ satisfies 
    \begin{equation}\label{equation_egalite_resolvant_omega_carre}
        R_{z + w}(\omega^2) - R_w(\omega^2) = z R_{z + w}(\omega^2)R_w(\omega^2).
    \end{equation}
	By a direct -- though tedious -- computation, using \eqref{equation_egalite_resolvant_omega_carre}, one indeed finds the equality
    \begin{equation}
        R_{z + w}(\lambda, f) - R_{w}(\lambda, f) = zR_{z + w}(\lambda, f) R_w(\lambda, f).
    \end{equation}
    \item Let $z \in \C$, we assume that $\ker R_z(\lambda, f) \neq \{0\}$, and we take $\varphi \in \ker R_z(\lambda, f) \setminus \{0\}$. This implies that 
    \begin{equation}
    	(\omega^2 - z)^{-1}\varphi \in \Span (\omega^2 - z)^{-1}\omega^{1/2}f,
    \end{equation} 
	and as $(\omega^2 - z)^{-1}$ is injective, we know that $(\omega^2 - z)^{-1}\varphi \neq 0$ and we can simplify the expression and get
    \begin{equation}
        1 = \frac{4\lambda\langle f | \omega (\omega^2 - z)^{-1} f\rangle}{1 + 4\lambda\langle f | \omega (\omega^2 - z)^{-1} f\rangle},
    \end{equation}
    which is impossible.
\end{itemize}
Therefore, there exists a unique self-adjoint operator whose resolvent is $R_z(\lambda, f)$. As its resolvent set contains $\R_-$, this operator is positive, we call it $\xi_{\lambda}^2$, and we call $\xi_{\lambda}$ its positive square root.

\begin{remark}
    As $\xi_{\lambda}$ is now defined by the resolvent of its square, (\ref{equation_xilambda_integrale}) still holds true, which implies that $\xi_{\lambda} \geq \omega \geq 1$.
\end{remark}

\subsection{The case $\omega^{-1/2}f \notin \Hcal$ and $\omega^{-3/2}f\in \Hcal$}\label{ssection_existence_irregulier}

Here, we prove (\ref{equation_resolvante_renormalisee}) in the case $\omega^{1/2}f \in \Hcal$, and then show that the expression is well defined for less regular $f$, namely $\omega^{-3/2}f \in \Hcal$. Then, we explain why this expression is actually the resolvent of an operator that we will call $\xi_{\lambda}^2$. Let $\lambda \in \R_-$ be the renormalized coupling constant. If $\omega^{1/2}f \in \Hcal$, we set $\tilde{\lambda}^{-1} = \lambda^{-1} - 4 \|\omega^{-1/2}f\|^2 < 0$. We have
\begin{equation}
    \frac{1}{4\tilde{\lambda}} + \langle f | \omega (\omega^2 - z)^{-1}f\rangle = \frac{1}{4\lambda} -  \|\omega^{-1/2}f\|^2 + \big\langle f \big| \omega \big(\omega^{-2} + z\omega^{-2}(\omega^2 - z)^{-1}\big) f\big\rangle = \frac{1}{4\lambda} +  z\langle f |\omega^{-1}(\omega^2 - z)^{-1}f\rangle,
\end{equation}
from which we deduce
\begin{equation}
    \tilde{R}_z(\lambda, f) := R_z(\tilde{\lambda}, f) = \big(\omega^2 - z\big)^{-1} -\frac{4\lambda}{1 + 4\lambda z \langle f |\omega^{-1}(\omega^2 - z)^{-1} f\rangle}\omega^{1/2}(\omega^2 - z)^{-1}|f\rangle\langle f| \omega^{1/2}(\omega^2 - z)^{-1}.\tag{\ref{equation_resolvante_renormalisee}}
\end{equation}
The right-hand side is well defined for $\omega^{-3/2}f\in \Hcal$ and for all $\lambda \in \mathbb{R}$ and $z \in \C \setminus \R$, and we still call it $\tilde{R}_z(\lambda, f)$. Now, we briefly check the assumptions of Lemma \ref{lemme_operateur_resolvante}:
\begin{itemize}
    \item Let $z\in \C \setminus \R$, from the explicit expression of $\tilde{R}$, we find $\tilde{R}_z(\lambda, f)^* = \tilde{R}_{\bar{z}}(\lambda, f)$.
    \item Let $z, w \in \C \setminus \R$, by a direct -- though tedious -- computation using the resolvent equality \eqref{equation_egalite_resolvant_omega_carre}, we find $\tilde{R}_{z + w}(\lambda, f) - \tilde{R}_{w}(\lambda, f) = z\tilde{R}_{z + w}(\lambda, f) \tilde{R}_w(\lambda, f)$.
    \item Let $z\in \C \setminus \R$ and $\varphi \in \ker \tilde{R}_z(\lambda, f)$, then we have $(\omega^2 - z)^{-1} \varphi \in \Span \big(\omega^{1/2} (\omega^2 - z)^{-1}f\big)$. As $(\omega^2 - z)^{-1} \varphi \in D(\omega^2)$ and $\omega^{1/2} (\omega^2 - z)^{-1}f \notin D(\omega^2)$ because $f \notin D(\omega^{1/2})$, we have $(\omega^2 - z)^{-1} \varphi = 0$, and thus $\varphi = 0$.
\end{itemize}
Therefore, there exists a unique self-adjoint operator whose resolvent is $\tilde{R}_z(\lambda, f)$. As its resolvent set contains $\R_-$, this operator is positive; we call it $\xi_{\lambda}^2$, and we call $\xi_{\lambda}$ its positive square root.

\section{Proof of Theorem \ref{theorem1}}\label{section_preuve_theoreme1}

We divide the proof of the case $\omega^{-1/2}f\in \Hcal$ in three parts; first, we show in Proposition \ref{proposition_existence_trace} the existence of the formal trace defined by the integral formula (\ref{equation_trace_formelle}), then we prove in Proposition \ref{proposition_existence_transformations_bogolioubov} the existence of Bogoliubov transformations, and we finally show in Proposition \ref{proposition_convergence_diagonalisé} the convergence of the regularized model.\\

\noindent We recall that $f_n = \mathds{1}_{\omega \leq n}f$. From the definition, we get 
    \begin{equation}
        |f_n|^2 \leq n \omega^{-1}|f|^2 ~~~~~~~~\mathrm{and}~~~~~~ \omega^{2s}|f_n|^2 \leq n^{1+2s} \omega^{-1}|f|^2,
    \end{equation}
    which shows that $f_n$ is well defined in $\Hcal$ and that $\omega^{s}_nf \in \Hcal$ for all $s\in \mathbb{R}_+$. Moreover, it is clear that $\omega^{-1/2}f_n \to \omega^{-1/2}f$ in $\Hcal$ and that $|f_n|^2 \leq |f|^2$.

\subsection{Existence of the trace}\label{subsection_trace}
In this section, we show that even if $\xi_{\lambda} - h_{\lambda}$ is not necessarily trace-class, it can be approximated by a sequence of trace-class operators whose traces converge provided that $\omega^{-1/4}f\in \Hcal$.

\begin{proposition}[Existence of the formal trace]\label{proposition_existence_trace}~\\
    If $\omega^{1/2}f \in \Hcal$,  $\xi_{\lambda} - h_{\lambda}$ is trace-class, and we have
    \begin{equation}\label{equation69}
        \tr\big(\xi_{\lambda} - h_{\lambda}\big) = \frac{-32\lambda^2}{\pi} \int_0^{+\infty} \frac{t^2\|R_{-t^2}(\omega^2)\omega^{1/2}f\|^2 \big\langle f\big| R_{-t^2}(\omega^2)\omega f\big\rangle }{1 + 4\lambda\big\langle f | R_{-t^2}(\omega^2) \omega f \big\rangle}\d t.
    \end{equation}
    The right-hand side of $(\ref{equation69})$ is well defined even when we only have $\omega^{-1/4}f \in \Hcal$, and we set
    \begin{equation}
            \tr_{\infty}\big(\xi_{\lambda} - h_{\lambda}\big) := \frac{-32\lambda^2}{\pi} \int_0^{+\infty} \frac{t^2\|R_{-t^2}(\omega^2)\omega^{1/2}f\|^2 \big\langle f\big| R_{-t^2}(\omega^2)\omega f\big\rangle }{1 + 4\lambda\big\langle f | R_{-t^2}(\omega^2) \omega f \big\rangle}\d t.
        \end{equation}
    Let $\xi_{\lambda, n} = \big(\omega^2 + 4\lambda \omega^{1/2}|f_n\rangle\langle f_n | \omega^{1/2}\big)^{1/2}$ and $h_{\lambda,n} = \omega + 2\lambda |f_n\rangle\langle f_n|$, then the trace of $\xi_{\lambda, n} - h_{\lambda, n}$ converges
    \begin{equation}\label{equation_convergence_trace}
        \tr\big(\xi_{\lambda,n} - h_{\lambda, n}\big) \to \tr_{\infty}\big(\xi_{\lambda} - h_{\lambda}\big).
    \end{equation}
\end{proposition}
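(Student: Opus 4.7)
The proposition contains three assertions, and the proof splits naturally into three steps. Throughout, write $a(t) := \langle f|\omega R_{-t^2}(\omega^2)f\rangle$ and $b(t) := \|R_{-t^2}(\omega^2)\omega^{1/2}f\|^2$. For the first assertion (trace formula when $\omega^{1/2}f\in\Hcal$), I apply Lemma \ref{lemme_formules_integrales_racine_perturbation} with $A = \omega^2 \geq 1$, $\psi = \omega^{1/2}f \in \Hcal$, and $\alpha = 4\lambda$: this gives that $\xi_\lambda - \omega$ is trace class with $\tr(\xi_\lambda - \omega) = \frac{8\lambda}{\pi}\int_0^\infty\frac{t^2 b(t)}{1+4\lambda a(t)}\,\mathrm{d}t$. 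Since $f\in\Hcal$, the rank-one piece $2\lambda|f\rangle\langle f|$ has trace $2\lambda\|f\|^2$, which by the identity $\int_0^\infty t^2/(\omega^2+t^2)^2\,\mathrm{d}t = \pi/(4\omega)$ also equals $\frac{8\lambda}{\pi}\int_0^\infty t^2 b(t)\,\mathrm{d}t$. Subtracting and rewriting $\frac{1}{1+X}-1 = \frac{-X}{1+X}$ with $X = 4\lambda a(t)$ yields (\ref{equation69}).

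For the second assertion (convergence of the right-hand side under $\omega^{-1/4}f\in\Hcal$), I use $\lambda \geq 0$ and $a(t)\geq 0$ to bound $\frac{a(t)}{1+4\lambda a(t)} \leq a(t)$, reducing the claim to showing $\int_0^\infty t^2 b(t) a(t)\,\mathrm{d}t < \infty$. Writing $a$ and $b$ through the push-forward measure $\mathrm{d}\mu_f := |f|^2\,\mathrm{d}\mu$ on the spectrum of $\omega$ and applying Fubini, the key step is the residue evaluation $\int_0^\infty \frac{t^2\,\mathrm{d}t}{(\omega_1^2+t^2)^2(\omega_2^2+t^2)} = \frac{\pi}{4\omega_1(\omega_1+\omega_2)^2}$, followed by a symmetrization $\omega_1\leftrightarrow\omega_2$ that produces a denominator independent of which variable is squared:
\begin{equation*}
\int_0^\infty t^2 b(t) a(t)\,\mathrm{d}t \;=\; \frac{\pi}{8}\int\!\!\int\frac{\mathrm{d}\mu_f(\omega_1)\,\mathrm{d}\mu_f(\omega_2)}{\omega_1+\omega_2} \;\leq\; \frac{\pi}{16}\Big(\int\omega^{-1/2}\,\mathrm{d}\mu_f\Big)^{\!2} \;=\; \frac{\pi}{16}\|\omega^{-1/4}f\|^4,
\end{equation*}
the inequality being AM-GM in the form $\omega_1+\omega_2 \geq 2\sqrt{\omega_1\omega_2}$.

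For the third assertion, the regularization $f_n = \mathds{1}_{\omega\leq n}f$ satisfies $\omega^{1/2}f_n \in \Hcal$, so Step 1 applies to $f_n$ and produces the formula (\ref{equation69}) for $\tr(\xi_{\lambda,n}-h_{\lambda,n})$ with associated $a_n, b_n$. Since $|f_n|\leq|f|$ pointwise and $f_n\to f$ pointwise, monotone convergence yields $a_n(t)\nearrow a(t)$ and $b_n(t)\nearrow b(t)$ for every $t$, so the integrand for $f_n$ is dominated by $t^2 b(t) a(t)$, which is integrable by Step 2; dominated convergence concludes (\ref{equation_convergence_trace}). The main obstacle is Step 2: the pointwise estimates $a(t), t^2 b(t) = O(t^{-1/2})$ only yield a non-integrable $O(t^{-1})$ bound on the integrand, so integrability cannot be established fibrewise in $t$. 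One must exchange orders of integration and rely on AM-GM to extract precisely the norm $\|\omega^{-1/4}f\|^2$, which is the sharp threshold matching the borderline case seen in the model example $\omega(k)=|k|$, $f(k) = |k|^\alpha$.
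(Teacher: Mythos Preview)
Your proof is correct, and Steps~1 and~3 coincide with the paper's argument essentially verbatim. The only real difference is in Step~2. The paper does not compute the triple integral exactly; instead it bounds one factor pointwise in $t$ via the inequality
\[
t^2\,b(t)=\int\frac{t^2\omega}{(\omega^2+t^2)^2}\,\mathrm{d}\mu_f
\;\le\; t^{-1/2}\int\omega^{-1/2}\,\mathrm{d}\mu_f
\;=\;t^{-1/2}\|\omega^{-1/4}f\|^2,
\]
which follows from $t^{1/2}\omega^{3/2}=(t^2)^{1/4}(\omega^2)^{3/4}\le\omega^2+t^2$, and then applies Fubini only to the remaining factor, using $\int_0^\infty t^{-1/2}\omega(\omega^2+t^2)^{-1}\,\mathrm{d}t=\tfrac{\pi}{\sqrt2}\,\omega^{-1/2}$ to obtain the bound $\tfrac{\pi}{\sqrt2}\|\omega^{-1/4}f\|^4$. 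Your route---full Fubini, the residue evaluation $\int_0^\infty t^2(\omega_1^2+t^2)^{-2}(\omega_2^2+t^2)^{-1}\,\mathrm{d}t=\tfrac{\pi}{4\omega_1(\omega_1+\omega_2)^2}$, symmetrization to $\tfrac{\pi}{8}\iint(\omega_1+\omega_2)^{-1}\,\mathrm{d}\mu_f\,\mathrm{d}\mu_f$, and AM--GM---yields the sharper constant $\tfrac{\pi}{16}$ and the pleasant exact identity for $\int_0^\infty t^2 b(t)a(t)\,\mathrm{d}t$, at the cost of a residue computation the paper avoids. Your remark that ``integrability cannot be established fibrewise in $t$'' is slightly too strong: the paper's hybrid approach shows that bounding \emph{one} of the two factors pointwise in $t$ does suffice, provided one still integrates the other factor spectrally.
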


\begin{proof} First, we prove (\ref{equation69}). Assuming $\omega^{1/2}f \ \Hcal$, and applying (\ref{equation_trace_racine_carree_perturbation}) of Lemma \ref{lemme_formules_integrales_racine_perturbation} to $A = \omega^2$, $\psi = \omega^{1/2}f$ and $\alpha = 4\lambda$, we get
    \begin{equation}
        \tr\big(\xi_{\lambda} - \omega\big) = \frac{8\lambda}{\pi}\int_0^{+\infty} \frac{t^2\big\|  R_{-t^2}(\omega^2)\omega^{1/2}f\big\|^2}{1 + 4\lambda\big\langle f |  R_{-t^2}(\omega^2)\omega f\big\rangle} \d t.
    \end{equation}
    As $\int_0^{+\infty} \frac{at^2}{(a^2 + t^2)^2}\d t = \frac{\pi}{4}$ for all $a\in \mathbb{R}_+$, we have
    \begin{equation}
        \frac{8\lambda}{\pi}\int_0^{+\infty} t^2\big\|  R_{-t^2}(\omega^2)\omega^{1/2}f\big\|^2 \d t = \Big\langle f \Big| \frac{8\lambda}{\pi}\int_0^{+\infty} t^2  R_{-t^2}(\omega^2)^2\omega \d t \big|f\Big\rangle = 2\lambda\|f\|^2 = 2\lambda \tr\big(|f\rangle\langle f|\big).
    \end{equation}
    We recall that $\hl = \omega + 2\lambda |f\rangle\langle f|$ and deduce that 
    \begin{equation}\label{equation74}
        \tr\big(\xi_{\lambda} - h_{\lambda}\big) = \frac{-32\lambda^2}{\pi} \int_0^{+\infty} \frac{t^2\|R_{-t^2}(\omega^2)\omega^{1/2}f\|^2 \big\langle f\big| R_{-t^2}(\omega^2) \omega f\big\rangle }{1 + 4\lambda\big\langle f | R_{-t^2}(\omega^2) \omega f \big\rangle}\d t.
    \end{equation}

    If $\omega^{1/2}f \notin \Hcal$ but $\omega^{-1/4}f\in \Hcal$, the integral in (\ref{equation74}) is still well defined. Indeed, the integrand is positive, and as $\int_0^{+\infty} \frac{a}{\sqrt{t}(a^2 +t^2)}\d t = \frac{\pi}{\sqrt{2}}a^{-1/2}$,
    \begin{align}
        \int_0^{+\infty} \frac{t^2\|R_{-t^2}(\omega^2)\omega^{1/2}f\|^2 \big\langle f\big| R_{-t^2}(\omega^2) \omega f\big\rangle }{1 + 4\lambda\big\langle f | R_{-t^2}(\omega^2) \omega f \big\rangle}\d t &\leq \int_0^{+\infty} t^2\|R_{-t^2}(\omega^2)\omega^{1/2}f\|^2 \big\langle f\big| R_{-t^2}(\omega^2) \omega f\big\rangle \d t \notag\\
        &\leq \int_0^{+\infty} t^{-1/2}\|\omega^{-1/4}f\|^2 \big\langle f\big| R_{-t^2}(\omega^2) \omega f\big\rangle \d t \notag\\
        &= \|\omega^{-1/4}f\|^2 \Big\langle f\Big| \int_0^{+\infty} t^{-1/2} R_{-t^2}(\omega^2)\d t \omega \Big|f\Big\rangle \notag\\
        &= \|\omega^{-1/4}f\|^2 \Big\langle f\Big| \frac{\pi}{\sqrt{2}} \omega^{-1/2} \Big|f\Big\rangle 
        = \frac{\pi}{\sqrt{2}}\|\omega^{-1/4}f\|^4.
    \end{align}

    Finally, we prove the convergence (\ref{equation_convergence_trace}). As $\omega \geq 1$ and $|f_n|^2 \nearrow |f|^2$, we have
    \begin{equation}
        \|R_{-t^2}(\omega^2)\omega^{1/2}f_n\|^2 \to \|R_{-t^2}(\omega^2)\omega^{1/2}f\|^2 ~~\mathrm{and}~~\big\langle f_n\big| R_{-t^2}(\omega^2) \omega f_n\big\rangle \to \big\langle f\big| R_{-t^2}(\omega^2) \omega f\big\rangle.
    \end{equation}
    Moreover, we have 
    \begin{equation}
        \frac{t^2\|R_{-t^2}(\omega^2)\omega^{1/2}f_n\|^2 \big\langle f_n\big| R_{-t^2}(\omega^2) \omega f_n\big\rangle }{1 + 4\lambda\big\langle f_n | R_{-t^2}(\omega^2) \omega f_n\big\rangle} \leq t^2\|R_{-t^2}(\omega^2)\omega^{1/2}f\|^2 \big\langle f\big| R_{-t^2}(\omega^2) \omega f\big\rangle.
    \end{equation}
    As the right-hand side is integrable, by Lebesgue's dominated convergence theorem, we get the expected convergence (\ref{equation_convergence_trace}), which concludes the proof of Proposition \ref{proposition_existence_trace}.
\end{proof}
\subsection{Existence of Bogoliubov transformations}
Here, we prove the existence of Bogoliubov transformations that allow us to diagonalize our Hamiltonian.
\begin{proposition}[Existence of Bogoliubov transformations]\label{proposition_existence_transformations_bogolioubov}
    Assume $\omega^{-1/2}f\in \Hcal$. We define $\xi_{\lambda}$ as in Section \ref{ssection_existence_regulier}, $U_{\lambda}$ and $V_{\lambda}$ by $(\ref{equation_definition_lemme_diagonalisation})$, and $\Vcal_{\lambda}$ by $(\ref{equation_definition_lemme_diagonalisation2})$. Then, $\Vcal_{\lambda} \in \Gcal$ i.e. $U_{\lambda}$ and $V_{\lambda}$ are implemented by a Bogoliubov transformation.
\end{proposition}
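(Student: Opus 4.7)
To establish $\Vcal_\lambda\in\Gcal$, I must verify two separate conditions: the algebraic symplectic relations $\Vcal_\lambda^*\Scal\Vcal_\lambda=\Vcal_\lambda\Scal\Vcal_\lambda^*=\Scal$, and Shale's trace condition $\tr(V_\lambda V_\lambda^*)<+\infty$. The plan is to dispatch the symplectic identities by direct algebra using the explicit formulas of Lemma \ref{lemme_diagonalisation}, then concentrate on Shale's condition, which is where the hypothesis $\omega^{-1/2}f\in\Hcal$ is actually used.

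For the symplectic part, since $f$ is real, $\mathfrak{c}$ commutes with $\omega$ and with $|f\rangle\langle f|$, and hence by functional calculus with $\xi_\lambda$. Setting $A=\omega^{1/2}\xi_\lambda^{-1/2}$ and $B=\omega^{-1/2}\xi_\lambda^{1/2}$, one has $U_\lambda=(A+B)/2$ and $V_\lambda J=\mathfrak{c}(A-B)/2$. The key cancellation $AB^*=BA^*=A^*B=B^*A=1$, which comes from the commutative functional calculus of $\omega$ and $\xi_\lambda$, yields in particular $U_\lambda U_\lambda^*-V_\lambda V_\lambda^*=1$. The three companion relations contained in $\Vcal_\lambda^*\Scal\Vcal_\lambda=\Vcal_\lambda\Scal\Vcal_\lambda^*=\Scal$ are then obtained by the same bookkeeping together with $\mathfrak{c}^2=1$ and $JJ^*=J^*J=1$.

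For Shale's condition, I will start from the already-established identity (\ref{equation_VVetoile}),
\begin{equation*}
    4\,V_\lambda V_\lambda^*=\bigl(\omega^{-1/2}\xi_\lambda\omega^{-1/2}-1\bigr)+\bigl(\omega^{1/2}\xi_\lambda^{-1}\omega^{1/2}-1\bigr),
\end{equation*}
in which the first summand is nonnegative and the second nonpositive (since $\xi_\lambda\geq\omega\geq 1$). Applying Lemma \ref{lemme_formules_integrales_racine_perturbation} with $A=\omega^2$, $\psi=\omega^{1/2}f$, $\alpha=4\lambda$, and then conjugating by $\omega^{\mp 1/2}$, each summand acquires an integral representation as a positive operator of the form $\frac{8\lambda}{\pi}\int_0^{+\infty}(\cdots)/\bigl(1+4\lambda\langle f|\omega R_{-t^2}(\omega^2)f\rangle\bigr)\,\d t$, with numerators $t^2\,|R_{-t^2}(\omega^2)f\rangle\langle R_{-t^2}(\omega^2)f|$ and $|\omega^{1/2}R_{-t^2}(\omega^2)\omega^{1/2}f\rangle\langle\omega^{1/2}R_{-t^2}(\omega^2)\omega^{1/2}f|$ respectively. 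Taking traces in an orthonormal basis, using Tonelli to interchange sum and integral, and bounding the denominator below by $1$ thanks to $\lambda\geq 0$, the two scalar identities $\int_0^{+\infty}t^2(a^2+t^2)^{-2}\d t=\int_0^{+\infty}a^2(a^2+t^2)^{-2}\d t=\pi/(4a)$ applied via functional calculus in $\omega$ collapse both traces to values $\leq 2\lambda\|\omega^{-1/2}f\|^2<+\infty$. This finishes Shale's condition.

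The obstacle I foresee is not any of these estimates in isolation but the interface between them: the integral formulas of Lemma \ref{lemme_formules_integrales_racine_perturbation} were established under the pointwise assumption that the perturbation vector $\omega^{1/2}f$ lies in $\Hcal$, whereas here I only have $\omega^{-1/2}f\in\Hcal$. The bridge is precisely the remark at the end of Section \ref{ssection_existence_regulier}: since $\xi_\lambda$ is defined via the resolvent of its square and the integral in (\ref{equation_xilambda_integrale}) is norm-convergent under the weaker hypothesis, one should justify the integral representations by monotone approximation of $f$ by the truncations $f_n=\mathds{1}_{\omega\leq n}f$ (for which Lemma \ref{lemme_formules_integrales_racine_perturbation} applies directly), and then pass to the limit in both sides, checking that all limits commute. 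This is also what explains why $\omega^{-1/2}f\in\Hcal$ is the sharp regularity condition, delivering simultaneously the finiteness of the $t$-integrals above and the existence of the Bogoliubov transformation.
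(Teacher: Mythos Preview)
Your proposal is correct and follows essentially the same route as the paper: decompose $4V_\lambda V_\lambda^*$ via (\ref{equation_VVetoile}), apply the integral formulas of Lemma~\ref{lemme_formules_integrales_racine_perturbation} to each piece, and bound both resulting traces by $2\lambda\|\omega^{-1/2}f\|^2$ using the elementary identities $\int_0^\infty t^2(a^2+t^2)^{-2}\d t=\int_0^\infty a^2(a^2+t^2)^{-2}\d t=\pi/(4a)$. The paper simply asserts the symplectic relations (they are implicit in Lemma~\ref{lemme_diagonalisation}), whereas you spell them out; and where you propose a monotone approximation $f_n\nearrow f$ to justify the integral formulas beyond the hypothesis $\omega^{1/2}f\in\Hcal$, the paper argues directly that since $\xi_\lambda$ is defined through the resolvent of its square, the formulas of Lemma~\ref{lemme_formules_integrales_racine_perturbation} apply verbatim.

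One small correction: your justification ``commutative functional calculus of $\omega$ and $\xi_\lambda$'' is misleading, since $\omega$ and $\xi_\lambda$ do \emph{not} commute in general. The identities $AB^*=BA^*=A^*B=B^*A=1$ with $A=\omega^{1/2}\xi_\lambda^{-1/2}$, $B=\omega^{-1/2}\xi_\lambda^{1/2}$ hold simply because in each product the adjacent factors cancel (e.g.\ $AB^*=\omega^{1/2}\xi_\lambda^{-1/2}\xi_\lambda^{1/2}\omega^{-1/2}=1$); no commutation between $\omega$ and $\xi_\lambda$ is needed or available.
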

\begin{proof}
It is clear that $\Vcal_{\lambda}^* \Scal \Vcal_{\lambda} = \Vcal_{\lambda} \Scal \Vcal_{\lambda}^* = \Scal$. Thanks to Proposition \ref{proposition_existence_transformation_bogolioubov}, this implies that there exist Bogoliubov transformations associated with $\Vcal_{\lambda}$ if and only if $\tr(V_{\lambda}V_{\lambda}^*) < + \infty$. \\
As already used in Section \ref{section_preuve_proposition}, we have 
\begin{equation}
    4V_{\lambda} V_{\lambda}^* = \omega^{-1/2} \xi_{\lambda} \omega^{-1/2} - 1 + \omega^{1/2} \xi_{\lambda}^{-1} \omega^{1/2} - 1.\tag{\ref{equation_VVetoile}}
\end{equation}
Moreover, as $\xi_{\lambda}$ is now defined by the resolvent of its square, we can still use Lemma \ref{lemme_formules_integrales_racine_perturbation}. Thus, we find, as in $(\ref{equation_xilambda_integrale})$,
\begin{equation}\label{equation_valeur_integrale_xilambda}
    \omega^{-1/2}\xi_{\lambda} \omega^{-1/2} - 1 = \frac{8\lambda}{\pi}\int_0^{+\infty}\frac{t^2 |R_{-t^2}(\omega^2)f\rangle \langle R_{-t^2}(\omega^2)f|}{1 + 4\lambda \langle f | \omega R_{-t^2 }(\omega^2) f\rangle} \d t,
\end{equation}
which is still a self-adjoint positive operator. As it is positive, we can compute the trace a priori. Let $(u_j)$ be an orthonormal basis of $\Hcal$, we have 
\begin{align}\label{equation_trace_bornee_1}
    \sum_{j = 1 }^{+\infty} \langle u_j| \big(\omega^{-1/2}\xi_{\lambda}\omega^{-1/2} - 1 \big) |u_j\rangle &= \frac{8\lambda}{\pi}\int_0^{+\infty} \frac{t^2}{1 + 4\lambda \langle f | \omega R_{-t^2 }(\omega^2) f\rangle}\sum_{j=1}^{+\infty} \big|\langle u_j | R_{-t^2 }(\omega^2) f\rangle \big|^2 \d t \notag\\
    &= \frac{8\lambda}{\pi}\int_0^{+\infty} \frac{t^2 \|R_{-t^2}(\omega^2) f\|^2}{1 + 4\lambda \langle f | \omega R_{-t^2 }(\omega^2) f\rangle}\d t\notag\\
    &\leq \frac{8\lambda}{\pi}\int_0^{+\infty} t^2 \|R_{-t^2}(\omega^2) f\|^2\d t = \frac{8\lambda}{\pi} \Big\langle f\Big|\frac{\pi}{4}\omega^{-1}f\Big\rangle = 2\lambda\|\omega^{-1/2}f\|^2.
\end{align}
On the other hand, we have
\begin{equation}\label{equation_valeur_integrale_xilambda_inverse}
    \omega^{1/2} \xi_{\lambda}^{-1} \omega^{1/2} - 1 = \frac{-8\lambda}{\pi} \int_0^{+\infty} \frac{|R_{-t^2}(\omega^2)\omega f\rangle \langle R_{-t^2}(\omega^2) \omega f|}{1 + 4 \lambda \langle f |\omega R_{-t^2}(\omega^2) f\rangle}\d t
\end{equation}
which is a self-adjoint negative operator, and 
\begin{equation}\label{equation_trace_bornee_2}
    \sum_{j = 1}^{+\infty}\langle u_j | \big(1 - \omega^{1/2} \xi_{\lambda}^{-1} \omega^{1/2}\big)|u_j\rangle \leq \frac{8\lambda}{\pi} \int_0^{+\infty} \|R_{-t^2}(\omega^2)\omega f\|^2 \d t = \frac{8\lambda}{\pi}\Big\langle f \Big| \frac{\pi}{4}\omega^{-1} \Big|f\Big\rangle = 2\lambda \|\omega^{-1/2}f\|^2.
\end{equation}
From (\ref{equation_VVetoile}), (\ref{equation_trace_bornee_1}) and (\ref{equation_trace_bornee_2}), we can then deduce that $V_{\lambda} V_{\lambda}^*$ is indeed trace-class, which means that we have indeed $\Vcal_{\lambda} \in \Gcal$.
\end{proof}

\subsection{Convergence of the regularized Hamiltonian}
Now that we have discussed the convergence of the traces and the existence of Bogoliubov transformations in the limit, we can prove the convergence of the regularized Hamiltonian.

\begin{proposition}[Convergence of the regularized Hamiltonian] \label{proposition_convergence_diagonalisé}
    Assume $\omega^{-1/2}f\in \Hcal$. We can choose a Bogoliubov transformation $\mathbb{U}_{\lambda, n}$ such that we have the following convergence 
    \begin{equation}
        \mathbb{U}_{\lambda, n}\dg(\xi_{\lambda, n})\mathbb{U}_{\lambda,n}^* \to \mathbb{U}_{\lambda}\dg(\xi_{\lambda})\mathbb{U}_{\lambda}^*
    \end{equation}
    in the strong resolvent sense.
\end{proposition}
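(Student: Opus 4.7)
The plan is to split the statement into two independent convergences: a strong resolvent convergence of the diagonalized Hamiltonian $\dg(\xi_{\lambda,n})\to \dg(\xi_{\lambda})$ on $\Fcal$, and a strong convergence $\mathbb{U}_{\lambda,n}\to \mathbb{U}_{\lambda}$ after an appropriate choice of phase. These two ingredients combine because unitaries are uniformly bounded: if $\mathbb{U}_{\lambda,n}\to \mathbb{U}_{\lambda}$ strongly and $(\dg(\xi_{\lambda,n})-z)^{-1}\to (\dg(\xi_{\lambda})-z)^{-1}$ strongly, then $\mathbb{U}_{\lambda,n}(\dg(\xi_{\lambda,n})-z)^{-1}\mathbb{U}_{\lambda,n}^{*}$ converges strongly, and it equals $(\mathbb{U}_{\lambda,n}\dg(\xi_{\lambda,n})\mathbb{U}_{\lambda,n}^{*}-z)^{-1}$.

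For the first ingredient I would start from the explicit resolvent formula (\ref{equation_resolvante}). Writing $(\omega^2-z)^{-1}=\omega^{-2}+z\omega^{-2}(\omega^2-z)^{-1}$, each $f$-dependent factor can be expressed through $\omega^{-1/2}f$. Since $\omega^{-1/2}f_n\to\omega^{-1/2}f$ in $\Hcal$ and $|f_n|\le|f|$, dominated convergence yields convergence of $\langle f_n|\omega(\omega^2-z)^{-1}f_n\rangle$ to its limit and operator-norm convergence of the rank-one piece, so $R_{z}(\lambda,f_n)\to R_{z}(\lambda,f)$ in norm. This is norm resolvent convergence $\xi_{\lambda,n}^{2}\to\xi_{\lambda}^{2}$; applying Lemma \ref{lemme_formules_integrales_racine_perturbation} to the pairs $(\omega^2,\omega^{1/2}f_n)$ and $(\omega^2,\omega^{1/2}f)$ and comparing the two integral representations of the square root, one promotes this to strong resolvent convergence $\xi_{\lambda,n}\to\xi_{\lambda}$. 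Finally, for operators bounded below by $1$, strong resolvent convergence lifts to the second quantization $\dg(\xi_{\lambda,n})\to\dg(\xi_{\lambda})$ sector by sector on $\Fcal$.

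For the second ingredient I would reuse the integral expressions (\ref{equation_valeur_integrale_xilambda}) and (\ref{equation_valeur_integrale_xilambda_inverse}) that drove Proposition \ref{proposition_existence_transformations_bogolioubov}. The same dominated-convergence argument, with the integrand bounded by the trace-class integrand associated with $f$, gives $V_{\lambda,n}V_{\lambda,n}^{*}\to V_{\lambda}V_{\lambda}^{*}$ in trace norm, hence $V_{\lambda,n}\to V_{\lambda}$ in Hilbert-Schmidt norm, together with $U_{\lambda,n}\to U_{\lambda}$ strongly. A Bogoliubov transformation is determined by $(U,V)$ only up to a unit complex number, so one has to align phases: $\mathbb{U}_{\lambda,n}\Omega$ is a quasi-free state whose explicit expression depends continuously (in $\Fcal$-norm) on the Hilbert-Schmidt pair $(U_{\lambda,n},V_{\lambda,n})$ and on an overall scalar factor, which I fix by requiring $\langle\Omega|\mathbb{U}_{\lambda,n}\Omega\rangle\ge 0$. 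This gives $\mathbb{U}_{\lambda,n}\Omega\to \mathbb{U}_{\lambda}\Omega$ in $\Fcal$. Strong convergence on all of $\Fcal$ follows by density: vectors of the form $a^{*}(g_1)\cdots a^{*}(g_k)\Omega$ are total, and by the intertwining relation $\mathbb{U}_{\lambda,n}^{*}a^{*}(g)\mathbb{U}_{\lambda,n}=a^{*}(U_{\lambda,n}g)+a(V_{\lambda,n}Jg)$ one propagates convergence from $\Omega$ to such states.

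The main obstacle is the phase alignment for the Bogoliubov transformations: since $\mathbb{U}_{\lambda,n}$ is defined only up to a unit scalar, strong convergence is not stable under an arbitrary phase drift, and one must make a canonical choice. The fact that $V_{\lambda,n}V_{\lambda,n}^{*}$ converges in trace norm is crucial here, as it controls the ``Gaussian normalization'' appearing in the coherent-state representation of $\mathbb{U}_{\lambda,n}\Omega$ and makes the sign condition $\langle\Omega|\mathbb{U}_{\lambda,n}\Omega\rangle\ge 0$ a legitimate choice for all large $n$.
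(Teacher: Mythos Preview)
Your overall architecture is exactly that of the paper: prove resolvent convergence of $\dg(\xi_{\lambda,n})$ and strong convergence of suitably phase-aligned $\mathbb{U}_{\lambda,n}$, then combine using uniform boundedness of unitaries. The resolvent argument for $\xi_{\lambda,n}^2$ via the explicit rank-one formula is also the paper's route.

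A few points where your execution differs from the paper and where you should be careful. First, the paper actually obtains \emph{norm} resolvent convergence $\dg(\xi_{\lambda,n})\to\dg(\xi_{\lambda})$ via a direct estimate on $\dg(A_n)^{-1}-\dg(A)^{-1}$ (Lemma~\ref{lemme_convergence_norme_seconde_quantification}), not just sector-by-sector strong convergence; your weaker claim suffices here but the stronger one is available. Second, your inference ``$V_{\lambda,n}V_{\lambda,n}^{*}\to V_{\lambda}V_{\lambda}^{*}$ in trace norm, hence $V_{\lambda,n}\to V_{\lambda}$ in Hilbert--Schmidt'' is not valid in general (replace $V$ by $e^{i\theta_n}V$); you need in addition the strong convergence $V_{\lambda,n}\to V_{\lambda}$ coming from the explicit integral formulas for $\omega^{\pm 1/2}\xi_{\lambda,n}^{\mp 1/2}$, which the paper obtains via the fourth-root identities (\ref{equation_racine_quatrieme_perturbation})--(\ref{equation_inverse_racine_quatrieme_perturbation}) in Lemma~\ref{lemme_convergence_VU}. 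Third, your phase choice $\langle\Omega|\mathbb{U}_{\lambda,n}\Omega\rangle\ge 0$ is a legitimate alternative to the paper's $\langle\mathbb{U}_{\lambda}\Omega|\mathbb{U}_{\lambda,n}\Omega\rangle\in\R_+$, but your appeal to ``continuity of the explicit quasi-free expression'' is a black box; the paper instead shows $\mathbb{U}_{\lambda,n}\Omega\rightharpoonup\Psi$, proves $\Psi\in\mathrm{span}\,\mathbb{U}_{\lambda}\Omega$ via $a(g)\mathbb{U}_{\lambda}^*\Psi=0$, and upgrades to strong convergence by matching norms. Finally, in the propagation step you need $\mathbb{U}_{\lambda,n}a^{*}(g)\mathbb{U}_{\lambda,n}^{*}$ (not $\mathbb{U}_{\lambda,n}^{*}a^{*}(g)\mathbb{U}_{\lambda,n}$), and you must control $(\Ncal+1)^{k/2}\mathbb{U}_{\lambda,n}\Omega$ uniformly in $n$ to absorb the unbounded creation operators; the paper invokes the moment bounds of \cite{beyond_bogoliubov} for this.
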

To prove Proposition \ref{proposition_convergence_diagonalisé}, we need four steps. First, we show in Lemma \ref{lemme_convergence_norme_seconde_quantification} that the convergence of $\dg(\xi_{\lambda, n})$ is implied by that of $\xi_{\lambda, n}$ and in Lemma \ref{lemme_convergence_xil} we show that $\xi_{\lambda, n}$ converges. Then, we prove in Lemma \ref{lemme_convergence_transformations} that the unitary operators $\mathbb{U}_{\lambda, n}$ converge under certain assumptions that we check in Lemma \ref{lemme_convergence_VU}.

\begin{lemma}[Convergence of a second quantization]\label{lemme_convergence_norme_seconde_quantification}
    We set $\varepsilon > 0$. If $(A_n)$ is a sequence of self-adjoint operators acting on $\Hcal$, satisfying $(-\infty, \varepsilon) \subset \rho(A_n)$ for all $n\in \N$ and converging to the self-adjoint operator $A$ in the norm resolvent sense, then $\dg(A_n)$ converges to $\dg(A)$ in the norm resolvent sense.
\end{lemma}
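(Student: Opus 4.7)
The plan is to pass from norm resolvent convergence of $A_n\to A$ to norm convergence of the heat semigroups $e^{-tA_n}\to e^{-tA}$, to lift this to the Fock space via the identity $e^{-t\dg(A)}=\Gamma(e^{-tA})$ valid for $A\geq 0$, and finally to deduce norm resolvent convergence of $\dg(A_n)$ from a Laplace transform representation of the resolvent.

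First, since $A_n$ is self-adjoint and $(-\infty,\varepsilon)\subset\rho(A_n)$, we have $\sigma(A_n)\subset[\varepsilon,+\infty)$, and the inclusion $\sigma(A)\subset[\varepsilon,+\infty)$ passes to the limit under norm resolvent convergence. For any fixed $t>0$ the function $x\mapsto e^{-tx}$ is continuous on $\R$ and vanishes at $+\infty$, so the classical result that norm resolvent convergence implies norm convergence $f(A_n)\to f(A)$ for every $f\in C_0(\R)$ yields $e^{-tA_n}\to e^{-tA}$ in operator norm, with the uniform bound $\|e^{-tA_n}\|,\|e^{-tA}\|\leq e^{-t\varepsilon}<1$.

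Writing $B_n=e^{-tA_n}$ and $B=e^{-tA}$, on the $k$-particle sector of $\Fcal$ the operator $\Gamma(B)$ acts as $B^{\otimes k}$, and the telescoping identity
$$B_n^{\otimes k}-B^{\otimes k}=\sum_{j=1}^{k}B_n^{\otimes(j-1)}\otimes(B_n-B)\otimes B^{\otimes(k-j)}$$
yields $\|B_n^{\otimes k}-B^{\otimes k}\|\leq k\,e^{-t\varepsilon(k-1)}\|B_n-B\|$. The main obstacle is that $\Fcal$ is an infinite direct sum, so the estimate must be uniform in $k$; this is precisely where the spectral gap $\varepsilon>0$ is used, since it ensures $\sup_{k\geq 1}k\,e^{-t\varepsilon(k-1)}=:C(t)<+\infty$. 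Consequently $\|\Gamma(B_n)-\Gamma(B)\|\leq C(t)\|B_n-B\|\to 0$, so $e^{-t\dg(A_n)}\to e^{-t\dg(A)}$ in operator norm for every $t>0$.

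Finally, since $\dg(A_n),\dg(A)\geq 0$, for any $z$ with $\mathrm{Re}(z)<0$ the Laplace representation
$$\big(\dg(A_n)-z\big)^{-1}=\int_0^{+\infty}e^{tz}\,e^{-t\dg(A_n)}\d t$$
holds, and likewise for $\dg(A)$. Subtracting, the integrand is pointwise norm-convergent to zero by the previous step and dominated by the integrable function $2e^{t\mathrm{Re}(z)}$, so Lebesgue's dominated convergence theorem gives norm convergence of the resolvents at this $z$, which yields the announced norm resolvent convergence.
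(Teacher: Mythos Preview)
Your proof is correct and takes a genuinely different route from the paper. The paper works directly with resolvents: on each $N$-particle sector it writes $\dg(A_n)^{-1}-\dg(A)^{-1}=\dg(A_n)^{-1}\big(\dg(A)-\dg(A_n)\big)\dg(A)^{-1}$, expands the middle as $\sum_j(A_{\infty,j}-A_{n,j})$, rewrites each summand as $A_{n,j}D_{n,j}A_{\infty,j}$ with $D_n=A_n^{-1}-A^{-1}$, and then uses Cauchy--Schwarz together with the operator inequality $\big(\sum_j A_{n,j}\big)^2\geq\sum_j A_{n,j}^2$ (valid for positive operators acting on distinct tensor factors) to obtain $|\langle\Psi|(\dg(A_n)^{-1}-\dg(A)^{-1})\Psi\rangle|\leq\|D_n\|\,\|\Psi\|^2$ uniformly in $N$. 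Your argument instead passes through the semigroup identity $e^{-t\dg(A)}=\Gamma(e^{-tA})$, exploits the multiplicative structure of $\Gamma$ via a telescoping sum on each sector, and recovers the resolvent by a Laplace transform. The paper's approach is more hands-on and stays entirely at the level of resolvents, while yours is more structural: it makes transparent \emph{why} the spectral gap $\varepsilon>0$ is indispensable (to make $\sup_{k\geq 1} k\,e^{-t\varepsilon(k-1)}$ finite) and would extend immediately to norm convergence of $f(\dg(A_n))$ for other $f\in C_0(\R)$. One small remark: the paper defines $\Gamma(U)$ only for unitary $U$, so you are implicitly using the natural extension of $\Gamma$ to contractions; this is standard and harmless, but worth a word.
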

\begin{proof}
    Because of Theorem VIII.19 of \cite{reed_i_1980}, we only have to prove that $\dg(A_n)^{-1}$ converges in norm to $\dg(A)^{-1}$. For $\Psi \in \Hcal^{\otimes_{s} N}\subset \Fcal$, we write $\dg(A_n)\Psi = \sum_{j=1}^N A_{n, j} \Psi$ for $n \in \N \cup\{+\infty\}$ with the convention $A_{\infty} = A$. Then, 
    \begin{equation}
        \langle \Psi| \big(\dg(A_n)^{-1} - \dg(A)^{-1}\big)\Psi\rangle = \langle \Psi| \dg(A_n)^{-1} \big(\dg(A) - \dg(A_n)\big)\dg(A)^{-1}\Psi\rangle.
    \end{equation}
    Let us write $D_{n, j} = A_{n,j}^{-1} - A_{\infty, j}^{-1}$ and $D_n = A_n^{-1} - A^{-1}$. We find
    \begin{equation}
        \langle \Psi| \big(\dg(A_n)^{-1} - \dg(A)^{-1}\big)\Psi\rangle = \sum_{j = 1}^N \langle \Psi |\dg(A_n)^{-1}A_{n, j}D_{n, j}A_{\infty, j}\dg(A)^{-1}\Psi\rangle.
    \end{equation}
    We can then use the Cauchy-Schwarz inequality to find
    \begin{equation}
        \big|\langle \Psi| \big(\dg(A_n)^{-1} - \dg(A)^{-1}\big)\Psi\rangle\big| \leq \|D_n\| \sqrt{\sum_{j= 1}^N\|A_{n, j}\dg(A_n)^{-1}\Psi\|^2} \sqrt{\sum_{j=1}^N\|A_{\infty, j}\dg(A)^{-1}\Psi\|^2}.
    \end{equation}
    As the $A_{n, j}$ are positive operators, we know that $\big(\sum_j A_{n, j}\big)^2 \geq \sum_j A_{n, j}^2$. In particular, this implies that 
    \begin{equation}
        \sum_{j= 1}^N\|A_{n, j}\dg(A_n)^{-1}\Psi\|^2 = \Big\langle \Psi\Big| \dg(A_n)^{-1} \sum_{j = 1}^N A_j^2 \dg(A_n)^{-1}\Psi\Big\rangle \leq \|\Psi\|^2.
    \end{equation}
    Therefore, we have
    \begin{equation}
        \big|\langle \Psi| \big(\dg(A_n)^{-1} - \dg(A)^{-1}\big)\Psi\rangle\big| \leq \|D_n\| \|\Psi\|^2.
    \end{equation}
    It is clear that the inequality still holds for a general $\Psi \in \Fcal$, and thus $\dg(A_n)^{-1}$ converges in norm to $\dg(A)^{-1}$.
\end{proof}
We want to apply Lemma \ref{lemme_convergence_norme_seconde_quantification} to $\xi_{\lambda, n}$, and thus have to prove the convergence of $\xi_{\lambda, n}$.

\begin{lemma}[Convergence of $\xi_{\lambda, n}$]\label{lemme_convergence_xil}
The operators $\xi_{\lambda, n}$ converge to $\xi_{\lambda}$ in the norm resolvent sense, i.e.
\begin{equation}
    \forall z \in \C \setminus [1, +\infty), \|R_z(\xi_{\lambda, n}) - R_z(\xi_{\lambda})\| \to 0.
\end{equation}
\end{lemma}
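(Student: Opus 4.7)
The plan is to first establish norm resolvent convergence of the squares $\xi_{\lambda,n}^2 \to \xi_{\lambda}^2$ by direct inspection of the explicit resolvent formula (\ref{equation_resolvante}), and then to pass to square roots via continuous functional calculus.

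For the first step, (\ref{equation_resolvante}) writes both $R_z(\xi_{\lambda,n}^2)$ and $R_z(\xi_{\lambda}^2)$ as $(\omega^2-z)^{-1}$ plus a rank-one correction whose numerator is $|\omega^{1/2}(\omega^2-z)^{-1}f_n\rangle\langle \omega^{1/2}(\omega^2-z)^{-1}f_n|$ (resp.\ $f$) and whose denominator is $1+4\lambda\langle f_n|\omega(\omega^2-z)^{-1}f_n\rangle$ (resp.\ $f$). To make sense of these for $f\in D(\omega^{-1/2})\setminus\Hcal$, I factor
\begin{equation*}
    \omega^{1/2}(\omega^2-z)^{-1}f = \bigl[\omega^{1/2}(\omega^2-z)^{-1}\omega^{1/2}\bigr]\,\omega^{-1/2}f,\qquad \langle f|\omega(\omega^2-z)^{-1}f\rangle = \langle \omega^{-1/2}f|\omega^2(\omega^2-z)^{-1}\omega^{-1/2}f\rangle,
\end{equation*}
where the bracketed operators are multiplication by the bounded functions $\omega/(\omega^2-z)$ and $\omega^2/(\omega^2-z)$ on $\{\omega\geq 1\}$ as soon as $z\in\C\setminus[1,\infty)$. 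Since $\omega^{-1/2}f_n\to \omega^{-1/2}f$ in $\Hcal$, the numerator vector converges in $\Hcal$-norm and the denominator scalar converges in $\C$. For $\lambda\geq 0$ and $z\in\C\setminus[1,\infty)$, the denominator stays bounded away from zero (its real part is $\geq 1$ when $z\in\R_-$, while its imaginary part is nonzero when $\Im z\neq 0$, using that $\omega^{-1/2}f\in\Hcal$). Hence the rank-one correction converges in operator norm and $\|R_z(\xi_{\lambda,n}^2)-R_z(\xi_{\lambda}^2)\|\to 0$.

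For the second step, all operators $\xi_{\lambda,n}^2$ and $\xi_{\lambda}^2$ are bounded below by $1$, so their spectra lie in $[1,\infty)$. For any $z\notin[1,\infty)$, the function $\phi_z(x):=(\sqrt{x}-z)^{-1}$ is continuous on $[1,\infty)$ and vanishes at infinity; extended arbitrarily to $\R$ it belongs to $C_\infty(\R)$. By the standard characterization of norm resolvent convergence (Theorem VIII.20 of \cite{reed_i_1980} and its consequences), the convergence of the squares yields $\phi_z(\xi_{\lambda,n}^2)\to \phi_z(\xi_{\lambda}^2)$ in operator norm; since $\phi_z(\xi_{\lambda,n}^2)=R_z(\xi_{\lambda,n})$ and $\phi_z(\xi_{\lambda}^2)=R_z(\xi_{\lambda})$, this is the conclusion.

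The main technical point is the factorization in step 1: once one interprets the formally-undefined quantities $\omega^{1/2}(\omega^2-z)^{-1}f$ and $\langle f|\omega(\omega^2-z)^{-1}f\rangle$ through bounded operators acting on $\omega^{-1/2}f\in\Hcal$, the norm convergence of the rank-one correction is a direct consequence of the continuity of rank-one operators in their generating vectors. Step 2 is essentially automatic given the uniform lower bound $\xi_{\lambda,n}^2\geq 1$.
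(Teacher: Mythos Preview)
Your proof is correct. Step~1 coincides with the paper's argument: both exploit the explicit rank-one resolvent formula~(\ref{equation_resolvante}) and reduce convergence to that of the vector $\omega^{1/2}(\omega^2-z)^{-1}f_n$ and the scalar coefficient, which follow from $\omega^{-1/2}f_n\to\omega^{-1/2}f$ in $\Hcal$. (One small remark: your case split ``$z\in\R_-$'' should read ``$z\in(-\infty,1)$'' to cover all real $z$ in the resolvent set, but the argument is unchanged since $\omega^2-z>0$ there.)

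The genuine difference is in Step~2. The paper passes from $\xi_{\lambda,n}^2$ to $\xi_{\lambda,n}$ via the integral representation $\xi_{\lambda,n}^{-1}=\frac{2}{\pi}\int_0^\infty R_{-t^2}(\xi_{\lambda,n}^2)\,\d t$: it cuts the integral at some $\alpha$, uses that $R_{-t^2}(\xi_{\lambda,n}^2)\to R_{-t^2}(\xi_{\lambda}^2)$ uniformly on $[0,\alpha]$, and controls the tail by $\int_\alpha^\infty t^{-2}\,\d t$. This yields $\|\xi_{\lambda,n}^{-1}-\xi_{\lambda}^{-1}\|\to 0$, after which Theorem~VIII.19 of \cite{reed_i_1980} gives norm resolvent convergence. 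You instead invoke Theorem~VIII.20 directly with $\phi_z(x)=(\sqrt{x}-z)^{-1}\in C_\infty(\R)$ (after extending off $[1,\infty)$), which is legitimate since all spectra lie in $[1,\infty)$. Your route is shorter and more conceptual; the paper's route is more hands-on and yields, as a byproduct, the locally uniform convergence of resolvents that it uses in the integral estimate. Both reach the same conclusion with comparable effort.
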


Note that the norm resolvent convergence of $\xi_{\lambda, n}$ implies its strong resolvent convergence.

\begin{proof}
First, we show that $\xi_{\lambda, n}^2$ converges to $\xi_{\lambda}^2$ in the norm resolvent sense, then we deduce that $\xi_{\lambda, n}^{-1}$ converges in norm to $\xi_{\lambda}^{-1}$ which allows us to conclude.\\

	For $n\in \N \cup \{+\infty\}$, we set 
    \begin{equation}
        B_{n} = \frac{4\lambda}{1 + 4\lambda \langle f_{n} |\omega(\omega^2 - z)^{-1}f_{n}\rangle}\in \R_+,~~~~\text{and}~~~~ \psi_n = \omega^{1/2}(\omega^2 - z)^{-1}f_n \in \Hcal,
        \end{equation}
        with convention $f_{\infty} = f$, $B_{\infty} = B$ and $\psi_{\infty} = \psi$.
        For $\varphi \in \Hcal$,
        \begin{align}
            \big(R_z(\lambda, f_n) - R_z(\lambda, f)\big)\varphi &= B_n |\psi_n\rangle\langle \psi_n | \varphi \rangle - B |\psi\rangle\langle \psi | \varphi \rangle \notag \\ &= (B_n - B)\langle \psi | \varphi\rangle \psi + B_n \langle \psi_n -\psi|\varphi\rangle \psi + B_n\langle \psi_n|\varphi\rangle (\psi_n -\psi),
        \end{align}
        and then
        \begin{equation}
            \big\|\big(R_z(\lambda, f_n) - R_z(\lambda, f)\big)\varphi\big\| \leq \Big(\|\psi\|^2|B_n - B| + |B_n|\|\psi\|\|\psi_n - \psi\| + |B_n|\|\psi_n\|\|\psi_n - \psi\|\Big)\|\varphi\|.
        \end{equation}
        As $\omega^{-1/2}f_n \to \omega^{-1/2}f$, we have $B_n \to B$ and $\psi_n \to \psi$. This implies that
        \begin{equation}
            \|\psi\|^2|B_n - B| + |B_n|\|\psi\|\|\psi_n - \psi\| + |B_n|\|\psi_n\|\|\psi_n - \psi\| \to 0,
        \end{equation}
        and therefore $R_{z}(\xi_{\lambda, n}^2) \to R_z(\xi_{\lambda}^2)$ in the sense of the norm. Note that this convergence is uniform in $z$ for $z\in K$ if $K$ is a compact subset of $\C \setminus [1, +\infty)$.\\
        
    Moreover,
    \begin{equation}
        \xi_{\lambda,n}^{-1} = \frac{2}{\pi} \int_0^{+\infty}R_{-t^2}(\xi_{\lambda,n}^2)\d t.
    \end{equation}
    Let $\eta > 0$ and $\varphi \in \Hcal$, there exists $\alpha > 0$ such that $\int_{\alpha}^{+\infty} t^{-2} \d t < \eta$. Then, 
    \begin{equation}
        \|\xi_{\lambda, n}^{-1} \varphi - \xi_{\lambda}^{-1}\varphi\| \leq \int_0^{\alpha}\|R_{-t^2} (\xi_{\lambda, n}^2)\varphi - R_{-t^2}(\xi_{\lambda}^2)\varphi\| \d t + \eta \|\varphi\|.
    \end{equation}
    As $R_{-t^2} (\xi_{\lambda, n}^2)$ converges in norm to $R_{-t^2}(\xi_{\lambda}^2)$, uniformly in $t \in [0,\alpha]$ we finally have
    \begin{equation}
        \|\xi_{\lambda, n}^{-1} - \xi_{\lambda}^{-1}\| \to 0~~\text{ when }~~ n \to + \infty.
    \end{equation}
    Using Theorem VIII.19 of \cite{reed_i_1980}, this concludes the proof of Lemma \ref{lemme_convergence_xil}.

\end{proof}

Now, we want to show that under certain assumptions, the convergence of $(U_{\lambda, n})$ and $(V_{\lambda, n})$ implies the convergence of a sequence of associated Bogoliubov transformations.
\begin{lemma}[Convergence of Bogoliubov transformations]\label{lemme_convergence_transformations}
    Let $(\mathbb{U}_{\lambda, n})$ be a sequence of Bogoliubov transformations associated with $(U_{\lambda, n})$ and $(V_{\lambda, n})$ -- see \eqref{equation_definition_transformation_bogoliubov}. We assume that $(U_{\lambda,n})$, $(U_{\lambda, n}^*)$, $(V_{\lambda, n})$ and $(V_{\lambda, n}^*)$ converge respectively to $U_{\lambda}$, $U_{\lambda}^*$, $V_{\lambda}$ and $V_{\lambda}^*$ strongly on $\Hcal$, $U_{\lambda}$ and $V_{\lambda}$ being associated with a Bogoliubov transformation $\mathbb{U}_{\lambda}$. Moreover, we assume that $(\tr(V_{\lambda, n}^*V_{\lambda, n}))$ is bounded. Then, up to a change of phase in the $\mathbb{U}_{\lambda, n}$,
    \begin{equation}
        \mathbb{U}_{\lambda, n} \to \mathbb{U}_{\lambda},
    \end{equation}
    strongly on $\Fcal$.
\end{lemma}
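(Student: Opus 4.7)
Modulo a choice of phase for each $\U_{\lambda,n}$, my plan is to first establish the vacuum convergence $\U_{\lambda,n}\Omega \to \U_\lambda\Omega$ in $\Fcal$ together with sufficient control on number-operator moments, and then to propagate this convergence to the dense subspace spanned by polynomial states $a^*(g_1)\cdots a^*(g_k)\Omega$ via the CCR action of $\U_{\lambda,n}$. Strong convergence on all of $\Fcal$ will then follow automatically from density and the unitarity of the $\U_{\lambda,n}$.

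\textbf{CCR propagation and moment bounds.} For $g_1,\dots, g_k \in \Hcal$,
\begin{equation*}
    \U_{\lambda,n}\, a^*(g_1)\cdots a^*(g_k)\,\Omega = \prod_{i=1}^k\bigl(a^*(U_{\lambda,n}^* g_i) - a(J^* V_{\lambda,n}^* g_i)\bigr)\,\U_{\lambda,n}\Omega,
\end{equation*}
and analogously for $\U_\lambda$. The strong convergences $U_{\lambda,n}^* g_i\to U_\lambda^* g_i$ and $V_{\lambda,n}^* g_i\to V_\lambda^* g_i$ in $\Hcal$, together with iterated use of the bound $(\ref{equation_inegalite_creation_annihilation})$ applied to a telescoping expansion of the difference, reduces the convergence of such a product to two pieces: (i) $(\Ncal+1)^{k/2}\bigl(\U_{\lambda,n}\Omega - e^{i\theta_n}\U_\lambda\Omega\bigr)\to 0$ in $\Fcal$ for every $k\in \N$, and (ii) a uniform bound in $n$ on $\langle \U_{\lambda,n}\Omega, (\Ncal+1)^k\U_{\lambda,n}\Omega\rangle$. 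The first moment $\langle \U_{\lambda,n}\Omega, \Ncal \U_{\lambda,n}\Omega\rangle = \tr(V_{\lambda,n}^* V_{\lambda,n})$ is bounded by hypothesis; higher moments can be computed via Wick's theorem for the quasi-free state $\U_{\lambda,n}\Omega$ and are uniformly bounded in terms of $\tr\bigl((V_{\lambda,n}^* V_{\lambda,n})^j\bigr)$ for $j\leq k$, themselves controlled by the first-moment bound together with the symplectic identity $JU_{\lambda,n}^* U_{\lambda,n} J^* - V_{\lambda,n}^* V_{\lambda,n} = \mathds{1}$.

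\textbf{Vacuum convergence and main obstacle.} For the vacuum convergence itself, the convenient reduction is to consider the composite Bogoliubov transformation $\mathbb{W}_n := \U_\lambda^* \U_{\lambda,n}$, whose matrix data $(U_n^W, V_n^W) = \Vcal_\lambda \Vcal_{\lambda,n} \in \Gcal$ satisfies, thanks to the symplectic relations on $\Vcal_\lambda$ and the hypotheses on $(U_{\lambda,n}, V_{\lambda,n})$, $U_n^W \to \mathds{1}$ and $V_n^W \to 0$ strongly with $\sup_n \tr\bigl((V_n^W)^* V_n^W\bigr) < +\infty$. It suffices to show $\mathbb{W}_n\Omega \to \Omega$ in the topology of all $(\Ncal+1)^{k/2}$-norms, for which one chooses the free phase so that $\langle \Omega, \mathbb{W}_n \Omega\rangle$ is real nonnegative. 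The standard squeezed-vacuum representation $\mathbb{W}_n \Omega = c_n \exp\bigl(-\tfrac{1}{2}\sum_{i,j}(T_n^W)_{ij} a^*(e_i) a^*(e_j)\bigr)\Omega$, with $T_n^W$ a symmetric Hilbert--Schmidt kernel built from $V_n^W$ and $U_n^W$ and $|c_n|^2 = \det\bigl(\mathds{1}+(V_n^W)^* V_n^W\bigr)^{-1/2}$, then reduces the problem to proving $\|T_n^W\|_{\mathrm{HS}}\to 0$ and $\det\bigl(\mathds{1}+(V_n^W)^* V_n^W\bigr)\to 1$. The main obstacle is that strong convergence $V_n^W\to 0$ together with a uniform Hilbert--Schmidt bound does \emph{not} by itself imply Hilbert--Schmidt, let alone Fredholm-determinant, convergence, as one sees from diagonal examples like $V_n = n^{-1/2}P_n$ with $P_n$ a rank-$n$ projection. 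To close the gap I would exploit the finer structure of $V_n^W$ coming from the symplectic product, which expresses $V_n^W$ as a combination of the form $U_\lambda^* V_{\lambda,n} - J V_\lambda^* U_{\lambda,n} J^*$ up to sign; this makes $(V_n^W)^* V_n^W$ a product of two operators each converging strongly to zero, and combined with the uniform trace bound and Lebesgue's dominated convergence applied on the spectral side, should yield trace-norm convergence. This is the only non-routine step; once it is in hand, the rest of the argument is standard.
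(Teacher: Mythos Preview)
Your overall scaffolding—prove vacuum convergence first, then propagate via the CCR to finite-particle states, then extend by density and unitarity—matches the paper exactly, and your treatment of the propagation step (telescoping, uniform moment bounds on $\langle\U_{\lambda,n}\Omega,(\Ncal+1)^k\U_{\lambda,n}\Omega\rangle$) is essentially the same as the paper's. The paper invokes Lemma~4.4 of \cite{beyond_bogoliubov} for the higher moments rather than Wick's theorem, but that is cosmetic.

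The substantive difference is in the vacuum step, and the gap you flag is real and is \emph{not} closed by your proposed fix. For the exponential formula to give $\mathbb W_n\Omega\to\Omega$ you need $\|V_n^W\|_{\mathrm{HS}}\to 0$ (equivalently $\det(\mathds{1}+(V_n^W)^*V_n^W)\to 1$). Writing $V_n^W$ via the symplectic product as a sum of a term of the form (HS operator)$\cdot\,U_{\lambda,n}$ and a term of the form (bounded operator)$\cdot\,V_{\lambda,n}$, the first does converge in Hilbert--Schmidt norm (a fixed HS operator composed with a strongly convergent, uniformly bounded sequence), but the second does not: a bounded operator times an HS-bounded, strongly convergent sequence need not converge in HS norm---your own $n^{-1/2}P_n$ counterexample applies verbatim. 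Passing to $(V_n^W)^*V_n^W$ does not help, since its trace is precisely the quantity you are trying to control, and there is no fixed summable majorant on which to run dominated convergence. Under the stated hypotheses (strong convergence of $V_{\lambda,n}$ plus a uniform HS bound, \emph{not} $\tr(V_{\lambda,n}^*V_{\lambda,n})\to\tr(V_\lambda^*V_\lambda)$), the squeezed-vacuum route appears genuinely blocked.

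The paper avoids this by a soft argument that never asks for HS convergence. It (i) extracts a weak subsequential limit $\Psi$ of $\U_{\lambda,n}\Omega$ in $\Fcal$, and in fact in $D(\Ncal^{1/2})$ thanks to the uniform bound on $\tr(V_{\lambda,n}^*V_{\lambda,n})$; (ii) shows $a(g)\U_\lambda^*\Psi=0$ for every $g$ by writing
\[
\U_\lambda a(g)\U_\lambda^*\,\U_{\lambda,n}\Omega=\bigl(a((U_\lambda^*-U_{\lambda,n}^*)g)-a^*(J^*(V_\lambda^*-V_{\lambda,n}^*)g)\bigr)\U_{\lambda,n}\Omega,
\]
which tends to $0$ in norm using only the strong convergence of $U_{\lambda,n}^*$, $V_{\lambda,n}^*$ and the uniform $\Ncal^{1/2}$-bound; hence $\Psi=c\,\U_\lambda\Omega$; and (iii) upgrades weak to strong convergence by an argument on matrix elements computable in terms of $V_{\lambda,n}$ alone, after which the phase normalisation $\langle\U_\lambda\Omega,\U_{\lambda,n}\Omega\rangle\in\R_+$ forces $c=1$ and weak\,$+$\,equal norms gives strong convergence. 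Steps (i)--(ii) use only strong operator convergence, so the HS obstruction never arises; I would rework your vacuum argument along these lines rather than via the explicit exponential representation.
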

\begin{proof} We recall that $\Omega$ is the vacuum and that there is no canonical choice of Bogoliubov transformation associated to operators $U$ and $V$. Therefore we can choose the transformations to have convergence. First, we set a choice of $\mathbb{U}_{\lambda}$. Then, for each $n \in \N$, we choose $\mathbb{U}_{\lambda, n}$ such that 
\begin{equation}
    \langle \mathbb{U}_{\lambda}\Omega | \mathbb{U}_{\lambda, n}\Omega\rangle \in \R_+.
\end{equation}
We prove that for all $\Psi \in \Fcal$, $\mathbb{U}_{\lambda, n}\Psi \to \mathbb{U}_{\lambda}\Psi$, and we divide this proof in three parts: first, we treat the case $\Psi = \Omega$, then we prove the convergence for a finite number of particles, and we conclude by linearity and density arguments.\\
    \textbf{The case of the vacuum.} We start by showing that $\mathbb{U}_{\lambda, n}\Omega \to \mathbb{U}_{\lambda}\Omega$.\\

        The sequence $(\mathbb{U}_{\lambda, n}\Omega)$ is bounded in $\Fcal$ as $\|\mathbb{U}_{\lambda, n}\Omega\| = \|\Omega\| = 1$, therefore, there exists $\Psi \in \Fcal$ such that up to a subsequence 
        \begin{equation}
            \mathbb{U}_{\lambda, n}\Omega \rightharpoonup \Psi.
        \end{equation}
        Moreover, the sequence $\|\Ncal^{1/2}\mathbb{U}_{\lambda, n}\Omega\|$ is bounded. Indeed, fixing $(u_j)$ an orthonormal basis of $\Hcal$, we have
        \begin{equation}
            \|\Ncal^{1/2}\mathbb{U}_{\lambda, n}\Omega\|^2 = \sum_{j\in \N}\langle \Omega|\mathbb{U}_{\lambda, n}^* a^*(u_j) a(u_j) \mathbb{U}_{\lambda, n} \Omega\rangle = \sum_{j\in \N} \langle \Omega| a(V_{\lambda, n}Ju_j) a^*(V_{\lambda, n}J u_j)\Omega\rangle = \tr(V_{\lambda, n}^*V_{\lambda, n}),
        \end{equation}
        which is bounded uniformly in $n$ by assumption. Therefore, there exists $\Psi_2 \in \Fcal$ such that up to a subsquence, we have
        \begin{equation}
            (1+\Ncal^{1/2}) \mathbb{U}_{\lambda, n} \Omega \rightharpoonup \Psi_2.
        \end{equation}
        As $\|(1 + \Ncal^{1/2}) \Phi\| \geq \|\Phi\|$ for all $\Phi$, we find $\Psi_2 = (1 + \Ncal^{1/2})\Psi$ and therefore
        \begin{equation}
            (1 + \Ncal^{1/2})\mathbb{U}_{\lambda, n}\Omega \rightharpoonup (1 + \Ncal^{1/2})\Psi.
        \end{equation}
        
        We want to show that $\Psi \in \Span \mathbb{U}_{\lambda}\Omega$. Note that it is true if $\Psi \in \ker a(g)\mathbb{U}_{\lambda}^*$ for all $g\in \Hcal$. Let $g \in D(\omega)$, we have for $n\in \N$,
        \begin{align}
            \mathbb{U}_{\lambda}a(g)\mathbb{U}_{\lambda}^*\mathbb{U}_{\lambda, n} \Omega &= \big(a(U_{\lambda}^*g) - a^*(J^*V_{\lambda}^* g)\big) \mathbb{U}_{\lambda, n}\Omega \notag\\
            &= \Big(a\big((U_{\lambda}^* - U_{\lambda, n}^*)g\big) - a^*\big(J^*(V_{\lambda}^* - V_{\lambda, n}^*)g\big)\Big)\mathbb{U}_{\lambda, n}\Omega,
        \end{align}
        because $\big(a(U_{\lambda, n}^*g) - a^*(J^*V_{\lambda, n}^* g)\big) \mathbb{U}_{\lambda, n}\Omega = \mathbb{U}_{\lambda, n} a(g)\Omega = 0$.
        We know (consequence of (\ref{equation_inegalite_creation_annihilation}), see also Theorem 3.51 of \cite{derezinski_mathematics_2013}) that for $\phi \in \Hcal$ and $\Phi \in \Fcal$, we have the following inequalities
            \begin{equation}
                \|a^*(\phi)\Phi\|^2 \leq \|\phi\|^2 \|(\Ncal + 1)^{1/2}\Phi\|^2~~~~~~\text{and}~~~~~~ \|a(\phi)\Phi\|^2 \leq \|\phi\|^2 \|\Ncal^{1/2}\Phi\|^2.
            \end{equation}
        Therefore, we have 
        \begin{equation}
            \|\mathbb{U}_{\lambda}a(g)\mathbb{U}_{\lambda}^*\mathbb{U}_{\lambda, n} \Omega\| \leq \big\|(U_{\lambda}^* - U_{\lambda, n}^*)g\big\| \big\|(1 + \Ncal)^{1/2} \mathbb{U}_{\lambda, n}\Omega\big\| +\big\|(V_{\lambda}^* - V_{\lambda, n}^*)g\big\| \big\|(1 + \Ncal)^{1/2} \mathbb{U}_{\lambda, n}\Omega\big\|.
        \end{equation}
        The convergence of $U_{\lambda, n}^*$ and $V_{\lambda, n}^*$ implies that $(U_{\lambda}^* - U_{\lambda, n}^*)g \to 0$ and $(V_{\lambda}^* - V_{\lambda, n}^*)g \to 0$, and we have shown that $\big((1 + \Ncal)^{1/2} \mathbb{U}_{\lambda, n}\Omega\big)$ is bounded, therefore we have 
        \begin{equation}
            a(g)\mathbb{U}_{\lambda}^*\mathbb{U}_{\lambda, n} \Omega \to 0.
        \end{equation}
        Let $\Phi \in D(\Ncal^{1/2})$, we have $\langle \Phi | a(g) \mathbb{U}_{\lambda}^*\mathbb{U}_{\lambda, n}\Omega\rangle \to 0$ and
        \begin{equation}
            \langle \Phi | a(g) \mathbb{U}_{\lambda}^*\mathbb{U}_{\lambda, n}\Omega\rangle = \langle  \mathbb{U}_{\lambda}a^*(g)\Phi | \mathbb{U}_{\lambda, n}\Omega\rangle \to \langle  \mathbb{U}_{\lambda}a^*(g)\Phi | \Psi \rangle = \langle \Phi | a(g)\mathbb{U}_{\lambda}^*\Psi \rangle,
        \end{equation}
        which implies 
        \begin{equation}
             \langle \Phi | a(g)\mathbb{U}_{\lambda}^*\Psi \rangle = 0.
        \end{equation}
        As $D(\Ncal^{1/2})$ is dense in $\Fcal$, we finally find $a(g)\mathbb{U}_{\lambda}^*\Psi = 0$, and therefore $\Psi \in \Span \mathbb{U}_{\lambda}\Omega$ as $D(\omega)$ is dense in $\Hcal$.\\
        
        Let us write $\Psi = r e^{i\theta} \mathbb{U}_{\lambda}\Omega$. We set $P_{n} = |\mathbb{U}_{\lambda, n}\Omega\rangle\langle \mathbb{U}_{\lambda, n}\Omega|$ and $P = |\mathbb{U}_{\lambda}\Omega\rangle\langle \mathbb{U}_{\lambda}\Omega|$. For $\psi_1, \psi_2 \in \Hcal$, we have
        \begin{equation}
            \langle \psi_1|P_n\psi_2\rangle = \langle \Omega | \mathbb{U}_{\lambda, n}^* a^*(\psi_1) \mathbb{U}_{\lambda, n} \mathbb{U}_{\lambda, n}^* a(\psi_2) \mathbb{U}_{\lambda, n}\Omega\rangle = \langle V_{\lambda, n} J \psi_1 | V_{\lambda, n} J \psi_2\rangle.        \end{equation}
            The convergence of $V_{\lambda, n}$ implies that $\langle \psi_1|P_n \psi_2\rangle \to \langle \psi_1|P \psi_2\rangle$. In particular, $\langle \psi_1|P_n \psi_1\rangle \to \langle \psi_1|P \psi_1\rangle$ i.e.
            \begin{equation}
                \big|\langle \psi_1| \mathbb{U}_{\lambda, n}\Omega\rangle\big|^2 \to \big|\langle \psi_1| \mathbb{U}_{\lambda}\Omega\rangle\big|^2.
            \end{equation}
            As $\mathbb{U}_{\lambda, n}\Omega \rightharpoonup r e^{i\theta}\mathbb{U}_{\lambda}\Omega$, we also have $\big|\langle \psi_1| \mathbb{U}_{\lambda, n}\Omega\rangle\big|^2 \to r^2 \big|\langle \psi_1| \mathbb{U}_{\lambda}\Omega\rangle\big|^2$ and thus $r = 1$ and $\Psi = e^{i\theta}\mathbb{U}_{\lambda}\Omega$. 
            As $\mathbb{U}_{\lambda, n}\Omega \rightharpoonup e^{i\theta}\mathbb{U}_{\lambda}\Omega$, we have 
            \begin{equation}
                \langle \mathbb{U}_{\lambda}\Omega | \mathbb{U}_{\lambda, n}\Omega\rangle \to e^{i\theta}.
            \end{equation}
            Recall that for all $n \in \N$, $\langle \mathbb{U}_{\lambda}\Omega | \mathbb{U}_{\lambda, n}\Omega\rangle \in \R_+$, which implies that $e^{i\theta} = 1$ and $\Psi = \mathbb{U}_{\lambda}\Omega$.
            As this is the only possible limit, we have actually weak convergence of the whole sequence. As the limit has same norm as the sequence, we have convergence in norm:
            \begin{equation}
                \mathbb{U}_{\lambda, n}\Omega \to \mathbb{U}_{\lambda}\Omega.
            \end{equation}
    \textbf{The case of a state with finitely many particles.} Let $\phi_1,..., \phi_n \in \Hcal$. We set $\Phi = a^*(\phi_1)...a^*(\phi_k)\Omega$ and we want to show that $\mathbb{U}_{\lambda, n} \Phi \to \mathbb{U}_{\lambda}\Phi$. We have 
        \begin{align}
            \mathbb{U}_{\lambda, n}\Phi - \mathbb{U}_{\lambda}\Phi &= \mathbb{U}_{\lambda}a^*(\phi_1)\mathbb{U}_{\lambda}^*...\mathbb{U}_{\lambda}a^*(\phi_k)\mathbb{U}_{\lambda}^*\big(\mathbb{U}_{\lambda, n}\Omega - \mathbb{U}_{\lambda} \Omega\big) \big\}=:\Delta_0\notag\\
            & + \big(\mathbb{U}_{\lambda, n}a^*(\phi_1)\mathbb{U}_{\lambda, n}^* - \mathbb{U}_{\lambda}a^*(\phi_1)\mathbb{U}_{\lambda}^*\big)\mathbb{U}_{\lambda}a^*(\phi_2)\mathbb{U}_{\lambda}^*...\mathbb{U}_{\lambda}a^*(\phi_k)\mathbb{U}_{\lambda}^*\mathbb{U}_{\lambda, n}\Omega\big\}=:\Delta_1\notag\\
            & + ...\notag\\
            & + \mathbb{U}_{\lambda, n}a^*(\phi_1)\mathbb{U}_{\lambda, n}^*...\mathbb{U}_{\lambda, n}a^*(\phi_{k-1})\mathbb{U}_{\lambda, n}^*\big(\mathbb{U}_{\lambda, n}a^*(\phi_k)\mathbb{U}_{\lambda, n}^* - \mathbb{U}_{\lambda}a^*(\phi_k)\mathbb{U}_{\lambda}^*\big) \mathbb{U}_{\lambda, n}\Omega\big\}=:\Delta_k.
        \end{align} 
        Because of Lemma 4.4 of \cite{beyond_bogoliubov}, we have for all $n, j \in \N$,
        \begin{equation}
            \langle \mathbb{U}_{\lambda, n}\Omega | (\Ncal + 1)^j \mathbb{U}_{\lambda, n}\Omega\rangle \leq C_{n, j}
        \end{equation}
        with $C_{n, j}$ depending only on $j$, $\tr(V_{\lambda, n}^* V_{\lambda, n})$ and $\|U_{\lambda, n}\|$, which are bounded. Indeed, the boundedness of $\tr(V_n^* V_n)$ is an assumption of the Lemma, and this implies that $\|V_{\lambda, n}\|$ is bounded. As $U_{\lambda, n}^* U_{\lambda, n} = 1 + J^*V_{\lambda, n}^*V_{\lambda, n}J$, $\|U_{\lambda, n}\|$ is bounded too. This implies in particular that for all $j\in \N$, $\langle \mathbb{U}_{\lambda, n}\Omega | (\Ncal + 1)^j \mathbb{U}_{\lambda, n}\Omega\rangle$ is bounded, uniformly in $n$. Moreover, for a fixed $j$, the convergence in norm of $\mathbb{U}_{\lambda, n}\Omega$ together with the boundedness of $\mathcal{N}^{j+1}\mathbb{U}_{\lambda, n}$ gives the convergence in norm of $\mathcal{N}^j\mathbb{U}_{\lambda, n}\Omega$.\\
        
        For all $\Psi \in \Fcal$, $\psi \in \Hcal$, $j, l \in \mathbb{N}$, we have the following inequality which is a generalization of Theorem 3.51 of \cite{derezinski_mathematics_2013}:
        \begin{equation}
            \big\|(\Ncal + l)^{j/2}a^{*}(\psi)\Psi\big\|, \big\|(\Ncal + l)^{j/2}a(\psi)\Psi\big\| \leq \|\psi\| \|(\Ncal + l + 1)^{(j+1)/2}\Psi\|.
        \end{equation}
        From this, we can deduce that 
        \begin{equation}
            \|\Delta_0\| \leq \big(\|U_{\lambda}^*\phi_1\| + \|V_{\lambda}^*\phi_1\|\big)...\big(\|U_{\lambda}^*\phi_k\| + \|V_{\lambda}^*\phi_k\|\big)\big\|(\Ncal + k)^{k/2} (\mathbb{U}_{\lambda, n}\Omega - \mathbb{U}_{\lambda}\Omega)\big\|
        \end{equation}
        and for $1 \leq j \leq k$,
        \begin{multline}
            \|\Delta_j\| \leq \Big(\prod_{l_1 < j}\big(\|U_{\lambda, n}^*\phi_{l_1}\| + \|V_{\lambda, n}^*\phi_{l_1}\|\big)\Big)\big(\|(U_{\lambda, n}^* - U_{\lambda}^*)\phi_j\| + \|(V_{\lambda, n}^* - V_{\lambda}^*)\phi_j\|\big)\\
            \Big(\prod_{l_2 > j}\big(\|U_{\lambda}^*\phi_{l_2}\| + \|V_{\lambda}^*\phi_{l_2}\|\big)\Big)\big\|(\Ncal + k)^{k/2} \mathbb{U}_{\lambda, n}\Omega\big\|.
        \end{multline}
        Therefore, we can deduce that all the $\Delta_j$ converge to 0, and hence that 
        \begin{equation}
            \mathbb{U}_{\lambda, n}\Phi \to \mathbb{U}_{\lambda}\Phi.
        \end{equation}
    \textbf{Conclusion} By linearity and unitarity of the $\mathbb{U}_{\lambda, n}$ we finally have strong convergence on $\Fcal$.
\end{proof}
Now that we have proved Lemma \ref{lemme_convergence_transformations}, we need to check that its assumptions are satisfied in our case.
\begin{lemma}[Hypotheses of Lemma \ref{lemme_convergence_transformations}]\label{lemme_convergence_VU}
    Assume $\omega^{-1/2}f\in \Hcal$. We define $\xi_{\lambda}$ as in Section \ref{ssection_existence_regulier}, $U_{\lambda}$ and $V_{\lambda}$ by $(\ref{equation_definition_lemme_diagonalisation})$. We set $\xi_{\lambda, n} = \big(\omega^2 + 4 \lambda \omega^{1/2} |f_n\rangle \langle f_n |\omega^{1/2}\big)^{1/2}$ with $f_n = \mathds{1}_{\omega \leq n} f$, and we define  $U_{\lambda, n}$ and $V_{\lambda, n}$ by $(\ref{equation_definition_lemme_diagonalisation})$. Then, $\tr(V_{\lambda, n}^*V_{\lambda, n})$ is uniformly bounded and we have convergence of $(V_{\lambda, n})$ and $(U_{\lambda, n})$ in the following sense: for all $\psi \in \Hcal$,
    \begin{equation}
        V_{\lambda, n}\psi \to V_{\lambda}\psi,~~~~ V_{\lambda, n}^*\psi \to V_{\lambda}\psi,~~~~ U_{\lambda, n}\psi \to U_{\lambda}\psi~~~~\text{and}~~~~ U_{\lambda, n}^*\psi \to U_{\lambda}^*\psi.
    \end{equation}
\end{lemma}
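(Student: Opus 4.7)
My plan has three main components: establish a uniform trace bound, prove strong convergence of the building blocks $\omega^{\mp 1/2}\xi_{\lambda,n}^{\pm 1/2}$ via the integral formulas of Lemma \ref{lemme_formules_integrales_racine_perturbation}, and extend the convergence from a dense subspace to all of $\Hcal$ by uniform boundedness.

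First, I would obtain the uniform bound $\tr(V_{\lambda,n}^{*}V_{\lambda,n}) = \tr(V_{\lambda,n}V_{\lambda,n}^{*}) \leq 2\lambda\|\omega^{-1/2}f\|^{2}$ by repeating the computation in the proof of Proposition \ref{proposition_existence_transformations_bogolioubov} with $f$ replaced by $f_{n}$: the formulas $(\ref{equation_VVetoile})$, $(\ref{equation_valeur_integrale_xilambda})$ and $(\ref{equation_valeur_integrale_xilambda_inverse})$ remain valid since $\omega^{-1/2}f_{n}\in\Hcal$, and the pointwise inequality $|f_{n}|\leq|f|$ upgrades $\|\omega^{-1/2}f_{n}\|^{2}$ to $\|\omega^{-1/2}f\|^{2}$. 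The same estimates yield uniform operator-norm bounds on $\omega^{-1/2}\xi_{\lambda,n}^{1/2}$ and $\omega^{1/2}\xi_{\lambda,n}^{-1/2}$, viewed as square roots of the uniformly bounded positive operators $\omega^{-1/2}\xi_{\lambda,n}\omega^{-1/2}$ and $\omega^{1/2}\xi_{\lambda,n}^{-1}\omega^{1/2}$.

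For the convergence statements, since $U_{\lambda,n}$ and $V_{\lambda,n}$ from $(\ref{equation_definition_lemme_diagonalisation})$ are linear combinations of $\omega^{\mp 1/2}\xi_{\lambda,n}^{\pm 1/2}$ composed with the isometric factors $\mathfrak{c}$, $J^{*}$, it suffices to establish strong convergence of those compositions. By the uniform bound above, I only need convergence on a dense subspace; I will take $D(\omega^{1/4})\subset\Hcal$. For $\psi\in D(\omega^{1/4})$, I apply $(\ref{equation_racine_quatrieme_perturbation})$ and $(\ref{equation_inverse_racine_quatrieme_perturbation})$ to $A=\omega^{2}$, $\psi_{f}=\omega^{1/2}f_{n}$, $\alpha=4\lambda$, and multiply by $\omega^{\mp 1/2}$. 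Using commutativity of spectral multiplications, $(\omega^{\mp 1/2}\xi_{\lambda,n}^{\pm 1/2} - 1)\psi$ becomes an integral over $t\in(0,\infty)$ of a rank-one-type vector in $R_{-t^{\beta}}(\omega^{2})f_{n}$ and $\omega^{1/2}R_{-t^{\beta}}(\omega^{2})\psi$, divided by the scalar $1+4\lambda\langle f_{n}|\omega R_{-t^{\beta}}(\omega^{2})f_{n}\rangle\geq 1$, with $\beta\in\{4,4/3\}$. Pointwise in $t$, the integrand converges in $\Hcal$ to its $f$-analogue, using $\omega^{-1/2}f_{n}\to\omega^{-1/2}f$, pointwise convergence $f_{n}\to f$, and dominated convergence in the spectral representation. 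For the outer integral, a Cauchy--Schwarz estimate in $t$ combined with the explicit identities $\int_{0}^{+\infty} t^{4} R_{-t^{4}}(\omega^{2})^{2}\,\d t \propto \omega^{-3/2}$ and $\int_{0}^{+\infty} t^{4}\omega^{2} R_{-t^{4}}(\omega^{2})^{2}\,\d t \propto \omega^{1/2}$ (and the $\beta=4/3$ analogues) controls the tail $\int_{M}^{+\infty}$ uniformly in $n$ by a constant multiple of $\|\omega^{-3/4}f\|\cdot\|\omega^{1/4}\psi\|$, which is finite on $D(\omega^{1/4})$. Dominated convergence on bounded $t$-intervals plus this tail estimate then gives convergence in $\Hcal$. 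The adjoints $U_{\lambda,n}^{*}$ and $V_{\lambda,n}^{*}$ are obtained from the same combinations composed in the reverse order and are handled identically.

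The main obstacle is the presence of the unbounded factor $\omega^{1/2}$: the norm resolvent convergence from Lemma \ref{lemme_convergence_xil} gives, by functional calculus, norm convergence of $\xi_{\lambda,n}^{-1/2}\to\xi_{\lambda}^{-1/2}$, but does not directly transfer to $\omega^{1/2}\xi_{\lambda,n}^{-1/2}$; symmetrically, $\xi_{\lambda}^{1/2}$ is itself unbounded, so one cannot work with $\xi_{\lambda,n}^{1/2}\psi$ for arbitrary $\psi\in\Hcal$. The integral representations from Lemma \ref{lemme_formules_integrales_racine_perturbation} isolate the bounded compositions $\omega^{\mp 1/2}\xi_{\lambda,n}^{\pm 1/2}$ directly, at the cost of a $t$-integrand whose norm is not in $L^{1}(\d t)$ uniformly in $n$ for general $\psi\in\Hcal$; the regularity assumption $\psi\in D(\omega^{1/4})$ is precisely what makes the Cauchy--Schwarz majorant finite, and the uniform operator bound then transports this convergence to all of $\Hcal$.
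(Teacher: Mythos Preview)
Your approach is correct and follows essentially the same strategy as the paper: both use the fourth-root integral formulas of Lemma \ref{lemme_formules_integrales_racine_perturbation} for $\omega^{\mp 1/2}\xi_{\lambda,n}^{\pm 1/2}$, apply dominated convergence on a dense subspace, and extend to all of $\Hcal$ by uniform boundedness. The paper works on $D(\omega)$ rather than $D(\omega^{1/4})$, which yields a direct $L^1$ majorant for the $t$-integrand and avoids the Cauchy--Schwarz-plus-tail argument, and it obtains the uniform trace bound by estimating the Hilbert--Schmidt norms of $\omega^{\mp 1/2}\xi_{\lambda,n}^{\pm 1/2}-1$ directly from the same fourth-root formulas rather than by invoking the $VV^*$ computation of Proposition \ref{proposition_existence_transformations_bogolioubov}.
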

\begin{proof}
    We recall that $2V_{\lambda, n} = c\big(\omega^{1/2} \xi_{\lambda, n}^{-1/2} - \omega^{-1/2}\xi_{\lambda, n}^{1/2}\big)J^*$ and $2U_{\lambda, n} = \big(\omega^{1/2} \xi_{\lambda, n}^{-1/2} + \omega^{-1/2}\xi_{\lambda, n}^{1/2}\big)$.\\

    \noindent First, we want to prove that $V_{\lambda, n}$ is uniformly bounded in Hilbert-Schmidt norm. Note that this is equivalent to showing that $V_{\lambda, n}^*$ is uniformly bounded in Hilbert-Schmidt norm. We recall that (see \eqref{equation_VVetoile})
    \begin{equation}
    	4V_{\lambda, n} V_{\lambda, n}^* = \omega^{-1/2} \xi_{\lambda,n} \omega^{-1/2} - 1 + \omega^{1/2} \xi_{\lambda,n}^{-1} \omega^{1/2} - 1.
    \end{equation}
    Then, 
    \begin{equation}
    	\tr \big(\omega^{-1/2}\xi_{\lambda}\omega^{-1/2} - 1 \big) = \frac{8\lambda}{\pi}\int_0^{+\infty} \frac{t^2 \|R_{-t^2}(\omega^2) f_n\|^2}{1 + 4\lambda \langle f_n | \omega R_{-t^2 }(\omega^2) f\rangle}\d t \leq 2\lambda\|\omega^{-1/2}f_n\|^2 \leq 2\lambda\|\omega^{-1/2}f\|^2
    \end{equation}
    and
    \begin{equation}
    	\tr \big(1 - \omega^{1/2} \xi_{\lambda}^{-1} \omega^{1/2}\big) \leq \frac{8\lambda}{\pi} \int_0^{+\infty} \|R_{-t^2}(\omega^2)\omega f_n\|^2 \d t = 2\lambda \|\omega^{-1/2}f_n\|^2 \leq 2\lambda \|\omega^{-1/2}f\|^2.
    \end{equation}
    Therefore,
    \begin{equation}
    	\|V_{\lambda, n}\|_{\mathrm{HS}}^2 = \|V_{\lambda, n}^*\|_{\mathrm{HS}}^2 = \tr\big(V_{\lambda, n}V_{\lambda, n}^*\big) \leq \lambda\|\omega^{-1/2}f\|^2,
    \end{equation}
    which proves that $V_{\lambda, n}$ is uniformly bounded in Hilbert-Schmidt norm.\\
    
    \noindent Now, we want to prove convergence of $U_{\lambda, n}$ and $V_{\lambda, n}$ (whose definition may be found in \eqref{equation_definition_lemme_diagonalisation}). Applying (\ref{equation_racine_quatrieme_perturbation}) and (\ref{equation_inverse_racine_quatrieme_perturbation}) of Lemma \ref{lemme_formules_integrales_racine_perturbation} we find
    \begin{equation}\label{equation_omega_xi}
        \omega^{-1/2} \xi_{\lambda, n}^{1/2} = 1 + \frac{8\sqrt{2}\lambda}{\pi}\int_0^{+\infty} \frac{t^4 |R_{-t^4}(\omega^2)f_n\rangle \langle R_{-t^4}(\omega^2) \omega^{1/2} f_n|}{1 + 4\lambda\langle f_n| R_{-t^4}(\omega^2)\omega f_n\rangle }\d t
    \end{equation}
    and 
    \begin{equation}\label{equation_xi_omega}
        \omega^{1/2}\xi_{\lambda,n}^{-1/2} = 1 - \frac{8\sqrt{2}\lambda}{3\pi}\int_0^{+\infty}\frac{|R_{-t^{4/3}}(\omega^2)\omega f_n\rangle \langle R_{-t^{4/3}}(\omega^2)\omega^{1/2} f_n|}{1 + 4\lambda\langle f_n | R_{-t^{4/3}}(\omega^2)\omega f_n\rangle}\d t.
    \end{equation}
     From (\ref{equation_omega_xi}) and (\ref{equation_xi_omega}), we can deduce that for all $\psi \in D(\omega)$, $\omega^{-1/2} \xi_{\lambda, n}^{1/2} \psi \to \omega^{-1/2}\xi_{\lambda}^{1/2} \psi$ and $\omega^{1/2} \xi_{\lambda,n}^{-1/2}\psi \to \omega^{1/2}\xi_{\lambda, n}^{-1/2}\psi$ by Lebesgue's dominated convergence theorem, and that $V_{\lambda, n}\psi \to V_{\lambda}\psi$ and $U_{\lambda, n}\psi \to U_{\lambda}\psi$.\\
    By density of $D(\omega)$ in $\Hcal$ it is enough to have uniform boundedness of $(V_{\lambda, n})$ and $(U_{\lambda, n})$ to have convergence on $\Hcal$. As $\tr(V_{\lambda, n}^*V_{\lambda, n})$ is bounded, $(V_{\lambda, n})$ is bounded, and this implies that $(U_{\lambda, n})$ is bounded too, because $U_{\lambda, n}^* U_{\lambda, n} = 1 + J^*V_{\lambda, n}^*V_{\lambda, n}J$, which concludes the proof.
\end{proof}
With those lemmas, we can finally prove Proposition \ref{proposition_convergence_diagonalisé}.

\begin{proof}[Proof of Proposition \ref{proposition_convergence_diagonalisé}] On the one hand, Lemma \ref{lemme_convergence_norme_seconde_quantification} and Lemma \ref{lemme_convergence_xil} show that $\dg(\xi_{\lambda, n})$ converges to $\dg(\xi_{\lambda})$ in the norm resolvent sense. On the other hand, Lemma \ref{lemme_convergence_transformations} and Lemma \ref{lemme_convergence_VU} show that we can choose $\mathbb{U}_{\lambda, n}$ such that $\mathbb{U}_{\lambda, n} \to \mathbb{U}_{\lambda}$ strongly. For $z \in \C \setminus (-\infty , -1]$, we have 
\begin{equation}
    R_z\big(\mathbb{U}_{\lambda, n} \dg(\xi_{\lambda, n})\mathbb{U}_{\lambda, n}^*\big) = \mathbb{U}_{\lambda, n} R_z\big(\mathrm{d}\Gamma(\xi_{\lambda, n})\big)\mathbb{U}_{\lambda, n}^*.
\end{equation}
As $\mathbb{U}_{\lambda, n}$ and $R_z\big(\mathrm{d}\Gamma(\xi_{\lambda, n})\big)$ are uniformly bounded, 

\begin{equation}
    \mathbb{U}_{\lambda, n} R_z\big(\mathrm{d}\Gamma(\xi_{\lambda, n})\big)\mathbb{U}_{\lambda, n}^* \to \mathbb{U}_{\lambda} R_z\big(\mathrm{d}\Gamma(\xi_{\lambda})\big)\mathbb{U}_{\lambda}^*
\end{equation}
strongly.
\end{proof}

With Propositions \ref{proposition_existence_trace}, \ref{proposition_existence_transformations_bogolioubov} and \ref{proposition_convergence_diagonalisé}, we have proved Theorem \ref{theorem1}.

\section{Proof of Theorem \ref{theorem2}}\label{section_preuve_theoreme2}

We divide the proof of the case $\omega^{-1/2}f\notin \Hcal$ in two parts; first, we discuss the existence of Bogoliubov transformations, and then we show the convergence of the diagonalized regularized model.

\subsection{Existence of Bogoliubov transformations}
In this section, we prove that, in general, there does not need to exist a Bogoliubov transformation associated with $\Vcal_{\lambda}$ defined by (\ref{equation_definition_lemme_diagonalisation2}) when $\omega^{-1/2}f\notin \Hcal$. Namely, if $\omega^{-1}f\notin \Hcal$, the Bogoliubov transformation never exists, and we give an example of an $\omega$ and a class of $f \in D(\omega^{-1})$ such that the Bogoliubov transformation does not exist.
\begin{proposition}[Non-existence of Bogoliubov transformations] \label{proposition_nonexistence_transformations}~ 
\begin{itemize}
    \item Assume $\omega^{-3/2}f\in \Hcal$ and $\omega^{-1}f \notin \Hcal$. We define $\xi_{\lambda}$ by $(\ref{equation_resolvante_renormalisee})$, $U_{\lambda}$ and $V_{\lambda}$ by $(\ref{equation_definition_lemme_diagonalisation})$, and $\Vcal_{\lambda}$ by $(\ref{equation_definition_lemme_diagonalisation2})$. Then, $\tr(V_{\lambda}V_{\lambda}^*) = +\infty$, therefore there does not exist a Bogoliubov transformation associated to $\Vcal_{\lambda}$.
    \item Assume $X = [1, + \infty)$, $\omega(k) = k$ and $\d\mu(k) = \d k$. Let $s\in (1/2, 1]$, $\varepsilon \in (0, s-1/2)$. There exists $f\in D(\omega^{-s})$ such that $\omega^{-s + \varepsilon}f \notin \Hcal$ which satisfies $\tr(V_{\lambda}V_{\lambda}^*) = +\infty$, for $\xi_{\lambda}$ defined by $(\ref{equation_resolvante_renormalisee})$, $U_{\lambda}$ and $V_{\lambda}$ by $(\ref{equation_definition_lemme_diagonalisation})$, and $\Vcal_{\lambda}$ by $(\ref{equation_definition_lemme_diagonalisation2})$.
\end{itemize}
\end{proposition}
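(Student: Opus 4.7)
Both bullets reduce, via Proposition~\ref{proposition_existence_transformation_bogolioubov}, to showing that $\tr(V_\lambda V_\lambda^*) = +\infty$, which violates Shale's condition. The plan is to start from the identity \eqref{equation_VVetoile} already used in Section~\ref{section_preuve_proposition}, namely $4V_\lambda V_\lambda^* = (K-1) + (K^{-1} - 1)$ with $K := \omega^{-1/2}\xi_\lambda\omega^{-1/2}$. To obtain integral representations in the renormalized setting, I would combine $\xi_\lambda^{-1} = \frac{2}{\pi}\int_0^\infty \tilde R_{-t^2}(\lambda,f)\,dt$ from Lemma~\ref{lemme_formules_integrales_racine_perturbation} with \eqref{equation_resolvante_renormalisee}, and derive a similar formula for $\xi_\lambda$ itself by applying \eqref{equation_racine_carree_perturbation} formally to the renormalized coupling $\tilde\lambda^{-1} = \lambda^{-1} - 4\|\omega^{-1/2}f\|^2$, observing that the divergent $\tilde\lambda$ cancels algebraically between prefactor and denominator to leave a finite expression in $\lambda$ alone. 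Writing $\varphi_t := (\omega^2+t^2)^{-1}f$ and $D(t) := 1 - 4\lambda t^2\langle f|\omega^{-1}(\omega^2+t^2)^{-1}f\rangle \geq 1$, this yields
\begin{equation*}
K - 1 = \frac{8\lambda}{\pi}\int_0^\infty \frac{t^2|\varphi_t\rangle\langle\varphi_t|}{D(t)}\,dt,\qquad K^{-1} - 1 = -\frac{8\lambda}{\pi}\int_0^\infty \frac{|\omega\varphi_t\rangle\langle\omega\varphi_t|}{D(t)}\,dt.
\end{equation*}
Taking the trace against an orthonormal basis, applying Fubini, and using the identity $\|\omega\varphi_t\|^2 = G(t) - t^2\|\varphi_t\|^2$ with $G(t) := \langle f|(\omega^2+t^2)^{-1}f\rangle$ reduces matters to a single integral,
\begin{equation*}
4\tr(V_\lambda V_\lambda^*) = -\frac{8\lambda}{\pi}\int_0^\infty \frac{G(t) - 2t^2\|\varphi_t\|^2}{D(t)}\,dt,
\end{equation*}
and the task for each bullet is to show that the right-hand side is $+\infty$.

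For the first bullet the divergence comes from $t = 0$: the hypothesis $\omega^{-1}f\notin \Hcal$ yields $G(0) = \|\omega^{-1}f\|^2 = +\infty$, while $\omega^{-3/2}f\in \Hcal$ ensures that $D$ stays bounded on some neighborhood $[0,t_0]$ of $0$. Since $\omega^2+t^2 \geq 1$, one has $\|\varphi_t\|^2 \leq G(t)$, hence $G(t) - 2t^2\|\varphi_t\|^2 \geq G(t)/2$ for $t^2 \leq 1/4$, which reduces the problem to lower-bounding $\int_0^{t_0} G(t)\,dt$. A Fubini computation rewrites this as $\int |f(k)|^2 \arctan(t_0/\omega(k))/\omega(k)\,d\mu(k)$, whose large-$\omega$ tail is comparable to $\int|f|^2/\omega^2\,d\mu = \|\omega^{-1}f\|^2 = +\infty$.

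For the explicit counterexample I would take $f(k) = k^{s-1/2}\log(k)^{-\beta}\mathds{1}_{k\geq e}$ with $\beta > 1/2$. Direct Bertrand-type integrations verify $f\in D(\omega^{-s})\setminus D(\omega^{-s+\varepsilon})$. Here the divergence must be extracted at large $t$ rather than near $0$: splitting the defining integrals at $k = t$ produces matching leading behaviors $G(t) \sim c_1\,t^{2s-2}\log(t)^{-2\beta}$ and $t^2\|\varphi_t\|^2 \sim c_2\,t^{2s-2}\log(t)^{-2\beta}$ with $c_1 > 2c_2$, together with $D(t) \sim 4|\lambda|c_3\,t^{2s-1}\log(t)^{-2\beta}$, so that $(G - 2t^2\|\varphi_t\|^2)/D(t) \sim c/t$ as $t\to\infty$, forcing the integral to diverge. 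The main obstacle is precisely this last step: because $G(t)$ and $2t^2\|\varphi_t\|^2$ share the same leading power in $t$ and the same logarithmic correction, one has to extract both leading coefficients carefully to verify that their difference is strictly positive, and then match it against the growth of $D(t)$ to land exactly at a non-integrable $1/t$ tail (the borderline case $s = 1$ calls for a slightly different computation in which the power behavior becomes a logarithm).
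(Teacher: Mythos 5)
Your overall route is the paper's: reduce via Proposition \ref{proposition_existence_transformation_bogolioubov} to violating Shale's condition, represent $\omega^{-1/2}\xi_{\lambda}\omega^{-1/2}-1$ and $\omega^{1/2}\xi_{\lambda}^{-1}\omega^{1/2}-1$ as integrals of rank-one operators built from the renormalized resolvent (\ref{equation_resolvante_renormalisee}) (the paper obtains exactly your two formulas directly from $\xi_{\lambda}=\frac{2}{\pi}\int_0^{\infty}(1-t^2R_{-t^2}(\xi_{\lambda}^2))\d t$ and $\xi_{\lambda}^{-1}=\frac{2}{\pi}\int_0^{\infty}R_{-t^2}(\xi_{\lambda}^2)\d t$, so your ``formal cancellation of $\tilde\lambda$'' can be made rigorous without it), and then reduce $\tr(V_{\lambda}V_{\lambda}^*)$ through (\ref{equation_VVetoile}) to the scalar integral of $\big(\|\omega\varphi_t\|^2-t^2\|\varphi_t\|^2\big)/D(t)$; your $G(t)-2t^2\|\varphi_t\|^2$ is literally the paper's $\tr A(t)$. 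For the first bullet your argument is the paper's in different clothes: when $\omega^{-1}f\notin\Hcal$ one has $G(t)=\langle f|(\omega^2+t^2)^{-1}f\rangle=+\infty$ for \emph{every} $t$, not only at $t=0$, which is precisely the paper's pointwise observation that $\omega R_{-t^2}(\omega^2)f\notin\Hcal$ while $R_{-t^2}(\omega^2)f\in\Hcal$; your $\int_0^{t_0}G(t)\d t$ computation is a correct repackaging of the same fact. One caveat applies equally to you and to the paper: identifying $\tr(V_{\lambda}V_{\lambda}^*)$ with the (only conditionally convergent) $t$-integral of a difference whose two halves need not be separately integrable deserves a word of justification (e.g.\ via spectral cutoffs $\mathds{1}_{\omega\leq m}$ and monotonicity of $\tr(\mathds{1}_{\omega\le m}V_{\lambda}V_{\lambda}^*\mathds{1}_{\omega\le m})$), so this is not a point where you fall below the paper's level of rigor.

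The substantive difference, and the one genuine gap, is your second bullet. The paper takes the pure power $f(k)=k^{\alpha}$ with $\alpha=s-\varepsilon-1/2\in(0,1/2)$: then the numerator scales exactly as $t^{2\alpha-1}I_0$ with $I_0=\int_1^{+\infty}(p^{2\alpha}-p^{-2\alpha})(p^2-1)(p^2+1)^{-2}\d p>0$ (positivity by the substitution $p\mapsto 1/p$), the denominator as a constant times $t^{2\alpha}$, and the non-integrable $1/t$ tail follows with no risk of cancellation; a crude bound disposes of $t\in[0,T]$. You instead pick the borderline $f(k)=k^{s-1/2}\log(k)^{-\beta}$, which forces you to extract matching leading terms with logarithmic corrections and to verify the strict inequality $c_1>2c_2$ -- exactly the step you flag as the main obstacle and leave undone, plus a separate $s=1$ computation. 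The step does not fail: for $s\in(1/2,1)$ the Beta-type integrals give $G(t)\sim\frac{\pi}{2\sin(\pi s)}t^{2s-2}\log(t)^{-2\beta}$ and $t^2\|\varphi_t\|^2\sim\frac{\pi(1-s)}{2\sin(\pi s)}t^{2s-2}\log(t)^{-2\beta}$, so $c_1-2c_2=\frac{\pi(2s-1)}{2\sin(\pi s)}>0$, while $D(t)\sim\frac{2\pi|\lambda|}{\sin(\pi(s-1/2))}t^{2s-1}\log(t)^{-2\beta}$, yielding the claimed $c/t$ tail (and for $s=1$ the numerator is $\sim\log(t)^{1-2\beta}$ against $D(t)\sim t\log(t)^{-2\beta}$, which diverges even faster). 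So your plan is workable but incomplete precisely at its crux, and the difficulty is self-inflicted: nothing in the statement requires a borderline $f$, since the proposition only asks for some $f\in D(\omega^{-s})$ with $\omega^{-s+\varepsilon}f\notin\Hcal$, and the paper's sub-critical power already satisfies this while turning the decisive positivity into a one-line symmetrization.
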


\begin{proof}
    Let $f\in D(\omega^{-3/2})$. We recall that for $z\in \C \setminus \R_+^*$,
    \begin{equation}
        R_z(\xi_{\lambda}^2) = R_z(\omega^2) - 4\lambda \frac{\omega^{1/2}R_z(\omega^2) |f\rangle\langle f|R_z(\omega^2) \omega^{1/2}}{1 + 4\lambda z \langle f | \omega^{-1} R_z(\omega^2)f\rangle},
    \end{equation}
    and hence
    \begin{equation}\label{equation_nouvelle_definition_xilambda}
        \xi_{\lambda} = \omega + \frac{8\lambda}{\pi}\int_0^{+\infty} t^2 \frac{\omega^{1/2}R_{-t^2}(\omega^2) |f\rangle\langle f|R_{-t^2}(\omega^2) \omega^{1/2}}{1 - 4\lambda t^2 \langle f | \omega^{-1} R_{-t^2}(\omega^2)f\rangle}\d t
    \end{equation}
    and 
    \begin{equation}\label{equation_nouvelle_definition_xilambda_inverse}
        \xi_{\lambda}^{-1} = \omega^{-1} - \frac{8\lambda}{\pi}\int_0^{+\infty} \frac{\omega^{1/2}R_{-t^2}(\omega^2) |f\rangle\langle f|R_{-t^2}(\omega^2) \omega^{1/2}}{1 - 4\lambda t^2 \langle f | \omega^{-1} R_{-t^2}(\omega^2)f\rangle}\d t.
    \end{equation}
Note that (\ref{equation_nouvelle_definition_xilambda}) and (\ref{equation_nouvelle_definition_xilambda_inverse}) are different from (\ref{equation_valeur_integrale_xilambda}) and (\ref{equation_valeur_integrale_xilambda_inverse}) because we use a different definition of $\xi_{\lambda}$. Recall that $\lambda \leq 0$; we have 
    \begin{equation}
        V_{\lambda} V_{\lambda}^* = \omega^{-1/2}\xi_{\lambda} \omega^{-1/2} - 1 + \omega^{1/2}\xi_{\lambda}^{-1}\omega^{1/2} - 1,
    \end{equation}
    with
    \begin{equation}\label{xi_décoré}
        \omega^{-1/2}\xi_{\lambda} \omega^{-1/2} - 1 = \frac{- 8 |\lambda|}{\pi}\int_0^{+\infty} t^2\frac{R_{-t^2}(\omega^2)|f\rangle \langle f| R_{-t^2}(\omega^2)}{1 + 4|\lambda|t^2\langle f| \omega^{-1}R_{-t^2}(\omega^2)f\rangle}\d t
    \end{equation}
    and 
    \begin{equation}\label{inverse_xi_décoré}
        \omega^{1/2} \xi_{\lambda}^{-1} \omega^{1/2} - 1 = \frac{8|\lambda|}{\pi} \int_0^{+\infty}\frac{\omega R_{-t^2}(\omega^2)|f\rangle \langle f | R_{-t^2}(\omega^2)\omega}{1 + 4|\lambda|t^2\langle f |\omega^{-1}R_{-t^2}(\omega^2) f\rangle}\d t.
    \end{equation}
    We set 
    \begin{equation}
        A(t) = \frac{\omega R_{-t^2}(\omega^2)|f\rangle \langle f | R_{-t^2}(\omega^2)\omega - t^2 R_{-t^2}(\omega^2)|f\rangle \langle f| R_{-t^2}(\omega^2)}{1 + 4|\lambda|t^2\langle f| \omega^{-1}R_{-t^2}(\omega^2)f\rangle},
    \end{equation}
    which gives 

    \begin{equation}
        V_{\lambda}V_{\lambda}^* = \frac{8|\lambda|}{\pi}\int_0^{+\infty}A(t)\d t.
    \end{equation}
    We will now treat the two cases separately.\\

        $\bullet$ Assume that $\omega^{-1}f\notin \Hcal$. For a fixed $t \in \R_+$, we have 
        \begin{equation}
            \tr A(t) = +\infty
        \end{equation}
        as $\omega R_{-t^2}(\omega^2)f \notin \Hcal$ and $R_{-t^2}(\omega^2)f\in \Hcal$. Therefore, we have
        \begin{equation}
            \tr(V_{\lambda}V_{\lambda}^*) = + \infty.
        \end{equation}
    
        $\bullet$ Here we assume that $X = [1, + \infty)$, $\d \mu(k) = \d k$ and $\omega(k) = k$. Let $s\in (1/2, 1]$, $\varepsilon\in (0, s-1/2)$, $\alpha = s - \varepsilon -1/2\in (0, 1/2)$ and $f(k) = k^{\alpha}$. Then, $f\in D(\omega^s)$ and $\omega^{-s + \varepsilon}f\notin \Hcal$. In this particular case, we have
    \begin{equation}\label{estimation_dénominateur}
        \langle f | \omega^{-1}R_{-t^2}(\omega^2) f\rangle  = \int_1^{+\infty} \frac{k^{2\alpha}}{k(k^2 + t^2)}\d k= t^{2\alpha - 2}\int_{1/t}^{+\infty} \frac{p^{2\alpha}}{p(1 + p^2)}\d p = C_0 t^{2\alpha - 2}\big( 1 + O(t^{-2\alpha})\big).
    \end{equation}
    We set 
    \begin{equation}
        u(t) = \tr(A(t)) = \frac{\big\langle f \big| R_{-t^2}(\omega^2) (\omega^2 - t^2)R_{-t^2}(\omega^2)f\big\rangle}{1 + 4|\lambda|t^2\langle f|\omega^{-1}R_{-t^2}(\omega^2)f\rangle},
    \end{equation}
    that is 
    \begin{equation}
        u(t) = \big(1 + |\lambda|t^2 \langle f|\omega^{-1}R_{-t^2}(\omega^2)f\rangle\big)^{-1}\int_1^{+\infty}\frac{k^{2\alpha}(k^2 - t^2)}{(k^2 +t^2)^2}\d k.
    \end{equation}
    Now we decompose $V_{\lambda}V_{\lambda}^*$ into two parts; we choose $T > 0$ such that 
        \begin{equation}
            \forall t\geq T, \int_{0}^{1/t} \frac{p^{2\alpha}}{p(1 + p^2)}\d p \leq Ct^{-2\alpha}
        \end{equation}
        and we write
        \begin{equation}
            I = \int_0^{+\infty} \frac{p^{2\alpha}}{p(1+p^2)}\d p.
        \end{equation}
        For $t \geq T$, we have
        \begin{equation}
            |\langle f | \omega^{-1}R_{-t^2}(\omega^2) f\rangle - It^{2\alpha - 2}| \leq Ct^{-2}.
        \end{equation}
        Let $\tau \geq T$, we have
        \begin{equation}
            \left|\int_T^{\tau} u(t) \d t - \frac{1}{4|\lambda| I}\int_T^{\tau}\int_1^{+\infty} \frac{k^{2\alpha} t^{-2\alpha}(k^2 - t^2)}{(k^2 + t^2)^2}\d k \d t \right| \leq \int_T^{\tau}\int_1^{+\infty} \frac{k^{2\alpha} t^{-4\alpha}(k^2 + t^2)}{(k^2 + t^2)^2}\d k \d t.
        \end{equation}
        It is quite clear that the right-hand side is bounded as a function of $\tau$, therefore the integral $\int_T^{+\infty}u(t)\d t$ converges if and only if the integral $\int_T^{+\infty}\int_1^{+\infty} \frac{k^{2\alpha} t^{-2\alpha}(k^2 - t^2)}{(k^2 + t^2)^2}\d k \d t $ converges. We have
        \begin{equation}
            \mathcal{I}(\tau) := \int_T^{\tau}\int_1^{+\infty} \frac{k^{2\alpha} t^{-2\alpha}(k^2 - t^2)}{(k^2 + t^2)^2}\d k \d t = \int_{T}^{\tau} t^{-1}\int_{1/t}^{+\infty} \frac{p^{2\alpha}(p^2 - 1)}{(p^2 + 1)^2}\d p\d t.
        \end{equation}
        When $p \to + \infty$, 
        \begin{equation} 
        	\frac{p^{2\alpha}(p^2 - 1)}{(p^2 + 1)^2}\sim p^{2 - 2\alpha}
       	\end{equation}
   		and when $p \to 0^+$, 
   		\begin{equation} 
   			\frac{p^{2\alpha}(p^2 - 1)}{(p^2 + 1)^2} \sim - p^{2\alpha}. 
   		\end{equation}
   		As $\alpha < 1/2$, $\frac{p^{2\alpha}(p^2 - 1)}{(p^2 + 1)^2}$ is integrable. Let us show that its integral is positive. First, with $q = p^{-1}$,
        \begin{equation}
            \int_0^1 \frac{p^{2\alpha} (p^2 - 1)}{(p^2 + 1)^2}\d p = \int_1^{+\infty} \frac{q^{-2\alpha} (1 - q^2)}{(q^2 + 1)^2}\d q.
        \end{equation}
        This implies
        \begin{equation}
            I_0 := \int_0^{+\infty} \frac{p^{2\alpha} (p^2 - 1)}{(p^2 + 1)^2}\d p = \int_1^{+\infty} \frac{(p^{2\alpha}-p^{-2\alpha}) (p^2 - 1)}{(p^2 + 1)^2}\d p > 0.
        \end{equation}
        Therefore,
        \begin{equation}
            \mathcal{I}(\tau) \sim I_0\int_T^{\tau}\frac{\d t}{t} \sim I_0 \ln(\tau) \to + \infty, 
        \end{equation}
        which implies
        \begin{equation}
            \int_T^{+\infty}u(t)\d t = + \infty.
        \end{equation}       
        In order to prove that $\tr(V_{\lambda}V_{\lambda}^*)=+\infty$, it only remains to show that 
        \begin{equation}
            \left|\int_0^Tu(t)\d t \right|<+\infty.
        \end{equation}
         Indeed, we have
        \begin{align}
            \left|\int_0^T u(t) \d t\right| &\leq \int_0^T\int_1^{+\infty}\frac{k^{2\alpha}(k^2 + t^2)}{(k^2 + t^2)^2}\d k \d t \notag\\
            &\leq \int_0^T\int_1^{+\infty}k^{2\alpha - 2}\d k \d t \leq T\int_1^{+\infty}k^{2\alpha-2}\d k < +\infty.
        \end{align}
\end{proof}

\subsection{Convergence of the regularized Hamiltonian}

Here, we show that the diagonalized regularized Hamiltonian converges. Note that as Bogoliubov transformations don't exist in the limit, we will only prove convergence of the diagonalized regularized Hamiltonian. \\

\noindent We recall that $\lambda_n^{-1} = \lambda^{-1} - 4\langle f_n|\omega^{-1}f_n\rangle$. It is clear that $\lambda_n \to 0^-$ when $n\to + \infty$.

\begin{proposition}[Convergence of the regularized Hamiltonian] \label{proposition_convergence_diagonalisé2}
    Assume $\omega^{-1/2}f\notin \Hcal$. We have the following convergence 
    \begin{equation}
        \dg(\xi_{\lambda_n, n}) \to \dg(\xi_{\lambda})
    \end{equation}
    in the norm resolvent sense.
\end{proposition}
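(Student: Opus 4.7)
The plan follows the structure of the regular-case proof (Proposition \ref{proposition_convergence_diagonalisé}): establish norm resolvent convergence $\xi_{\lambda_n,n}\to\xi_\lambda$ together with a uniform positive lower bound $\xi_{\lambda_n,n}\ge c>0$, and then invoke Lemma \ref{lemme_convergence_norme_seconde_quantification}. The key new ingredient is the \emph{renormalization identity}: by the choice $\lambda_n^{-1}=\lambda^{-1}-4\langle f_n|\omega^{-1}f_n\rangle$ and the computation at the beginning of section \ref{ssection_existence_irregulier},
\[
R_z(\xi_{\lambda_n,n}^2)\;=\;\tilde R_z(\lambda, f_n),
\]
so the resolvent of the regular operator $\xi_{\lambda_n,n}^2$ at parameters $(\lambda_n,f_n)$ coincides with the renormalized resolvent formula (\ref{equation_resolvante_renormalisee}) at parameters $(\lambda, f_n)$. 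This is exactly what makes $\xi_{\lambda_n,n}^2$ directly comparable to $\xi_\lambda^2$.

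I would then adapt the rank-one argument of Lemma \ref{lemme_convergence_xil} at $z=-1$. Writing
\[
\tilde R_{-1}(\lambda, f_n) \;=\; (\omega^2+1)^{-1} \;-\; B_n\,|\psi_n\rangle\langle\psi_n|, \qquad \psi_n=\omega^{1/2}(\omega^2+1)^{-1}f_n,
\]
with $B_n=4\lambda\bigl(1+4|\lambda|\langle f_n|\omega^{-1}(\omega^2+1)^{-1}f_n\rangle\bigr)^{-1}$, the task reduces to showing $\psi_n\to\psi$ in $\Hcal$ and $B_n\to B$. Both follow from dominated convergence using $\omega^{-3/2}f_n\to\omega^{-3/2}f$ in $\Hcal$: in each integrand one factors out $|\omega^{-3/2}f_n|^2\le|\omega^{-3/2}f|^2$ multiplied by a function of $\omega$ bounded uniformly in $k$ (namely $\omega^2/(\omega^2+1)$ or $\omega^4/(\omega^2+1)^2$). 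Since $\lambda<0$ and $z=-1<0$, the denominator is bounded below by $1$, so $|B_n|\le 4|\lambda|$ and $B_n\to B$. This yields $\|\tilde R_{-1}(\lambda,f_n)-\tilde R_{-1}(\lambda,f)\|\to 0$, i.e.\ norm resolvent convergence $\xi_{\lambda_n,n}^2\to\xi_\lambda^2$. Applying the continuous functional calculus to $h(y)=(\sqrt y+1)^{-1}\in C_0([0,+\infty))$ upgrades this to the norm resolvent convergence $\xi_{\lambda_n,n}\to\xi_\lambda$.

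The main obstacle is the uniform lower bound $\xi_{\lambda_n,n}\ge c>0$ required by Lemma \ref{lemme_convergence_norme_seconde_quantification}, since the rank-one perturbation in $\xi_{\lambda_n,n}^2=\omega^2+4\lambda_n|\omega^{1/2}f_n\rangle\langle\omega^{1/2}f_n|$ is \emph{negative}. Because it has rank one and $\omega^2\ge 1$, the spectrum of $\xi_{\lambda_n,n}^2$ below $1$ consists of at most one simple eigenvalue $e_n$, characterized by $1=-4\lambda_n\langle\omega^{1/2}f_n|(\omega^2-e_n)^{-1}\omega^{1/2}f_n\rangle$; substituting $\lambda_n^{-1}=\lambda^{-1}-4\|\omega^{-1/2}f_n\|^2$ reduces this to
\[
\frac{1}{4|\lambda|} \;=\; e_n \int\frac{|f_n|^2}{\omega(\omega^2-e_n)}\d\mu.
\]
Under $\omega^{-3/2}f\in\Hcal$, the map $e\mapsto e\int|f|^2/(\omega(\omega^2-e))\d\mu$ is continuous, strictly increasing on $(0,1)$, and vanishes at $0^+$, so either the limiting equation admits a unique solution $e_\infty\in(0,1)$ with $\inf\sigma(\xi_\lambda^2)=e_\infty$, or it has no solution and $\inf\sigma(\xi_\lambda^2)\ge 1$. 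In either case $\inf\sigma(\xi_\lambda^2)>0$, and the monotone convergence $|f_n|^2\nearrow|f|^2$ combined with monotonicity of the integral in $|f_n|^2$ forces $\xi_{\lambda_n,n}^2\ge\min(e_\infty,1)>0$ uniformly in $n$. Lemma \ref{lemme_convergence_norme_seconde_quantification} then concludes.
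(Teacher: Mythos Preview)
Your argument is correct and follows the paper's route exactly: the renormalization identity $R_z(\xi_{\lambda_n,n}^2)=\tilde R_z(\lambda,f_n)$, then rank-one resolvent estimates giving norm resolvent convergence $\xi_{\lambda_n,n}\to\xi_\lambda$ (this is the paper's Lemma~\ref{lemme_convergence_xil2}), and finally Lemma~\ref{lemme_convergence_norme_seconde_quantification}. You go further than the paper in one respect: you explicitly verify the uniform lower bound $\xi_{\lambda_n,n}\ge c>0$ required by Lemma~\ref{lemme_convergence_norme_seconde_quantification}, which the paper leaves unaddressed. Your Birman--Schwinger eigenvalue analysis is correct; a shorter alternative is to read off from~(\ref{equation_resolvante_renormalisee}) at $z=0$ that $\tilde R_0(\lambda,f_n)=\omega^{-2}+4|\lambda|\,|\omega^{-3/2}f_n\rangle\langle\omega^{-3/2}f_n|$, whence $\|\xi_{\lambda_n,n}^{-2}\|\le 1+4|\lambda|\,\|\omega^{-3/2}f\|^2$ uniformly in $n$.
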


\noindent To prove Proposition \ref{proposition_convergence_diagonalisé2}, we use the following lemma.

\begin{lemma}[Convergence of $\xi_{\lambda_n, n}$]\label{lemme_convergence_xil2}
    The operators $\xi_{\lambda_n, n}$ converge to $\xi_{\lambda}$ in the norm resolvent sense, i.e.
    \begin{equation}
        \forall z \in \C \setminus R,~ \|R_z(\xi_{\lambda_n, n}) - R_z(\xi_{\lambda})\| \to 0.
    \end{equation}
\end{lemma}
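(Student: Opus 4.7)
The plan is to reduce the lemma to a norm convergence statement at the level of the squared operators $\xi_{\lambda_n,n}^{2}$, exploiting the renormalization identity from Section~\ref{ssection_existence_irregulier}, and then to pass to the operators themselves via the continuous functional calculus.

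First, I would identify $R_z(\xi_{\lambda_n,n}^{2})$ with the renormalized resolvent. Since $\omega^{1/2}f_n\in\Hcal$, formula (\ref{equation_resolvante}) gives $R_z(\xi_{\lambda_n,n}^{2})=R_z(\lambda_n,f_n)$. Moreover, the choice $\lambda_n^{-1}=\lambda^{-1}-4\|\omega^{-1/2}f_n\|^{2}$ is precisely the renormalization relation of Section~\ref{ssection_existence_irregulier} applied to $f_n$, so $R_z(\lambda_n,f_n)=\tilde R_z(\lambda,f_n)$ with the right-hand side given by (\ref{equation_resolvante_renormalisee}) with $f$ replaced by $f_n$. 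It therefore suffices, by Theorem~VIII.19 of \cite{reed_i_1980}, to prove norm convergence $\tilde R_z(\lambda,f_n)\to\tilde R_z(\lambda,f)$ for a single $z\in\C\setminus\R$. I would pick $z=-1$, where the denominator $1-4\lambda\langle f|\omega^{-1}(\omega^{2}+1)^{-1}f\rangle$ is $\geq 1$ because $\lambda<0$ and the inner product is nonnegative, which trivializes the uniform control of its inverse.

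For this convergence, set $\psi_n=\omega^{1/2}(\omega^{2}+1)^{-1}f_n$ and $B_n=4\lambda/\bigl(1-4\lambda\langle f_n|\omega^{-1}(\omega^{2}+1)^{-1}f_n\rangle\bigr)$, and similarly without subscripts for $f$. The difference $\tilde R_{-1}(\lambda,f_n)-\tilde R_{-1}(\lambda,f)=-(B_n|\psi_n\rangle\langle\psi_n|-B|\psi\rangle\langle\psi|)$ is a difference of rank-one operators, and a standard telescoping argument reduces its operator-norm convergence to $\psi_n\to\psi$ in $\Hcal$ and $B_n\to B$ in $\C$. Both follow from dominated convergence: $\|\psi_n-\psi\|^{2}$ equals the integral of $\omega(k)(\omega(k)^{2}+1)^{-2}|f(k)|^{2}\mathds{1}_{\omega(k)>n}$, and both kernels $\omega/(\omega^{2}+1)^{2}$ and $1/[\omega(\omega^{2}+1)]$ (the latter arising in the scalar convergence) are bounded by $C\omega^{-3}$ on $\{\omega\geq 1\}$. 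The hypothesis $\omega^{-3/2}f\in\Hcal$ then supplies the integrable dominant $C|f(k)|^{2}\omega(k)^{-3}$, and the Lebesgue theorem concludes. This yields norm resolvent convergence $\xi_{\lambda_n,n}^{2}\to\xi_{\lambda}^{2}$.

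It remains to pass from the squared operators to the operators themselves. Both $\xi_{\lambda_n,n}^{2}$ and $\xi_{\lambda}^{2}$ are nonnegative (their resolvent sets contain $\R_-$), so $(\xi_{\lambda_n,n}^{2}+1)^{-1}$ and $(\xi_{\lambda}^{2}+1)^{-1}$ have spectrum in $[0,1]$. A direct computation gives $(\sqrt{A}+i)^{-1}=h\bigl((A+1)^{-1}\bigr)$ with $h(y)=\sqrt{y(1-y)}-iy$, a function continuous on $[0,1]$. Combining norm convergence of $(\xi_{\lambda_n,n}^{2}+1)^{-1}$ with uniform polynomial approximation of $h$ and continuity of the functional calculus in operator norm for operators with spectrum in a common compact set gives $(\xi_{\lambda_n,n}+i)^{-1}\to(\xi_{\lambda}+i)^{-1}$ in norm, and Theorem~VIII.19 of \cite{reed_i_1980} concludes. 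The main delicate point lies precisely in this last step: in the irregular regime $\xi_{\lambda}^{2}$ may have $0$ in its spectrum, so the integral formula $\xi_{\lambda}^{-1}=\frac{2}{\pi}\int_{0}^{+\infty}R_{-t^{2}}(\xi_{\lambda}^{2})\,dt$ used in Lemma~\ref{lemme_convergence_xil} is no longer available, and one must route through $(\xi+i)^{-1}$ via the continuous functional calculus of the compactly-spectrumed bounded operators $(\xi^{2}+1)^{-1}$.
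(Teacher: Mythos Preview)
Your argument is correct and closely parallels the paper's proof. The identification $R_z(\xi_{\lambda_n,n}^{2})=\tilde R_z(\lambda,f_n)$ via the renormalization relation, and the telescoping estimate for the rank-one difference $B_n|\psi_n\rangle\langle\psi_n|-B|\psi\rangle\langle\psi|$ (with the $\omega^{-3/2}f\in\Hcal$ hypothesis supplying the dominated-convergence input), are exactly what the paper does.

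The only genuine difference is in the passage from $\xi_{\lambda_n,n}^{2}\to\xi_{\lambda}^{2}$ to $\xi_{\lambda_n,n}\to\xi_{\lambda}$. The paper simply invokes the integral-formula argument of Lemma~\ref{lemme_convergence_xil}, i.e.\ $\xi^{-1}=\tfrac{2}{\pi}\int_0^{\infty}R_{-t^{2}}(\xi^{2})\,dt$, whereas you route through the continuous functional calculus via $(\sqrt{A}+i)^{-1}=h\bigl((A+1)^{-1}\bigr)$ with $h(y)=\sqrt{y(1-y)}-iy$ continuous on $[0,1]$. Your approach is slightly more robust since it avoids any question about invertibility of $\xi_{\lambda}$. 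That said, your stated worry that $0$ may lie in $\sigma(\xi_{\lambda}^{2})$ is unfounded here: plugging $z=0$ into (\ref{equation_resolvante_renormalisee}) gives $\tilde R_0(\lambda,f)=\omega^{-2}-4\lambda\,\omega^{-3/2}|f\rangle\langle f|\omega^{-3/2}$, which is bounded for $\lambda\leq 0$ and $\omega^{-3/2}f\in\Hcal$; the same holds for each $f_n$. So $0\in\rho(\xi_{\lambda}^{2})$ and $0\in\rho(\xi_{\lambda_n,n}^{2})$, and the paper's integral-formula argument applies without modification. Both routes are valid; yours is a clean, self-contained alternative.
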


\begin{proof}
    First, we prove that $\xi_{\lambda_n, n}^2$ converges to $\xi_{\lambda}^2$ in the norm resolvent sense. Let us recall that $R_z(\xi_{\lambda}^2) = \tilde{R}_z(\lambda, f)$ and $R_z(\xi_{\lambda_n, n}^2) = \tilde{R}_z(\lambda, f_n)$ with
        \begin{equation}
            \tilde{R}_z(\lambda, f) = \big(\omega^2 - z\big)^{-1} -\frac{4\lambda}{1 + 4\lambda z \langle f |\omega^{-1}(\omega^2 - z)^{-1} f\rangle}\omega^{1/2}(\omega^2 - z)^{-1}|f\rangle\langle f| \omega^{1/2}(\omega^2 - z)^{-1}.
        \end{equation}
        For a fixed $z\in \C \setminus \R$, we set 
        \begin{equation}
            B_{n} = \frac{4\lambda}{1 + 4 \lambda z \langle f_n |\omega^{-1} (\omega^2 - z)^{-1} f_n\rangle}~~~~\text{and}~~~~ \psi_n = \omega^{1/2}(\omega^2 - z)^{-1}f_n \in \Hcal.
        \end{equation}
        Then, for $\varphi \in \Hcal$,
        \begin{equation}
            \big\|\big(\tilde{R}_z(\lambda, f_n) - \tilde{R}_z(\lambda, f)\big)\varphi\big\| \leq \Big(\|\psi\|^2|B_n - B| + |B_n|\|\psi\|\|\psi_n - \psi\| + |B_n|\|\psi_n\|\|\psi_n - \psi\|\Big)\|\varphi\|.
        \end{equation}
        As $\omega^{-3/2}f_n \to \omega^{-3/2}f$, we have $B_n \to B$ and $\psi_n \to \psi$. This implies that
        \begin{equation}
            \|\psi\|^2|B_n - B| + |B_n|\|\psi\|\|\psi_n - \psi\| + |B_n|\|\psi_n\|\|\psi_n - \psi\| \to 0,
        \end{equation}
        and therefore $R_{z}(\xi_{\lambda_n, n}^2) \to R_z(\xi_{\lambda}^2)$ in the sense of the norm.
    We have proved in Lemma \ref{lemme_convergence_xil} that the norm resolvent convergence of $\xi_{\lambda_n, n}^2$ to $\xi_{\lambda}^2$ implies the norm resolvent convergence of $\xi_{\lambda_n, n}$ to $\xi_{\lambda}$, which concludes the proof.
\end{proof}

\noindent Proposition \ref{proposition_convergence_diagonalisé2} is now a direct consequence of Lemma \ref{lemme_convergence_xil2} and Lemma \ref{lemme_convergence_norme_seconde_quantification}.

\appendix
\section{Appendix}

In this appendix, we prove Lemma \ref{lemme_formules_integrales_racine_perturbation} that we recall here.

\begin{lemma*}
    Let $A$ be a self-adjoint positive operator, $\psi \in \Hcal$ and $\alpha \in \R$ such that $A+\alpha|\psi\rangle\langle \psi|$ is positive (this is satisfied if $\alpha \in \R_+$). We denote $T_{\alpha} = A + \alpha |\psi\rangle\langle \psi|$, then
    \begin{equation}\tag{\ref{equation_racine_carree_perturbation}}
        T_{\alpha}^{1/2} = A^{1/2} + \frac{2\alpha}{\pi}\int_0^{+\infty}\frac{t^2 |R_{-t^2}(A)\psi\rangle\langle R_{-t^2}(A) \psi|}{1 + \alpha \langle \psi | R_{-t^2}(A) \psi\rangle}\d t,
    \end{equation}

    \begin{equation}\tag{\ref{equation_inverse_racine_carree_perturbation}}
        T_{\alpha}^{-1/2} = A^{-1/2} - \frac{2\alpha}{\pi}\int_0^{+\infty} \frac{|R_{-t^2}(A)\psi\rangle\langle R_{-t^2}(A) \psi|}{1 + \alpha \langle \psi | R_{-t^2}(A) \psi\rangle}\d t,
    \end{equation}

    \begin{equation}\tag{\ref{equation_racine_quatrieme_perturbation}}
        T_{\alpha}^{1/4} = A^{1/4} + \frac{2\sqrt{2}\alpha}{\pi}\int_0^{+\infty}\frac{t^4 |R_{-t^4}(A)\psi\rangle \langle R_{-t^4}(A)\psi|}{1 + \alpha \langle \psi | R_{-t^4}(A)\psi\rangle}\d t,
    \end{equation}
    and
    \begin{equation}\tag{\ref{equation_inverse_racine_quatrieme_perturbation}}
        T_{\alpha}^{-1/4} = A^{-1/4} - \frac{2\sqrt{2}\alpha}{3\pi}\int_0^{+\infty}\frac{ |R_{-t^{4/3}}(A)\psi\rangle \langle R_{-t^{4/3}}(A)\psi|}{1 + \alpha \langle \psi | R_{-t^{4/3}}(A)\psi\rangle}\d t.
    \end{equation}
    where $R_{z}(A) = (A - z)^{-1}$ is the resolvent of $A$. In particular, if $A \geq 1$, this implies that $T_{\alpha}^{1/2} - A^{1/2}$ is trace-class and that 
    \begin{equation}\tag{\ref{equation_trace_racine_carree_perturbation}}
        \tr\big(T_{\alpha}^{1/2} - A^{1/2}\big) = \frac{2\alpha}{\pi} \int_0^{+\infty} \frac{t^2\big\| R_{-t^2}(A)\psi\big\|^2}{1 + \alpha\big\langle \psi | R_{-t^2}(A)\psi\big\rangle} \d t.
    \end{equation}
\end{lemma*}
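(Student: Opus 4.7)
The plan is to derive all four identities from the Balakrishnan-type integral representations of fractional powers of positive self-adjoint operators, namely
\begin{equation*}
A^{s} = \frac{\sin(\pi s)}{\pi}\int_0^{+\infty}\tau^{s-1}\frac{A}{A+\tau}\d\tau,\qquad A^{-s} = \frac{\sin(\pi s)}{\pi}\int_0^{+\infty}\tau^{-s}(A+\tau)^{-1}\d\tau,\quad s\in(0,1).
\end{equation*}
These hold for every positive self-adjoint $A$ by the spectral theorem, the scalar identities reducing to the classical beta-function evaluation $\int_0^{+\infty}u^{s-1}(1+u)^{-1}\d u = \pi/\sin(\pi s)$. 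Specializing to $s=1/2$ and $s=1/4$ and performing the changes of variable $\tau = t^2$, $\tau = t^4$, and $\tau = t^{4/3}$ respectively, so that $(A+\tau)^{-1}$ becomes exactly the $R_{-t^a}(A)$ appearing on the right-hand side of the target formulas, I obtain
\begin{equation*}
A^{1/2} = \frac{2}{\pi}\int_0^{+\infty}AR_{-t^2}(A)\d t, \qquad A^{1/4} = \frac{2\sqrt{2}}{\pi}\int_0^{+\infty}AR_{-t^4}(A)\d t,
\end{equation*}
and analogous formulas for $A^{-1/2}$, $A^{-1/4}$, and likewise for $T_\alpha$.

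The main step is then to take the difference with the corresponding representation for $T_\alpha$. For the negative powers the cancellation is immediate; for the positive powers the two integrals diverge individually, so I use the identity $AR_{-t^a}(A) = 1 - t^a R_{-t^a}(A)$ to cancel the constant part. The difference becomes
\begin{equation*}
T_\alpha^{s} - A^{s} = -\frac{c_s}{\pi}\int_0^{+\infty}t^a\bigl[R_{-t^a}(T_\alpha)-R_{-t^a}(A)\bigr]\d t
\end{equation*}
with the appropriate constant $c_s$ and exponent $a\in\{2,4\}$. Applying Lemma \ref{lemme_resolvante_rang_un} at $z=-t^a$ replaces the resolvent difference by $-\alpha\bigl(1+\alpha\langle\psi|R_{-t^a}(A)\psi\rangle\bigr)^{-1}|R_{-t^a}(A)\psi\rangle\langle R_{-t^a}(A)\psi|$; substituting and tracking signs yields the four advertised formulas.

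The main technical obstacle is the justification of the various interchanges of integration with functional calculus. I plan to handle this at the spectral level: by the spectral theorem I may assume $\Hcal = L^2(X,\mu)$ and $A$ a multiplication operator, so that each identity reduces to a pointwise scalar identity (the Balakrishnan formula followed by the beta-function computation), and a Fubini/Tonelli argument then lifts it back to operator form. Positivity of the denominator $1 + \alpha\langle\psi|R_{-t^a}(A)\psi\rangle$, which is needed in order to divide by it, follows from the hypothesis $T_\alpha\geq 0$ via Lemma \ref{lemme_resolvante_rang_un}, since otherwise $R_{-t^a}(T_\alpha)$ would fail to exist. Handling the $s=1/4$ formula with the unusual substitution $\tau=t^{4/3}$ requires the most careful bookkeeping.

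Finally, for the trace identity \eqref{equation_trace_racine_carree_perturbation}, assume $A\geq 1$ and $\alpha\geq 0$. Then each integrand in \eqref{equation_racine_carree_perturbation} is a positive trace-class rank-one operator whose trace is bounded by $t^2\|R_{-t^2}(A)\psi\|^2 \leq \|\psi\|^2 t^2/(1+t^2)^2$, which is integrable on $(0,+\infty)$. By Tonelli's theorem, trace commutes with the $t$-integral, which simultaneously shows that $T_\alpha^{1/2}-A^{1/2}$ is trace class and produces the stated formula.
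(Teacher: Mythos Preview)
Your proposal is correct and follows essentially the same approach as the paper: represent the fractional powers of both $A$ and $T_\alpha$ by the integral formulas $A^{1/2}=\frac{2}{\pi}\int_0^\infty(1-t^2R_{-t^2}(A))\,\d t$, $A^{-1/2}=\frac{2}{\pi}\int_0^\infty R_{-t^2}(A)\,\d t$, etc., take differences, and apply Lemma~\ref{lemme_resolvante_rang_un} to the resolvent difference. The only cosmetic distinction is that you package the four scalar integral identities into the single Balakrishnan formula and then specialize via the substitutions $\tau=t^2,t^4,t^{4/3}$, whereas the paper evaluates each of the four integrals $\int_0^\infty x(x+t^a)^{-1}\d t$ and $\int_0^\infty(x+t^a)^{-1}\d t$ directly; after the change of variable the two arguments are line-for-line identical, including the use of $AR_{-t^a}(A)=1-t^aR_{-t^a}(A)$ to regularize the positive-power case and the spectral-theorem reduction to scalar identities. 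For the trace statement you add the assumption $\alpha\geq 0$ to bound the denominator below by $1$; the paper's proof simply invokes $R_{-t^2}(A)\leq(1+t^2)^{-1}$ without commenting on the denominator, so your version is if anything slightly more careful on this point.
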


\begin{proof}~\\
    1. For a positive real number $x$, we have 
    \begin{equation}
        \int_0^{+\infty} \frac{x}{x + t^2}\d t = \sqrt{x}\int_0^{+\infty} \frac{\d s}{1 + s^2} = \frac{\pi}{2}\sqrt{x}.
    \end{equation}
    As $A$ is a self-adjoint positive operator, we can assume that it is a multiplication operator : $(A \psi)(x) = a(x) \psi(x)$ a.e for $f\in \Hcal$. Then, for $\psi, \varphi \in \Hcal$,
    \begin{multline}
        \langle \psi | A^{1/2} \varphi\rangle = \int_X \overline{\psi(x)}a(x)^{1/2}\varphi(x)\d \mu(x) = \int_X \int_0^{+\infty} \overline{\psi(x)} \frac{2}{\pi}\frac{a(x)}{a(x) + t^2} \varphi(x) \d t \d \mu(x) \\ = \frac{2}{\pi} \int_0^{+\infty} \langle \psi | A(A + t^2)^{-1} \varphi\rangle \d t,
    \end{multline}
    and thus
    \begin{equation}
        A^{1/2} = \frac{2}{\pi} \int_0^{+\infty} A R_{-t^2}(A) \d t = \frac{2}{\pi}\int_0^{+\infty}\big(1 - t^2 R_{-t^2}(A)\big)\d t.
    \end{equation}
    As $T_{\alpha}$ is also a self-adjoint positive operator, we have
    \begin{equation}
        T_{\alpha}^{1/2} = \frac{2}{\pi} \int_0^{+\infty} \big(1 - t^2 R_{-t^2}(T_{\alpha})\big)\d t.
    \end{equation}
    Using Lemma \ref{lemme_resolvante_rang_un} we find
    \begin{equation}
        R_{-t^2}(T_{\alpha}) = R_{-t^2}(A) - \frac{\alpha}{1 + \alpha \langle \psi|R_{-t^2}(A)\psi\rangle}R_{-t^2}(A)|\psi\rangle\langle \psi| R_{-t^2}(A)
    \end{equation}
    and hence
    \begin{equation}
        T_{\alpha}^{1/2} = A^{1/2} + \frac{2}{\pi}\int_0^{+\infty} \alpha\frac{R_{-t^2}(A)|\psi\rangle \langle \psi| R_{-t^2}(A)}{1 + \alpha\langle \psi | R_{-t^2}(A) \psi\rangle}\d t.
    \end{equation}
    2. For any positive real number $x$, we have 
    \begin{equation}
        \int_0^{+\infty} \frac{\d t}{x + t^2} = x^{-1/2}\int_0^{+\infty} \frac{\d s}{1 + s^2} = \frac{\pi}{2}x^{-1/2},
    \end{equation}
    therefore
    \begin{equation}
        A^{-1/2} = \frac{2}{\pi}\int_0^{+\infty} R_{-t^2}(A) \d t,
    \end{equation}
    which together with Lemma \ref{lemme_resolvante_rang_un} gives (\ref{equation_inverse_racine_carree_perturbation}).\\ 
    
   \noindent 3. For any positive real number $x$, we have 
    \begin{equation}
        \int_0^{+\infty} \frac{x}{x + t^4}\d t = x^{1/4} \int_0^{+\infty} \frac{\d s}{1 + s^4} = \frac{\pi}{2\sqrt{2}}x^{1/4},
    \end{equation}
    therefore
    \begin{equation}
        A^{1/4} = \frac{2\sqrt{2}}{\pi}\int_0^{+\infty} A R_{-t^4}(A) \d t = \frac{2 \sqrt{2}}{\pi}\int_0^{+\infty} \big(1 - t^4 R_{-t^4}(A)\big)\d t,
    \end{equation}
    which together with Lemma \ref{lemme_resolvante_rang_un} gives (\ref{equation_racine_quatrieme_perturbation}).\\
    
 \noindent   4. For any positive real number $x$, we have
    \begin{equation}
        \int_0^{+\infty}\frac{\d t}{x + t^{4/3}} = x^{-1/4} \int_0^{+\infty} \frac{\d s}{1 + s^{4/3}} = \frac{3\pi}{2\sqrt{2}} x^{-1/4},
    \end{equation}
    therefore 
    \begin{equation}
        A^{-1/4} = \frac{2\sqrt{2}}{3\pi}\int_0^{+\infty} R_{-t^{4/3}}(A) \d t,
    \end{equation}
    which together with Lemma \ref{lemme_resolvante_rang_un} gives (\ref{equation_inverse_racine_quatrieme_perturbation}).\\

   \noindent 5. From (\ref{equation_racine_carree_perturbation}), $T_{\alpha}^{1/2} - A^{1/2}$ is trace class if and only if 
    \begin{equation}
        \int_0^{+\infty} \frac{t^2\big\| R_{-t^2}(A)\psi\big\|^2}{1 + \alpha\big\langle \psi | R_{-t^2}(A)\psi\big\rangle} \d t < +\infty,
    \end{equation}
    and this inequality is satisfied as $R_{-t^2}(A) \leq \frac{1}{1 + t^2}$.

\end{proof}

\section*{Acknowledgements}
I would like to thank Jonas Lampart for the many discussions we had, for his explanations, for his advise concerning this manuscript and for his proofreading. I also thank Nicolas Rougerie for his proofreading.

\bibliographystyle{abbrv}

\end{document}